\documentclass[11pt,a4paper]{article}

\usepackage[T1]{fontenc}
\usepackage[utf8]{inputenc}
\usepackage[
  top=2.5cm,
  bottom=2.5cm,
  left=2.5cm,
  right=2.5cm
]{geometry}

\usepackage{amsmath,amssymb,amsfonts,mathrsfs,amsthm,mathtools}
\usepackage{graphicx,xcolor,booktabs,array,longtable,float,ragged2e}
\usepackage[colorlinks=true]{hyperref}

\definecolor{ABJMlink}{HTML}{176C68}
\definecolor{ABJMcite}{HTML}{8A4F6B}

\hypersetup{
  colorlinks=true,
  linkcolor=ABJMlink,
  urlcolor=ABJMlink,
  citecolor=ABJMcite
}
\usepackage{tikz}
\usepackage{adjustbox}
\usepackage{caption}
\usetikzlibrary{arrows.meta,positioning,calc,fit,backgrounds,shapes.geometric}
\usepackage{orcidlink}
\usepackage{cite}
\usepackage{enumitem}
\usepackage{dsfont}
\usepackage{upgreek}

\theoremstyle{plain}
\newtheorem{theorem}{Theorem}[section]
\newtheorem{proposition}[theorem]{Proposition}
\newtheorem{lemma}[theorem]{Lemma}

\theoremstyle{definition}

\numberwithin{equation}{section}

\DeclareMathOperator{\Tr}{Tr}
\DeclareMathOperator{\re}{Re}
\DeclareMathOperator{\im}{Im}
\DeclareMathOperator{\Li}{Li}

\newcommand{\ii}{\mathrm{i}}
\newcommand{\e}{\mathrm{e}}
\newcommand{\rd}{\mathrm{d}}
\newcommand{\cW}{\mathcal{W}}
\newcommand{\cE}{\mathcal{E}}
\newcommand{\cL}{\mathcal{L}}
\newcommand{\cN}{\mathcal{N}}

\newcommand{\cR}{\mathcal{R}}
\newcommand{\cU}{\mathcal{U}}
\newcommand{\Wt}{\widetilde{\mathcal{W}}}
\newcommand{\Nh}{\widehat{N}}
\newcommand{\lamh}{\widehat{\lambda}}
\newcommand{\CW}{C}
\newcommand{\Cint}{C_{\rm int}}
\newcommand{\Cltwo}{\mathrm{Cl}_2}
\newcommand{\Hgen}{\mathcal{H}}
\newcommand{\sthree}{S^3}
\newcommand{\fzero}{\widehat{f}_0}
\newcommand{\U}{\mathrm{U}}
\newcommand{\bea}{\begin{equation}\begin{aligned}}
\newcommand{\eea}{\end{aligned}\end{equation}}
\newcommand{\be}{\begin{equation}}
\newcommand{\ee}{\end{equation}}
\newcommand{\wt}{\widetilde}

\newcommand{\cC}{\mathcal{C}}

\newcommand{\cI}{\mathcal{I}}
\newcommand{\cK}{\mathcal{K}}

\newcommand{\Eul}{\mathrm{E}}

\title{%
  \vspace{-1.5cm}
  \bfseries
  Constant Maps and Exponentially Small Sectors\\
  in ABJM Bethe Observables
}

\author{%
  Seyed Morteza Hosseini\,\orcidlink{0000-0001-8205-400X}\\[0.5em]
  \small Centre for Theoretical Physics, Department of Physics and Astronomy,\\
  \small Queen Mary University of London,
  London E1 4NS, United Kingdom\\[0.5em]
  \small School of Mathematics and Maxwell Institute for Mathematical Sciences,\\
  \small University of Edinburgh, Edinburgh EH9 3FD, UK\\[0.5em]
}

\date{\today}

\begin{document}
\maketitle

\thispagestyle{empty}
 
\begin{abstract}
\noindent The topologically twisted index of ABJM theory captures the microscopic entropy of magnetically charged supersymmetric $\mathrm{AdS}_4$ black holes, while the effective twisted superpotential determines the Bethe vacua entering its exact evaluation. At large $N$, the critical value of the twisted superpotential supplies the field-theory input to the gravitational blocks entering the bulk entropy function. The companion Letter~\cite{Hosseini:2026dkj} reconstructed the finite-rank structure of the on-shell twisted superpotential for a selected vacuum at the universal twist, including its shifted-rank perturbative form, constant map, all-genus type-IIA expansion, and exponentially small sectors. Using high-precision Bethe-vacuum data, physics-informed symbolic regression, and integer-relation detection, we develop the corresponding finite-rank structure of the index, which also depends on the Hessian, effective dilaton, and remaining one-loop factors and therefore contains information beyond the critical value of the twisted superpotential.

The two observables share the same shifted rank. Within a declared finite class, we reconstruct the index constant map from the ABJM $S^3$ constant maps at levels $k$ and $k/2$, the constant map of the twisted superpotential, and elementary contributions. We prove that this expression equals a convergent one-kernel integral for every real $k>0$. The integral resums the known large-$k$ expansion, determines in closed form a constant previously obtained numerically, and yields the type-IIA coefficients at arbitrary genus. We also prove that the finite Clausen expression for the twisted-superpotential constant map found in~\cite{Hosseini:2026dkj} is equivalent to an alternating Euler sum and an absolutely convergent integral, obtaining closed evaluations of two families of rational-argument Euler sums.

At $k=1,2,4$, the exponentially suppressed sectors exhibit a striking arithmetic organization that is not apparent in the defining Bethe-vacua formulas. The reconstructed coefficients of the twisted superpotential obey a divisor-sum formula whose conjectural all-order continuation is an Eichler-integral series associated with a weight-four Eisenstein form. The index coefficients reduce exactly over the available orders to two rational generators. For one generator, closed modular-product formulas reproduce every reconstructed coefficient exactly and furnish a concrete conjectural all-order completion of this sector. Conditional on this continuation, modular transformations determine the nearest logarithmic singularity, the radius of convergence, and the leading coefficient growth. The emergent arithmetic structure sets a stringent finite-$N$ benchmark for a quantum completion of the gravitational-block framework and, in the present setting, for a bulk computation of the quantum entropy of magnetically charged supersymmetric $\mathrm{AdS}_4$ black holes.
\end{abstract}

\clearpage
\setcounter{page}{1}

\noindent\rule{\textwidth}{0.6pt}
\vspace{-0.5cm}

\begingroup
\hypersetup{
  linktoc=page,
  linkcolor=ABJMlink
}
{\small\tableofcontents}
\endgroup

\vspace{0.2cm}
\noindent\rule{\textwidth}{0.6pt}

\section{Introduction}
\label{sec:intro}

Recovering the full large-$N$ content of an exact finite-$N$ observable is a
nontrivial asymptotic problem. Supersymmetric localization brings this
problem into particularly sharp focus by reducing a broad class of protected
observables to exact finite-dimensional representations, including matrix
integrals and Bethe-vacuum sums (see~\cite{Pestun:2016zxk} for a review).
Determining the $N$-dependent perturbative expansion to all orders can
already require substantial analytic or numerical work. Even complete
knowledge of this $N$-dependent perturbative sector does not fix the
rank-independent contribution, let alone sectors that are exponentially
small at large $N$. In a holographic theory, resolving this finite-$N$
structure furnishes boundary data that probe the holographic dictionary
beyond the classical large-$N$ limit. Extracting this finite-rank structure
is the problem addressed in this paper.

Three-dimensional $\cN=2$ Chern--Simons-matter
theories provide a particularly useful setting for
this kind of problem, because several of their partition functions are
organized by the same finite set of massive Bethe vacua. The effective
twisted superpotential $\Wt(u,\Delta)$, as a function of the gauge variables
$u$ and flavor chemical potentials $\Delta$, determines the Bethe equations
and controls the Cardy limit of each holomorphic block
\cite{Pasquetti:2011fj,Beem:2012mb}. The topologically twisted index on
$S^1\times\Sigma_{\mathfrak g}$, with $\Sigma_{\mathfrak g}$ a closed Riemann
surface of genus $\mathfrak g$, is evaluated on the same critical points,
while its Bethe-vacua formula contains additional data from the Hessian of
$\Wt$, the effective dilaton, and the remaining one-loop factors
\cite{Nekrasov:2009uh,Benini:2015noa,Closset:2018ghr}. The on-shell twisted
superpotential and the index therefore probe the same Bethe saddle from
complementary directions. The former records its critical value, while the
latter retains additional fluctuation data around that critical point.

For ABJM theory and other holographic M2-brane theories, this distinction has
a direct gravitational counterpart. At large $N$, the on-shell twisted
superpotential is identified, up to conventions, with the
supergravity prepotential entering a gravitational block, and gluing the blocks produces the entropy
function of rotating, supersymmetric $\mathrm{AdS}_4$ black
holes~\cite{Hosseini:2019iad}.
The topologically twisted index accounts at
leading order for the entropy of magnetically charged supersymmetric $\mathrm{AdS}_4$ black
holes~\cite{Benini:2015eyy,Benini:2016rke}. Both statements hold at leading
order in $N$, and their finite-rank completions are what supply the boundary
data against which quantum corrections to the entropy functions, and the
related Cardy limits of the dual field theory, can eventually be compared
\cite{Choi:2019dfu,Hosseini:2022vho,Bobev:2024mqw}.

We study these questions in $\U(N)_k\times\U(N)_{-k}$ ABJM theory, which
describes the low-energy dynamics of $N$ M2-branes probing
$\mathbb C^4/\mathbb Z_k$ \cite{Aharony:2008ug}, at the universal twist.
Previous high-precision analyses of the Bethe equations reconstructed the
$N$-dependent part of the fixed-$k$ perturbative large-$N$ expansion to all
orders, with the expansion organized in terms of a shifted rank
\cite{Bobev:2022jte,Bobev:2022eus}. Two sectors remained outside this
reconstruction. The first is the rank-independent term at order $N^0$. At the
universal twist, its large-$k$ asymptotic expansion was extracted numerically
through several orders. All displayed coefficients except the $k^0$
coefficient were recognized in closed form, while the latter remained known
only as a high-precision numerical constant \cite{Bobev:2022eus}. The second
is a tower of corrections exponentially small in $\sqrt N$. Numerical fits
identified the leading exponential scale and found it to coincide with that
of the $S^3$ partition function, while its coefficient and the remaining
nonperturbative structure were left undetermined
\cite{Bobev:2022eus}.

On the gravity side, four-derivative supergravity reproduces the first
subleading $N^{1/2}$ correction at the universal twist
\cite{Bobev:2021oku}. Equivariant localization organizes supersymmetric
on-shell actions in terms of fixed-point data
\cite{BenettiGenolini:2023kxp}, and a recent quantum-M-theory construction,
assuming the all-orders validity of the quantum-corrected localization
formula, captures the all-orders $N$-dependent perturbative large-$N$
structure while leaving the overall prefactor and nonperturbative corrections
undetermined \cite{BenettiGenolini:2026cyc}. The bulk and boundary analyses
therefore leave open the same two classes of finite-$N$ data. Determining
these missing finite-rank and nonperturbative contributions is the central
problem of this paper.

The twisted-superpotential side was developed in the companion
Letter~\cite{Hosseini:2026dkj}. There we reconstructed the shifted-rank parent
and constant map, derived the type-IIA coefficients at arbitrary genus, and
found at $k=1,2,4$ a divisor-sum structure governing the exponentially small
sectors, whose proposed all-order continuation has an Eisenstein--Eichler
form. The topologically twisted index
is built from the same Bethe saddle, but its Bethe-vacua formula carries more
than the on-shell value of $\Wt$. The Hessian of $\Wt$, the effective
dilaton, and the remaining one-loop factors all contribute, and none of them
is fixed by the twisted-superpotential result alone. The index is a
genuinely new finite-rank problem for this reason, with its own constant map
and its own exponential sectors to reconstruct, and not merely a corollary of
the companion analysis. The present paper supplies the root-of-unity proof of
the twisted-superpotential constant-map identities deferred there and
develops this larger index problem, asking how much of the
twisted-superpotential structure persists and what new information the index
carries.

The ABJM partition function on the round $S^3$ provides a closely related
exact problem for which the finite-rank structure is exceptionally well developed.
Supersymmetric localization reduces the partition function to a matrix model
\cite{Kapustin:2009kz}, whose 't~Hooft expansion is governed by the topological
string on local $\mathbb{P}^1\times\mathbb{P}^1$
\cite{Marino:2009jd,Drukker:2010nc}. The perturbative all-genus expansion
resums to an Airy function \cite{Fuji:2011km}. In the fixed-$k$ M-theory
regime, the Fermi-gas reformulation recovers this Airy structure and recasts
the grand partition function as a spectral determinant, thereby providing
direct access to exact finite-$N$ data \cite{Marino:2011eh}. In this
formulation, the modified grand potential contains the constant-map
contribution $A(k)$, together with worldsheet- and membrane-instanton
corrections \cite{Hatsuda:2012dt,Hatsuda:2013oxa}. The chemical potential
appearing in this description is the grand-canonical variable conjugate to
the rank. Mixed worldsheet--membrane bound-state effects can moreover be
absorbed into an effective chemical potential, so they need not be treated
as an independent sector \cite{Hatsuda:2013gj}. Modular structures are also
present on the sphere side. The topological string on local
$\mathbb{P}^1\times\mathbb{P}^1$ admits an almost-modular description, while
at the maximally supersymmetric levels the exact grand partition function
can be written in terms of Jacobi theta functions associated with the
spectral curve \cite{Aganagic:2006wq,Codesido:2014oua}; more generally, the
spectral-theory formulation leads to quantum theta functions with controlled
modular properties \cite{Grassi:2016nnt}.

In the present analysis, the sphere result enters directly through the
functions $A(k)$ and $A(k/2)$, which are prescribed features in the
reconstruction of the index constant term. The modular structures found below
have a different provenance. They emerge from the reconstructed Bethe-vacuum
coefficients and are not imported from the sphere spectral problem.
Section~\ref{sec:discussion} compares the nonperturbative actions,
coefficient structures, and special-level modular behavior in the two
settings.

We denote by $\Delta_a$ the chemical potentials conjugate to the flavor
charges and by $\mathfrak n_a$ the corresponding background magnetic fluxes
through $\Sigma_{\mathfrak g}$, with $a=1,\ldots,4$. Our analysis uses
high-precision data for a selected Bethe vacuum together with its full
$k$-fold $\mathbb Z_k$ orbit. The Bethe equations are invariant under
simultaneous multiplication of all exponentiated Bethe variables by a $k$-th
root of unity, so that the selected vacuum generates an orbit of $k$ distinct
solutions; the
derivation is given in Section~\ref{sec:setup}. We specialize to
$S^1\times S^2$ at the universal values $\Delta_a=\pi/2$ and
$\mathfrak n_a=1/2$. We denote the corresponding on-shell twisted
superpotential by $\Wt(N,k)$ and find that its real part vanishes to the full
working precision throughout the computed grid. For the index, we consider
$F(N,k)\equiv-\re\log\left[k Z_{\rm data}(N,k)\right]$,
the real free-energy contribution associated with this orbit, where
$Z_{\rm data}(N,k)$ denotes the recorded contribution of a single Bethe
vacuum and the explicit factor of $k$ accounts for the $k$ equal
contributions in its $\mathbb Z_k$ orbit. We combine these
high-precision Bethe-vacuum data with physics-informed symbolic regression
and integer-relation detection to reconstruct the finite-rank structure and
identify the terms not fixed by the perturbative large-$N$ expansion.

\subsection*{Results and their scope}

Both observables are organized by the same shifted rank, instanton action, and
physical nome,
\be
\Nh=N-\frac{k}{24}+\frac{2}{3k}\,,
\qquad
t=2\pi\sqrt{\frac{2\Nh}{k}}\,,
\qquad
q=\e^{-t}\,.
\label{eq:intro-variables}
\ee
Their fixed-$k$ decompositions are
\bea
\im\Wt(N,k)
&=\frac{\pi^2}{3}\sqrt{\frac{k}{2}}\,\Nh^{3/2}
  +\CW(k)+\cW_{\rm np}(N,k)\,,
\\
F(N,k)
&=\frac{\pi\sqrt{2k}}{3}
  \left(\Nh^{3/2}-\frac{3}{k}\Nh^{1/2}\right)
  +\frac12\log\Nh-\fzero(k)+F_{\rm np}(N,k)\,.
\label{eq:intro-parents}
\eea
The remainders are exponentially small in $t$ over the available rank ranges.
We use ``constant map'' for the rank-independent terms $\CW(k)$ and
$\fzero(k)$ after subtraction of the displayed rank-dependent parents.

The companion analysis reconstructed $\CW(k)$ in parity-dependent finite
Clausen form. For every real $k>0$, define
\be
\Cint(k)
\equiv
-\frac{k^2\zeta(3)}{16\pi}
+\frac{2k}{\pi}\int_0^\infty
\log(2\cosh x)\,\log\bigl(1-\e^{-kx}\bigr)\,\rd x\,.
\label{eq:intro-CW-integral}
\ee
The integral converges absolutely, and Theorem~\ref{thm:C} proves that it
equals $\CW(k)$ for every positive integer $k$.
The proof reduces the integral to alternating
Euler sums at rational arguments and evaluates them by roots of unity. It
therefore yields, independently of the Bethe problem, closed finite formulas
for two families of rational-argument Euler sums.

The index constant map is reconstructed in a declared six-dimensional class
built from the ABJM $S^3$ constant maps at levels $k$ and $k/2$, the
twisted-superpotential constant $\CW(k)$, and elementary terms.
Theorem~\ref{thm:f0} proves that the resulting expression has the one-kernel
representation
\bea
\fzero(k)
&=
-\frac{3\zeta(3)}{8\pi^2}k^2
+\frac76\log k+f_0
+\frac{k}{\pi^2}\int_0^\infty
\cK(x)\log\bigl(1-\e^{-kx}\bigr)\,\rd x\,,
\\
\cK(x)
&=6x\tanh x+4\log\frac{\sinh x}{x}
  +4\left(\frac{x}{\tanh x}-1\right),
\qquad
f_0=-8\zeta'(-1)-\frac{23}{6}\log2-\frac23\log\pi\,.
\label{eq:intro-f0-integral}
\eea
This identity holds for every real $k>0$. The integral resums the previously known
large-$k$ series for $\fzero(k)$ and fixes its numerically fitted constant in
closed form.

Both theorems establish identities between analytic representations of
functions of $k$, independently of the Bethe-vacuum data. The role of the
data is instead to identify these analytic functions with the physical
plateaux extracted from the finite-rank observables. For $\CW(k)$, this
identification is verified over the full range $1\leq k\leq50$, although
only the seven lowest levels are used to determine the closed form. For
$\fzero(k)$, the six-dimensional class is fixed using six levels and then
tested out of sample against nine additional levels. Together
with the two shifted-rank parents, the constant-map integrals then give
closed type-IIA coefficients at arbitrary genus for both observables.
The corresponding genus expansions are asymptotic.

The exponentially small sector of the twisted superpotential is recalled from
the companion analysis because it supplies the comparison point for the index.
At $k=1,2,4$, every resolved coefficient agrees exactly with
\bea
\cW_{\rm np}(N,k)
&=
\frac{k^2}{2\pi}\sum_{m\geq1}c_m^{(k)}
\left(t+\frac1m\right)q^m\,,
\\
c_m^{(k)}
&=
\frac{(-1)^{m+1}}{m^2}
\begin{cases}
5\bigl[\sigma_3(m)-4\sigma_3(m/2)\bigr],&k=1,2,\\[3pt]
\sigma_3(m)-16\sigma_3(m/4),&k=4,
\end{cases}
\label{eq:intro-W-continuation}
\eea
with $\sigma_3(x)=0$ for $x\notin\mathbb Z_{>0}$. The coefficient law is
reconstructed through $51$, $67$, and $93$ sectors at $k=1,2,4$, and
conjectured to continue to all $m$ in~\eqref{eq:intro-W-continuation}. Assuming
this all-order continuation, the resulting $q$-series has unit radius of
convergence and can be expressed as the image, under a first-order
differential operator in $t$, of an Eichler integral associated with a
weight-four Eisenstein form. Importantly, the divisor-sum coefficients are
read directly off the sector peel; the Eisenstein--Eichler structure is
identified only afterward and is not used as input to the reconstruction.

This reconstruction has an immediate physical corollary. Evaluated on the
same Bethe saddle, $\im\Wt(N,k)$ is exactly the on-shell quantity that
controls the leading term of the superconformal index in its Cardy, or
high-temperature, limit~\cite{Choi:2019dfu,Bobev:2022wem,Bobev:2024mqw}.
The finite-rank structure
reconstructed above therefore determines the strict Cardy coefficient
associated with this saddle. In particular, its constant map
and exponentially small sectors are inherited, term by term, by the
leading Cardy exponent, without requiring any additional finite-rank input.
Section~\ref{sec:Cardy} makes this relation precise and, under a choice of integration
contour and dominant saddle whose validity is discussed there, uses it to
extract the large-charge growth of the charge-refined index coefficients.

The index exhibits a richer nonperturbative structure than the twisted
superpotential. At $k=1,2,4$, we resolve $17$, $15$, and $15$ exponential
sectors, respectively, and find that each sector is quadratic in the
instanton action $t$, in contrast with the linear dependence on $t$ of the
twisted-superpotential sectors. The expression
\be
F_{\rm np}(N,k)
=
-\sum_{m\geq1}
\left[
R_m^{(k)}
+\frac{k^2}{\pi^2}
\left(
A_m^{(k)}t^2+B_m^{(k)}\left(t+\frac1m\right)
\right)
\right]q^m\,.
\label{eq:intro-index-continuation}
\ee
therefore defines a candidate all-order continuation of the sector
structure observed at finite depth. Only the coefficients through
$m=17,15,15$ at $k=1,2,4$, respectively, are fixed by the reconstruction;
the continuation beyond these orders is conjectural. An exact rational
relation among the three coefficient sequences reduces the problem to two
generating functions,
$\cR_k(q)=\sum_{m\geq1}R_m^{(k)}q^{m}$ and
$\cU_k(q)=\sum_{m\geq1}m\,A_m^{(k)}q^{m}$.
This reduction allows the existence of an all-order closed form to be tested
separately for the two generators.

For $\cR_k(q)$, the reconstructed coefficients exhibit arithmetic and sign
patterns, finite-order growth, and candidate zero locations that motivate two
data-informed modular-product ans\"atze in the shifted nome $Q=-q$, given in
Eq.~\eqref{eq:Rfamilies}. Their integer parameters are fixed exactly from
low-order coefficients, after which all remaining reconstructed coefficients
are reproduced with no additional parameter adjustment. Assuming that these
products give the exact all-order continuation, their modular transformation
properties determine the nearest singularity, the corresponding radius of
convergence, and the leading large-order coefficient growth
(Proposition~\ref{prop:radius}). No analogous representation for
$\cU_k(q)$ is found within the modular class examined in
Section~\ref{sec:U}. The available data therefore provide an all-order
candidate for only one of the two generators, and the complete
nonperturbative finite-rank structure of the index remains open.

The modular organization therefore emerges a posteriori from the
reconstruction. Neither the Bethe equations nor the defining formulas for
the twisted superpotential and the topologically twisted index display an evident 
Eisenstein, Eichler, or modular-product structure. Instead, the
high-precision coefficients first reveal divisor sums, alternating sign
patterns, level-dependent arithmetic, and candidate singularity scales;
modular objects enter only after these features are identified and assembled
into candidate all-order structures. For the twisted superpotential, the
exponentially small sector reduces to a single divisor-sum sequence, whereas
the index retains one generator admitting a related modular-product
description together with a second generator for which no comparable closed
form is presently known. This comparison distinguishes the arithmetic
structure shared by the two observables from the additional finite-$N$
information carried by the index.

Throughout, a candidate expression is fixed from a restricted subset of the
available ranks, levels, or instanton orders, and then checked against data
not used in that determination. The examples above follow this pattern. The
reconstruction of $\cR_k(q)$ has one additional step, because the
modular-product families themselves are motivated by the resolved
coefficients. Once these families are specified, a few low-order coefficients
determine their integer parameters, and the remaining coefficients test the
resulting closed-form candidates for $\cR_k(q)$ in
Eq.~\eqref{eq:Rmodular}.
Section~\ref{sec:exactness} fixes the vocabulary used throughout for these
distinctions --- proved identities, exact reconstructions, numerical
statements, empirical patterns, conjectural continuations, and open questions
--- while Table~\ref{tab:status} summarizes the status and limitations of the
principal claims. Section~\ref{sec:method} then gives the complete
reconstruction and validation protocol.

\subsection*{Organization of the paper}

Section~\ref{sec:setup} fixes conventions, specifies the selected Bethe orbit,
introduces the natural variables and modular notation, and describes the
high-precision data. Sections~\ref{sec:W}--\ref{sec:Cardy} develop the
twisted-superpotential side, recalling the finite-rank structure established
in~\cite{Hosseini:2026dkj}, supplying the constant-map proof deferred there,
and deriving the associated Cardy application. Section~\ref{sec:Z} analyzes
the finite-rank structure of the topologically twisted index, including its
constant map, type-IIA genus expansion, exponentially suppressed sectors, the
modular-product candidate for $\cR_k(q)$, and the unresolved generator
$\cU_k(q)$. Section~\ref{sec:method} presents the analysis framework
underlying these reconstructions. Section~\ref{sec:validation}
provides quantitative tests of the reconstructed
formulas. Section~\ref{sec:discussion} interprets the arithmetic and
nonperturbative structures in relation to earlier ABJM results and develops
the open problems that follow from them. The appendices
contain the root-of-unity proof and closed Euler-sum evaluations
(Appendix~\ref{app:proof}), the constant-map identities
(Appendix~\ref{app:Smaps}), and the complete reconstructed sector
coefficients (Appendix~\ref{app:tables}).

\paragraph{Note added.}
While this work was being finalized, we became aware of the independent
work~\cite{Hong:2026zul}, which determines the rank-independent constants
of the ABJM twisted superpotential and topologically twisted index at the
superconformal point. After translating conventions, the corresponding
constants are related to ours by
\begin{equation*}
\CW(k)=2\pi\,\widehat g_0(k,\Delta_{\rm sc})\,,
\qquad
\fzero(k)=\widehat f_0(k,\Delta_{\rm sc},\mathfrak n_{\rm sc})+\log k\,.
\end{equation*}
Here $\widehat g_0$ and $\widehat f_0$ denote the rank-independent terms in
the conventions of Ref.~\cite{Hong:2026zul}, while $\Delta_{\rm sc}$ and
$\mathfrak n_{\rm sc}$ denote the superconformal chemical potentials and
magnetic fluxes. The second relation accounts for the $k$-fold $\mathbb Z_k$
orbit included in our normalization~\eqref{eq:orbit}. The parity-dependent finite Clausen
representations and rational-argument Euler-sum identities of
Theorems~\ref{thm:C} and~\ref{thm:Euler}, the finite-rank exponential sectors,
and their modular and Eichler organization are not addressed in
Ref.~\cite{Hong:2026zul}.

\section{Setup, conventions and data}
\label{sec:setup}

\subsection{The theory}

The ABJM model~\cite{Aharony:2008ug} is a three-dimensional
$\U(N)_k\times\U(N)_{-k}$ Chern--Simons--matter theory. Its matter sector
consists of two chiral multiplets $A_{1,2}$ transforming in
$(\mathbf N,\overline{\mathbf N})$ and two chiral multiplets $B_{1,2}$
transforming in the conjugate representation
$(\overline{\mathbf N},\mathbf N)$. The quartic superpotential is
$W=\varepsilon^{ab}\varepsilon^{cd}\Tr(A_aB_cA_bB_d)$, where
$\varepsilon^{12}=1$ and $a,b,c,d\in\{1,2\}$ are summed.
At generic integer level $k$, the theory has $\cN=6$ supersymmetry and
continuous global symmetry $\mathrm{SU}(4)_R\times\U(1)_T$.
The first factor is the R-symmetry, whereas the second is the topological
symmetry generated by the identically conserved three-dimensional current
$j_T=\frac{1}{2\pi}\star\Tr(\mathcal F-\widetilde{\mathcal F})$, where
$\mathcal F$ and $\widetilde{\mathcal F}$ are the field strengths of the two
gauge-group factors and $\star$ is the three-dimensional Hodge dual.

We introduce four chemical potentials $\Delta_a$ and background magnetic
fluxes $\mathfrak n_a$ for a convenient rank-four Abelian basis of
$\mathrm{SU}(4)_R\times\U(1)_T$. Following the standard index conventions,
we label them by $a=1,\ldots,4$ in the order $(A_1,A_2,B_1,B_2)$.
The elementary bifundamental
multiplets are neutral under $\U(1)_T$, which instead acts on monopole operators.
For $k=1,2$, monopole operators furnish additional conserved
currents that enhance the supersymmetry and R-symmetry to $\cN=8$ and
$\mathrm{SO}(8)_R$, respectively
\cite{Aharony:2008ug,Gustavsson:2009pm}. The reconstructed instanton
coefficients coincide at these two levels, a feature that emerges from the
data without using the symmetry enhancement as an input.

\subsection{Conventions}

We use angular chemical potentials normalized according to%
\footnote{Reference~\cite{Bobev:2022jte} instead uses
$\sum_a\Delta_a=2$ and $y_a=\e^{\ii\pi\Delta_a}$, so that
$\Delta_a^{\rm here}=\pi\Delta_a^{\rm there}$.}
\be
\sum_{a=1}^{4}\Delta_a=2\pi\,,
\qquad
y_a=\e^{\ii\Delta_a}\,.
\label{eq:Deltaconv}
\ee
On a genus-$\mathfrak g$ Riemann surface $\Sigma_{\mathfrak g}$, supersymmetry
requires the background magnetic fluxes to satisfy
\be
\sum_{a=1}^{4}\mathfrak n_a=2(1-\mathfrak g)\,.
\ee

We denote the angular Coulomb-branch holonomies by $u_i$ and $\wt u_i$,
with $i=1,\ldots,N$, and define
$x_i=\e^{\ii u_i}$ and $\wt x_i=\e^{\ii\wt u_i}$. The effective twisted
superpotential is~\cite{Benini:2015eyy}
\be
\Wt=\sum_{i=1}^{N}\Big[\frac{k}{2}\big(\wt u_i^{2}-u_i^{2}\big)
-2\pi\big(\wt n_i\wt u_i-n_i u_i\big)\Big]
+\sum_{i,j=1}^{N}\sum_{\sigma=\pm1}\sigma\sum_{a\in I_\sigma}
\Li_2\big(\e^{\ii(\wt u_j-u_i+\sigma\Delta_a)}\big) \, ,
\label{eq:Wdef}
\ee
where $I_-=\{1,2\}$, $I_+=\{3,4\}$, and
$\Li_s(z)=\sum_{m\geq1}z^m/m^s$, analytically continued with the
dilogarithm branch cut along $[1,\infty)$.
Because $\Wt$ is multivalued its
critical points exist only after a lift, and the integers
$n_i,\wt n_i\in\mathbb{Z}$ in~\eqref{eq:Wdef} record it; we write $n$
for lifts and reserve $\mathfrak{n}$ for fluxes. Throughout we take
\be
n_i=\wt n_i=-i\,,
\qquad i=1,\ldots,N\,,
\ee
and order the eigenvalues so that both $\im u_i$ and $\im\wt u_i$ increase
with $i$. With this lift, the long-range forces cancel and
the resulting on-shell twisted superpotential has
the $N^{3/2}$ scaling~\cite{Benini:2015eyy}.
This prescription selects the Bethe
vacuum studied below, and we adopt it as given.

\subsection{The index}

The topologically twisted index on $S^1\times\Sigma_{\mathfrak g}$ admits a
representation as a sum over solutions of the Bethe ansatz equations (BAEs)~\cite{Nekrasov:2009uh,Benini:2015noa}
\be
\resizebox{0.93\textwidth}{!}{$\displaystyle
Z_{S^1 \times \Sigma_\mathfrak g}
=\prod_{a=1}^4y_a^{-\frac{N^2}{2}\mathfrak n_a}
\sum_{\{x_i,\widetilde x_j\}\in\mathrm{BAE}}
\Bigg[\frac{1}{\det\mathbb B}
\frac{\prod_{i=1}^N x_i^N\widetilde x_i^N
\prod_{i\ne j}\big(1-\frac{x_i}{x_j}\big)\big(1-\frac{\widetilde x_i}{\widetilde x_j}\big)}
{\prod_{i,j=1}^N
\prod_{a=1}^{2} (\widetilde x_j-x_i y_a)^{1-\frac{\mathfrak n_a}{1-\mathfrak g}}
\prod_{a=3}^{4} (x_i-\widetilde x_j y_a)^{1-\frac{\mathfrak n_a}{1-\mathfrak g}}}
\Bigg]^{1-\mathfrak g}
$}\, .
\label{eq:BAformula}
\ee
The sum runs over BAE solutions $\exp(\ii\partial_{u_i}\Wt)=
\exp(\ii\partial_{\wt u_i}\Wt)=1$ modulo the Weyl group. Here $\mathbb{B}$ is
the $2N\times2N$ Jacobian of the logarithmic Bethe operators,
\bea
\mathbb{B}&=\begin{pmatrix}
\delta_{jl}\big(k-\textstyle\sum_{m} \mathscr G_{jm}\big) & \mathscr G_{jl}\\[2pt]
-\mathscr G_{lj} & \delta_{jl}\big(k+\textstyle\sum_{m} \mathscr G_{mj}\big)
\end{pmatrix} \, ,\\[4pt]
\mathscr G_{ij}&=\left[z\frac{\rd}{\rd z}\log \mathscr D(z)\right]_{z=\wt x_j/x_i} \, ,\qquad
\mathscr D(z)=\frac{(1-zy_3)(1-zy_4)}{(1-z/y_1)(1-z/y_2)} \, .
\label{eq:Bmatrix}
\eea
where $j,l,m=1,\ldots,N$ and $\delta_{jl}$ is the Kronecker delta.
This formula assumes that the contributing Bethe roots are admissible
and nondegenerate. They must avoid all singular loci of the summand and satisfy
$\det\mathbb B\neq0$.

\subsection{Universal twist, genus dependence, and the orbit factor}

We specialize to the universal twist, for which the background magnetic flux
is aligned with the exact superconformal R-symmetry. In our conventions,
\be
\Delta_a=\frac{\pi}{2}\,,
\qquad
\mathfrak n_a=\frac{1-\mathfrak g}{2}\,,
\qquad
a=1,\ldots,4\,.
\label{eq:universal}
\ee
All numerical computations below are performed at $\mathfrak g=0$, so that
$\mathfrak n_a=1/2$.

The flux assignment requires one qualification. The half-integer values
$\mathfrak n_a$ specify the superconformal R-symmetry direction. In the
integer-charge basis underlying the localization formula, Dirac quantization
requires integral $\mathfrak n_a$. The universal assignment therefore defines
an ordinary quantized background only for odd $\mathfrak g$. At
$\mathfrak g=0$ we use the Bethe-vacua expression by analytic continuation to
$\mathfrak n_a=1/2$.

At the universal twist, the genus dependence of~\eqref{eq:BAformula}
factorizes for each individual Bethe solution:
\be
1-\frac{\mathfrak n_a}{1-\mathfrak g}
=
\frac{1}{2}\,,
\qquad
\prod_{a=1}^4
y_a^{-\frac{N^2}{2}\mathfrak n_a}
=
\bigg(
\prod_{a=1}^4 y_a^{-\frac{N^2}{4}}
\bigg)^{1-\mathfrak g} \,.
\label{eq:universalfactorization}
\ee
The expression inside the square brackets in~\eqref{eq:BAformula} is
therefore independent of $\mathfrak g$, while the prefactor carries the
same overall power $1-\mathfrak g$. The Bethe equations themselves are
independent of $\mathfrak g$ and of the magnetic fluxes. Consequently, if
$Z_{\rm data}(N,k)$ denotes the genus-zero contribution of the selected
Bethe solution, the contribution of the same solution at
$\mathfrak g\neq1$ is
\be
\left[Z_{\rm data}(N,k)\right]^{1-\mathfrak g}\,.
\label{eq:genusrel}
\ee
Thus the genus-zero result determines the contribution of the same Bethe
vacuum at higher genus. This relation holds vacuum by vacuum and does not
imply the corresponding identity for the full index, since the sum over Bethe
vacua is performed only after the individual contributions have been raised
to the power $1-\mathfrak g$.

The case $\mathfrak g=1$ must be treated separately.
The universal magnetic flux then vanishes and the twisted
partition function computes the Witten index~\cite{Benini:2016hjo}, or
equivalently the number of admissible Bethe vacua counted with the appropriate
multiplicities.

The Bethe equations are invariant under the simultaneous multiplication
\be
\{x_i,\wt x_i\}
\longmapsto
\{\varpi_k x_i,\varpi_k\wt x_i\}\,,
\qquad
\varpi_k=\e^{2\pi\ii/k}\,,
\label{eq:Zkaction}
\ee
and the summand in~\eqref{eq:BAformula} is invariant under the same action.
For the solution studied below, this transformation generates a $k$-element
orbit of distinct solutions modulo the Weyl group. Our tables record one
representative. The selected orbit therefore contributes
\be
k\left[Z_{\rm data}(N,k)\right]^{1-\mathfrak g}\,,
\qquad \mathfrak g\neq1\,.
\ee
The multiplicity is $k$ for every genus because it counts distinct terms in
the Bethe-vacua sum; only the contribution of each vacuum is raised to the
power $1-\mathfrak g$.

At genus zero, we define the real free-energy contribution of this orbit by
\be
F(N,k)\equiv-\re\log\left[k Z_{\rm data}(N,k)\right]
=-\re\log Z_{\rm data}(N,k)-\log k\,.
\label{eq:orbit}
\ee
On this locus our computations give $\re\Wt=0$ to full working
precision at every computed point, and $\im\log Z_{\rm data}$ is an integer multiple of $\pi$ from the chosen
logarithm branch, and this phase is discarded in the real free energy.

\emph{Scope of $F$.} Throughout, $F$ denotes the contribution of this
solution and its orbit. Whether further Bethe solutions contribute at these
ranks is an open question, and one that lies outside what the present data can
settle.

\subsection{Natural variables and notation}
\label{sec:natural-variables}

The large-rank data are most naturally organized in terms of the shifted rank
$\Nh$, the associated instanton action $t$, and the physical nome $q$,
\be
\Nh=N-\frac{k}{24}+\frac{2}{3k}\,,\qquad
t=2\pi\sqrt{\frac{2\Nh}{k}}\,,\qquad q=\e^{-t}\,.
\label{eq:shift}
\ee
Section~\ref{sec:method} infers these variables from the data themselves.
Whenever $t$ and $q$ are used as real instanton variables, we
restrict to $\Nh>0$ and take the positive square root; every rank
entering the $q$-series analyses satisfies this condition.
We also write
\be
\lambda=\frac{N}{k}\,,
\qquad
\lamh=\lambda-\frac{1}{24}
\ee
for the 't~Hooft coupling and its shifted counterpart. The rank shift
in~\eqref{eq:shift} is the universal-twist shift of
Ref.~\cite{Bobev:2022jte}. It differs by $1/k$ from the Airy shift
$N-k/24-1/(3k)$ governing the $S^3$ free energy
\cite{Fuji:2011km,Marino:2011eh}.

We denote the divisor sums by
\be
\sigma_a(n)=\sum_{d\mid n}d^a\,,
\qquad
\sigma_a(x)=0
\quad\text{for}\quad
x\notin\mathbb Z_{>0}\,.
\label{eq:divisor}
\ee
For a positive integer $m$, we further define
\be
\sigma_1^{\rm odd}(m)
=
\sigma_1\!\left(\frac{m}{2^{v_2(m)}}\right),
\ee
namely the sum of the divisors of the odd part of $m$, where $v_2(m)$ is the
$2$-adic valuation of $m$.

The Clausen function is
\be
\Cltwo(\theta)
=
\im\Li_2(\e^{\ii\theta})
=
\sum_{n\geq1}\frac{\sin(n\theta)}{n^2}\,,
\ee
and Catalan's constant is $G=\Cltwo(\pi/2)$. We use the generalized harmonic
numbers
\be
\Hgen_z=\psi(1+z)+\gamma\,,
\qquad z\geq0\,,
\ee
where $\psi$ is the digamma function and $\gamma$ is the Euler--Mascheroni
constant. Our Bernoulli-number convention is
\be
\frac{x}{\e^x-1}
=
\sum_{m\geq0}\mathrm B_m\frac{x^m}{m!}\,,
\ee
so that $\mathrm B_2=1/6$, $\mathrm B_4=-1/30$, and $\mathrm B_6=1/42$.

For modular quantities, we reserve $q$ for the physical nome and denote the
shifted nome by $Q$. We use $[z^m]\mathscr A(z)$ for the coefficient of $z^m$
in a formal power series $\mathscr A(z)$. The two nomes are related by
\be
Q=-q\,.
\label{eq:Qnome}
\ee
We write
\be
q=\e^{2\pi\ii\tau}\,,
\qquad
Q=\e^{2\pi\ii\tau_2}=\e^{\ii\pi\tau_4}\,,
\qquad
\tau_2=\tau+\frac12\,,
\qquad
\tau_4=2\tau_2\,.
\label{eq:tauconv}
\ee
On the physical locus, $\tau=\ii t/(2\pi)$. Thus $q\mapsto Q=-q$ is the
half-period shift $\tau\mapsto\tau+1/2$, while a further sign change
$Q\mapsto-Q$ is $\tau_4\mapsto\tau_4+1$ in the theta-nome convention.
The modular parameters lie in the upper half-plane
$\mathbb H=\{\tau\in\mathbb C:\im\tau>0\}$. For a subgroup $\Gamma\subseteq\mathrm{SL}_2(\mathbb Z)$,
$M_w(\Gamma)$ denotes the space of holomorphic modular forms of weight $w$.
Writing
$\mathsf M=\left(\begin{smallmatrix}\mathsf a&\mathsf b\\
\mathsf c&\mathsf d\end{smallmatrix}\right)\in\mathrm{SL}_2(\mathbb Z)$,
for a positive integer $L$ we use
\begin{equation*}
\Gamma_0(L)
=
\left\{
\mathsf M\in\mathrm{SL}_2(\mathbb Z):
\mathsf c\equiv0\pmod L
\right\},
\qquad
\Gamma(2)
=
\left\{
\mathsf M\in\mathrm{SL}_2(\mathbb Z):
\mathsf M\equiv\mathbb I\pmod2
\right\},
\end{equation*}
where $\mathbb I$ is the $2\times2$ identity matrix. We denote the compact
modular curves associated with $\Gamma_0(L)$ and $\Gamma(2)$ by $X_0(L)$
and $X(2)$, respectively.

The normalized Eisenstein series have the Fourier expansion
\be
E_{2\mathsf k}(Q)
=
1-\frac{4\mathsf k}{\mathrm B_{2\mathsf k}}
\sum_{m\geq1}\sigma_{2\mathsf k-1}(m)Q^m\,,
\qquad \mathsf k\geq1\,.
\label{eq:Eisenstein-general}
\ee
In particular,
\bea
E_2(Q)=1-24\sum_{m\geq1}\sigma_1(m)Q^m\,,\qquad
E_4(Q)=1+240\sum_{m\geq1}\sigma_3(m)Q^m\,.
\label{eq:E2-E4}
\eea
The series $E_2$ is quasimodular, whereas $E_{2\mathsf k}$ is a modular form
of weight $2\mathsf k$ on $\mathrm{SL}_2(\mathbb Z)$ for every
$\mathsf k\geq2$. We use the logarithmic derivative
\be
D\equiv Q\partial_Q=q\partial_q\,,
\label{eq:Ddef}
\ee
where the equality follows from $Q=-q$. For any positive integer $r$, we
define the coefficientwise inverse derivative of a formal series with vanishing
constant term by
\be
D^{-r}\bigg(\sum_{m\geq1}a_mQ^m\bigg)
\equiv
\sum_{m\geq1}\frac{a_m}{m^r}Q^m\,.
\label{eq:Dinverse}
\ee
The constant term is fixed to zero. The same convention applies to a formal
$q$-series because $D=q\partial_q$. We also use the Euler product
\be
\Eul(z)=\prod_{n\geq1}(1-z^n)\,,
\qquad |z|<1\,.
\label{eq:Eul-def}
\ee
For every positive integer $r$,
\be
D\log\Eul(Q^r)
=
\frac{r}{24}\bigl[E_2(Q^r)-1\bigr] \, .
\label{eq:Eul-derivative}
\ee
Indeed, expanding the logarithmic derivative of the product gives
$-r\sum_{n\geq1}nQ^{rn}/(1-Q^{rn})
=-r\sum_{m\geq1}\sigma_1(m)Q^{rm}$, which agrees with
Eq.~\eqref{eq:E2-E4}. The Dedekind eta function is normalized by
\be
\eta(\upsilon)
=
\e^{\pi\ii\upsilon/12}\Eul(\e^{2\pi\ii\upsilon})\,.
\label{eq:eta-def}
\ee
We also use the standard $S$-transformation
\be
\eta(-1/\upsilon)
=
(-\ii\upsilon)^{1/2}\eta(\upsilon)\,,
\label{eq:eta-S}
\ee
where the square root is taken on the principal branch for
$\upsilon\in\mathbb H$.

For the theta-nome convention $z=\e^{\ii\pi\upsilon}$, we define
\bea
\vartheta_2(z)
&=
\sum_{n\in\mathbb Z}z^{(n+1/2)^2}
=
2z^{1/4}\sum_{n\geq0}z^{n(n+1)}\,,
\\
\vartheta_3(z)
&=
\sum_{n\in\mathbb Z}z^{n^2}
=
1+2\sum_{n\geq1}z^{n^2}\,,
\label{eq:theta-def}
\eea
and
\be
\lambda_{\rm mod}(z)
=
\frac{\vartheta_2(z)^4}{\vartheta_3(z)^4}
=
16z\,
\frac{\bigl(\sum_{n\geq0}z^{n(n+1)}\bigr)^4}
{\bigl(1+2\sum_{n\geq1}z^{n^2}\bigr)^4}\,.
\label{eq:lambdadef}
\ee
In particular, $\lambda_{\rm mod}(z)=16z+O(z^2)$ as $z\to0$.
When the argument is a modular parameter, $\lambda_{\rm mod}(\upsilon)$
abbreviates $\lambda_{\rm mod}(\e^{\ii\pi\upsilon})$. The $S$- and
$T$-transformations used below, with
$S:\upsilon\mapsto-1/\upsilon$ and $T:\upsilon\mapsto\upsilon+1$, are
\bea
\lambda_{\rm mod}(-1/\upsilon)
&=1-\lambda_{\rm mod}(\upsilon)\,,
\\
\lambda_{\rm mod}(\upsilon+1)
&=
\frac{\lambda_{\rm mod}(\upsilon)}
{\lambda_{\rm mod}(\upsilon)-1}\,,
\qquad
\lambda_{\rm mod}(\ii)=\frac12\,.
\label{eq:lambda-T}
\eea
The last value follows from the first transformation at its fixed point.
With the conventions of~\eqref{eq:tauconv},
$\lambda_{\rm mod}(q)=\lambda_{\rm mod}(2\tau)$ and
$\lambda_{\rm mod}(-Q)=\lambda_{\rm mod}(2\tau_2+1)$ are Hauptmoduls for
$\Gamma_0(4)$ in the physical and shifted parametrizations, respectively.%
\footnote{Let
$\mathsf M=\left(\begin{smallmatrix}\mathsf a&\mathsf b\\
\mathsf c&\mathsf d\end{smallmatrix}\right)\in\Gamma_0(4)$ and set
$w_0=2\upsilon$, $w_1=2\upsilon+1$. A direct fractional-linear calculation
gives
\begin{equation*}
2(\mathsf M\upsilon)
=
\begin{pmatrix}\mathsf a&2\mathsf b\\ \mathsf c/2&\mathsf d\end{pmatrix}w_0,
\qquad
2(\mathsf M\upsilon)+1
=
\begin{pmatrix}
\mathsf a+\mathsf c/2&-\mathsf a-\mathsf c/2+2\mathsf b+\mathsf d\\
\mathsf c/2&\mathsf d-\mathsf c/2
\end{pmatrix}w_1,
\end{equation*}
where a matrix acts by a fractional-linear transformation. Both matrices have
determinant one; because $\mathsf c\equiv0\pmod4$, their diagonal entries are
odd and their off-diagonal entries are even, so they lie in $\Gamma(2)$. The
first map is onto $\Gamma(2)$, with inverse
$\left(\begin{smallmatrix}\alpha&\beta\\ \chi&\delta\end{smallmatrix}\right)
\mapsto
\left(\begin{smallmatrix}\alpha&\beta/2\\2\chi&\delta\end{smallmatrix}\right)$.
If the first induced matrix is $\Delta$, the second is $T\Delta T^{-1}$, with
$T=\left(\begin{smallmatrix}1&1\\0&1\end{smallmatrix}\right)$. Normality of
$\Gamma(2)$ in $\mathrm{SL}_2(\mathbb Z)$ makes the second map onto as well.
These group isomorphisms identify the corresponding quotient curves with
$X(2)$. Since $\lambda_{\rm mod}$ generates the function field of $X(2)$, the
two compositions are Hauptmoduls for $X_0(4)$.}

Finally, $\mathfrak g$ always denotes the genus of the Riemann surface
$\Sigma_{\mathfrak g}$, whereas $h$ denotes the order in the type-IIA genus
expansion of Sections~\ref{sec:Wgenus} and~\ref{sec:Zgenus}.

\subsection{The data and their precision}

We generated the selected Bethe solutions using two numerically independent
procedures. In the first implementation, the logarithmic Bethe equations are
interpreted as force-balance conditions for eigenvalues in the complex plane.
A dissipative flow drives a seed configuration into the selected basin of
attraction~\cite{Herzog:2010hf,Benini:2015eyy}, after which
arbitrary-precision Newton iteration refines the solution. The second
implementation transports a known solution in rank using the continuation
algorithm of Ref.~\cite{Hosseini:2025jxb}. At every point where both calculations were
performed, the on-shell observables agree within the retained decimal precision.

The analysis draws on two complementary sets of values:
\begin{itemize}\itemsep2pt
\item a broad set, retained to $200$ significant decimal digits, covering
      $1\leq k\leq50$. At $k=1$ the even-rank values extend to $N=800$; for
      $2\leq k\leq50$ they extend to $N=500$, and the odd-rank values to
      $N=299$ at every level. Each observable is evaluated at $20\,100$ points;
\item a deep set, retained to $800$ significant decimal digits, covering every
      $2\leq N\leq100$ at $k=1,2,4$, giving $297$ values for each observable.
\end{itemize}
The computational analysis reported here begins from these four tables and
reconstructs every result that depends on computation, including the parents,
constant maps, sector coefficients, modular searches, figures, tables, and
numerical checks. The two Bethe-equation solver implementations used to generate
the tables lie outside this workflow, so the auditable computational boundary
is the on-shell scalar data.

Here ``$200$ digits'' and ``$800$ digits'' record how many digits are retained;
they are not certified forward-error bounds. Every primary value,
parent subtraction, integer-relation input, and high-precision comparison is
carried in arbitrary precision up to the point where an exact decision is
made. Ordinary floating point is confined to plots and diagnostic quantities
and does not enter inference-critical decisions. The dynamic range makes this
essential. At $(k,N)=(1,800)$, subtracting a parent of order
$5\times10^{4}$ exposes a residual of order $10^{-107}$, roughly one hundred
and twelve decimal orders below the leading contribution.

\subsection{Senses of exactness}
\label{sec:exactness}

The results below combine analytic proofs, exact arithmetic reconstruction
from finite data, high-precision numerical tests, and conjectural all-order
continuations. We use the following terminology throughout. For a primitive
integer relation, we define its height to be the maximum of the absolute
values of its integer coefficients. For a rational number $\mathsf p/\mathsf q$ in lowest
terms, with $\gcd(\mathsf p,\mathsf q)=1$, we define its height by
$H(\mathsf p/\mathsf q)\equiv \max\{|\mathsf p|,|\mathsf q|\}$.

\begin{description}[
leftmargin=0pt,
font=\normalfont\bfseries
]

\item[Proved.]
An analytic statement established from explicit assumptions in this paper or
from a cited result. Theorems~\ref{thm:Euler}, \ref{thm:C}, and~\ref{thm:f0}
have this status. Proposition~\ref{prop:radius} is proved under the assumption
of the all-order continuation~\eqref{eq:Rmodular} stated in its hypothesis.

\item[Exactly reconstructed.]
An exact arithmetic expression recovered from decimal data within a declared
finite ansatz, basis, or search class, with the relevant height bounds stated.
Its parameters are determined from specified data and then checked against
independent data or coefficients not used in that determination. The term
``exact'' refers to the recovered arithmetic statement within the declared
class; its identification with the underlying Bethe-vacuum observable remains
a finite-data statement. When the class itself is informed by the reconstructed
data, its provenance is stated explicitly. Examples include the index constant
map~\eqref{eq:f0arith} and the finite-range relation~\eqref{eq:Brel} among the
index sectors.

\item[Numerically established.]
A quantitative statement supported by direct numerical comparison at a stated
working precision over a specified parameter range. Examples include the
agreement between the two independent Bethe-solution computations wherever
both are available, and the vanishing of \(\operatorname{Re}\widetilde{W}\) to
working precision over the computed grid.

\item[Empirical pattern.]
A structural or asymptotic pattern inferred from the available finite data
without a derivation from the underlying Bethe equations. Examples include
the candidate reciprocal radii in~\eqref{eq:rhovalues}, inferred from
finite-order coefficient growth prior to assuming the modular-product
continuation, and the extrapolation~\eqref{eq:Uleading} for
$\cU_k(q)$.

\item[Conjectural continuation.]
A specified all-order extension beyond the reconstructed coefficient range.
The divisor series~\eqref{eq:Wnp}, the all-order ansatz for the exponentially
suppressed index sectors~\eqref{eq:Znp-continuation}, and the modular
product~\eqref{eq:Rmodular} have this status. For both observables, the
$k=1$ and $k=2$ sector coefficients agree exactly over their common
resolved orders. Any extension of these equalities beyond the reconstructed
ranges is conjectural. Analytic consequences derived from an all-order
continuation retain that continuation as an explicit hypothesis.

\item[Open.]
A question not settled by the present analysis. The scope of an exhaustive
bounded search is determined by its stated search class and size cap.
Section~\ref{sec:U} excludes representations of $\cU_k(q)$ within the
specified modular dictionary and size cap; representations outside that
class remain open.

\end{description}

Table~\ref{tab:status} summarizes the evidence, status, and limitations of the
paper's principal claims using this terminology.

\begingroup
\small
\renewcommand{\arraystretch}{1.16}
\begin{longtable}{@{}>{\raggedright\arraybackslash}p{0.21\linewidth}
                      >{\raggedright\arraybackslash}p{0.30\linewidth}
                      >{\raggedright\arraybackslash}p{0.43\linewidth}@{}}
\caption{Status of the principal claims, using the vocabulary introduced
above. When a class is itself data-informed, as for
$\cR_k$, the table states this separately.}
\label{tab:status}\\
\toprule
Claim & Computational evidence & Status and limitation \\
\midrule
\endfirsthead
\multicolumn{3}{l}{\small\tablename~\thetable\ continued}\\
\toprule
Claim & Computational evidence & Status and limitation \\
\midrule
\endhead
\midrule
\multicolumn{3}{r}{\small Continued on the next page}\\
\endfoot
\bottomrule
\endlastfoot

Shifted rank and perturbative parents
& The root scan is performed in four disjoint rank windows. The level law is
reconstructed from $k=1,\ldots,12$ and validated on
$k=13,\ldots,50$.
& The coefficients are reconstructed within the prescribed bases
$\{k,1/k\}$ and
$\{\Nh^{3/2},\Nh^{1/2},\log\Nh,1\}$. The choice of these bases is supplied as
physical input to the search. \\
\addlinespace

Twisted superpotential constant $\CW(k)$
& Levelwise integer relations are certified at
$k=1,\ldots,8,10,12$ using the four robustness checks of the companion
Letter~\cite{Hosseini:2026dkj}, together with the additional three-point
plateau-tail criterion. The resulting level formula is checked against all
plateaux for $1\leq k\leq50$.
& Exactly reconstructed from the Bethe-vacuum plateaux within the declared
basis; its identification with the twisted-superpotential constant term remains
a finite-data statement. Equality of the Clausen, Euler-sum, and integral
representations is proved for all positive integers $k$ in
Theorem~\ref{thm:C}. \\
\addlinespace

Twisted superpotential sectors
& $51$, $67$, and $93$ rational sectors are reconstructed at
$k=1,2,4$, respectively. At every resolved order, their coefficients agree
exactly with the stated divisor formula.
& The result is a finite-order reconstruction. The all-order divisor formula
remains conjectural. Its Eichler representation and convergence in the unit
disc follow analytically once this continuation is assumed. \\
\addlinespace

Index constant $\fzero(k)$
& Six coefficients are reconstructed from $k=1,\ldots,6$ and validated on nine
withheld levels through $k=50$. An independent large-$k$ comparison provides
an additional check.
& Exactly reconstructed from the index plateaux within the declared
six-feature space. Its identification with the rank-independent term of the
topologically twisted index remains a finite-data statement. Equality of the
arithmetic and one-kernel representations is proved for all real $k>0$ in
Theorem~\ref{thm:f0}. \\
\addlinespace

Index sectors and their relation
& $17$, $15$, and $15$ rational sectors are reconstructed at
$k=1,2,4$, respectively. The relation is determined from the minimal
low-order subset and verified over $\mathbb Q$ on all remaining reconstructed
sectors.
& Within the reconstructed coefficient ranges, the relation is an exact
identity over $\mathbb Q$. \\
\addlinespace

Modular product ansatz for $\cR_k$ and its radius of convergence
& Data-informed product families of modular levels two and four are identified
in the shifted nome $Q=-q$. Their integer parameters are determined exactly
from the first three coefficients of the common $k=1,2$ sequence and the first
two coefficients at $k=4$. At $k=4$, the candidate zero provides an
independent constraint on the chosen linear Hauptmodul factor. All remaining
resolved coefficients are then reproduced exactly.
& The integer parameters are unique within the stated product families.
At $k=1,2,4$, respectively, $14$, $12$, and $13$ coefficients are withheld
from the coefficient-based determination and reproduced exactly. The infinite
product \eqref{eq:Rmodular} remains conjectural. Conditional on this
continuation, Proposition~\ref{prop:radius} proves the location and logarithmic
nature of the singularity, the radius of convergence, and the leading
coefficient growth. \\
\addlinespace

Bounded modular search for $\cU_k$
& Rational linear combinations of at most four entries are considered from the
declared level-two and level-four library in the shifted nome $Q=-q$. For each
candidate support, the coefficients are determined from a nonsingular set of
instanton orders selected by exact row reduction and then tested against every
remaining resolved order.
& No candidate reproduces all reconstructed coefficients of $\cU_k$ at
$k=1,2,4$. The negative result is exhaustive within the declared library and
the four-entry cap. Representations outside this search class remain open. \\
\addlinespace

Cardy evaluation at the selected saddle
& The reconstructed $\im\Wt(N,k)$ is substituted into the Bethe-saddle Cardy
formula.
& The resulting expression applies to the selected Bethe saddle. Interpreting
it as an asymptotic formula for the corresponding coefficients additionally
requires a valid contour deformation and dominance of that saddle. \\
\addlinespace
\end{longtable}
\endgroup

\section{The effective twisted superpotential}
\label{sec:W}

The companion Letter~\cite{Hosseini:2026dkj} reconstructed the shifted-rank parent and
constant map, derived the type-IIA coefficients at arbitrary genus, and found
the special-level divisor sequence of the on-shell twisted superpotential,
together with its proposed Eisenstein--Eichler all-order continuation. We
include these results because they supply one side of the comparison with the index and enter its constant map.
The new analytic contribution in this paper is the proof, given in
Appendix~\ref{app:proof}, that the finite Clausen, alternating-Euler-sum, and
integral representations of the constant map are equivalent. The same proof
yields the closed rational-argument Euler sums stated there.

\subsection{The perturbative parent}
\label{sec:Wpert}

At fixed $k$ the perturbative part of $\im\Wt$ consists of two terms,
\be
\im\Wt(N,k)\big|_{\rm pert}
=
\frac{\pi^{2}}{3}\sqrt{\frac{k}{2}}\,\Nh^{3/2}
+\CW(k)\,.
\label{eq:Wparent}
\ee
Unrestricted four-term fits at the representative levels
$k=1,2,3,5,10,20,50$ reconstruct this form. At each of these levels, we solve
for the level-dependent fit coefficients
$\mathsf w_{3/2},\mathsf w_{1/2},\mathsf w_{\log},\mathsf w_0$ in the ansatz
\be
\mathsf w_{3/2}\Nh^{3/2}
+\mathsf w_{1/2}\Nh^{1/2}
+\mathsf w_{\log}\log\Nh
+\mathsf w_0
\ee
using the four largest available ranks.
The result gives
$\mathsf w_{3/2}=\frac{\pi^{2}}{3}\sqrt{k/2}$ and
$\mathsf w_0=\CW(k)$, while $\mathsf w_{1/2}$ and
$\mathsf w_{\log}$ are numerically consistent with zero. For example, at
$k=1$ we find $\mathsf w_{1/2}\sim6\times10^{-102}$ and
$\mathsf w_{\log}\sim-5\times10^{-101}$, against a leading contribution
of order $5\times10^{4}$. The corresponding estimates include
$\mathsf w_{1/2}\sim2\times10^{-32}$ at $k=5$ and
$\mathsf w_{1/2}\sim3\times10^{-21}$ at $k=10$.
The accuracy with which all four coefficients are reconstructed is limited by the
exponentially small remainder and by the conditioning of the four-parameter
solve. It consequently deteriorates with increasing $k$ in parallel with the
size of the resolved remainder.

The residual over the full computed rank range provides the stronger
test. After subtracting
\eqref{eq:Wparent}, the remainder is exponentially small in $\sqrt{\Nh}$ and
tracks the physical scale $q=\e^{-t}$. Any additional perturbative contribution
would produce a
power-law tail and therefore could not follow the $q$ scale uniformly across
the available ranks. At $(k,N)=(1,800)$, the residual is of order $10^{-107}$.
If one adds separately a term proportional to $\Nh^{1/2}$, $\log\Nh$, or
$\Nh^{-1}$, its coefficient is bounded at approximately the $10^{-104}$ level.
No additional algebraic term is resolved over the available ranks, and we
therefore adopt~\eqref{eq:Wparent} as the perturbative parent. Its proposed
all-orders validity is supported by the exponential behavior of the residual;
a derivation from the Bethe equations is not attempted here.

This structure differs from that of the $S^3$ free energy. At fixed level, the
latter has an infinite perturbative $1/N$ expansion whose resummation is
governed by an Airy function~\cite{Fuji:2011km,Marino:2011eh}. Both
observables have leading $N^{3/2}$ behavior and are naturally expressed in
terms of shifted ranks, but the universal-twist shift~\eqref{eq:shift} differs
by $1/k$ from the Airy shift relevant to the $S^3$ partition
function~\cite{Bobev:2022jte}. At leading order, and in the normalization
of~\eqref{eq:Wdef}, the two are related by \cite{Hosseini:2016tor}
\be
\im\Wt(N,k)\big|_{\rm pert}
=
\frac{\pi}{2}F_{\sthree}(N,k)\,,
\ee
where $F_{\sthree}(N,k)$ denotes the ABJM three-sphere free energy.
Beyond the leading large-$N$ asymptotics, their perturbative structures
differ.

\subsection{The constant map and its exact representations}
\label{sec:WC}

Subtracting the leading $\Nh^{3/2}$ term from $\im\Wt(N,k)$ leaves a sequence
that converges exponentially to the constant term $\CW(k)$, at a rate set by
the first instanton sector. For each attempted level $k\leq20$, we search for an exact
reconstruction of the resulting plateau in the basis
\be
\{\zeta(3)/\pi,\,G,\,\Cltwo(2\pi j/k):1\leq j<k\} \, .
\ee
Certification retains the four robustness checks of the companion
Letter~\cite{Hosseini:2026dkj}, namely the primitive height bound,
reconstruction at the drift-limited precision, stability under a reduction to
three quarters of that digit budget, and stability under deletion of each
unused basis element. We supplement these checks here with an independent
three-point plateau-tail criterion based on the final three plateau estimates.
The levels satisfying both the Letter protocol and this additional criterion
are $k=1,\ldots,8,10,12$. We organize the corresponding certified relations
according to the parity of $k$.

For odd $k=p$, the reconstructed expression is
\bea
\CW(p)
&=
-\left(\frac{p^{2}}{4}+\frac{3}{8p}\right)
\frac{\zeta(3)}{\pi}
+(-1)^{\frac{p+1}{2}}G
\\
&\quad
-\frac{3}{8}
\sum_{j=1}^{p-1}
\bigl(1-(-1)^j\bigr)\,j\,
\Cltwo\!\left(\frac{2\pi j}{p}\right)
-\frac{1}{4}
\sum_{j=1}^{p-1}
(-1)^j\,j\,
\Cltwo\!\left(\frac{4\pi j}{p}\right),
\label{eq:clausen-odd}
\eea
while for even $k=2p$ it is
\be
\CW(2p)
=
-\left(
p^{2}-\frac{1+7(-1)^p}{4p}
\right)
\frac{\zeta(3)}{\pi}
-
\sum_{j=1}^{p-1}
\frac{3-(-1)^{p+j}}{2}\,j\,
\Cltwo\!\left(\frac{2\pi j}{p}\right).
\label{eq:clausen-even}
\ee
Empty sums are understood to vanish. At the lowest levels, these formulas
reduce to
\be
\CW(1)
=
-G-\frac{5\zeta(3)}{8\pi}\,,
\qquad
\CW(2)
=
-\frac{5\zeta(3)}{2\pi}\,,
\qquad
\CW(4)
=
-\frac{3\zeta(3)}{\pi}\,.
\ee
Table~\ref{tab:CW} lists the results through $k=8$. Only the levels
$k\leq7$ are used to assemble
\eqref{eq:clausen-odd}--\eqref{eq:clausen-even}. The resulting formulas then
reproduce all $43$ withheld plateaux with $8\leq k\leq50$, in each case to
the precision set by the corresponding instanton remainder.

The range of the levelwise integer-relation search is limited by the precision
available in the finite-rank plateaux. Let $p_0(k)$, $p_1(k)$, and $p_2(k)$
denote the plateau estimates at the three largest available ranks, ordered from
largest to smallest rank. We assign the conservative digit budget
\be
d(k)=\left\lfloor-\log_{10}|p_0(k)-p_1(k)|\right\rfloor-1\,.
\label{eq:plateau-digits}
\ee
The reach of PSLQ is determined by the competition between this digit supply
and the precision required to certify an integer relation. We consequently
restrict the levelwise integer-relation search to $k\leq20$. This is a
conservative search range, not a negative statement about possible
lower-height relations at larger $k$. The quantitative precision criterion
underlying this choice is discussed in Section~\ref{sec:method}.
Figure~\ref{fig:constant_maps}(a) summarizes the available digit budget and
the outcome of the levelwise searches.

\begin{table}[t]
\centering
\small
\begin{tabular}{@{}clc@{}}
\toprule
$k$ & $\CW(k)$ & basis size / height \\
\midrule
$1$ & $-\dfrac{5\zeta(3)}{8\pi}-G$ & $2$ / $8$\\[6pt]
$2$ & $-\dfrac{5\zeta(3)}{2\pi}$ & $2$ / $5$\\[6pt]
$3$ & $-\dfrac{19\zeta(3)}{8\pi}+G-\tfrac32\Cltwo\!\big(\tfrac{2\pi}{3}\big)$ & $3$ / $19$\\[6pt]
$4$ & $-\dfrac{3\zeta(3)}{\pi}$ & $2$ / $3$\\[6pt]
$5$ & $-\dfrac{253\zeta(3)}{40\pi}-G+\tfrac12\Cltwo\!\big(\tfrac{2\pi}{5}\big)
      +\tfrac72\Cltwo\!\big(\tfrac{4\pi}{5}\big)$ & $4$ / $253$\\[6pt]
$6$ & $-\dfrac{19\zeta(3)}{2\pi}+2\,\Cltwo\!\big(\tfrac{\pi}{3}\big)$ & $3$ / $19$\\[6pt]
$7$ & $-\dfrac{689\zeta(3)}{56\pi}+G-\tfrac52\Cltwo\!\big(\tfrac{2\pi}{7}\big)
      +\tfrac{11}{2}\Cltwo\!\big(\tfrac{4\pi}{7}\big)
      -\tfrac12\Cltwo\!\big(\tfrac{6\pi}{7}\big)$ & $5$ / $689$\\[6pt]
$8$ & $-\dfrac{31\zeta(3)}{2\pi}+4G$ & $3$ / $31$\\
\bottomrule
\end{tabular}
\caption{
The twisted superpotential constant map at low levels. The final column gives
the number of basis elements remaining after pruning and the largest
coefficient height in the reconstructed relation. Both quantities enter the precision requirement of the levelwise
integer-relation search discussed above and in Section~\ref{sec:method}. The closed
forms~\eqref{eq:clausen-odd}--\eqref{eq:clausen-even} were assembled using
only the levels $k\leq7$. The result at $k=8$, and every result at higher
level, is therefore a withheld prediction.}
\label{tab:CW}
\end{table}

The parity-dependent Clausen formulas are unified by the following integral.

\begin{theorem}[Constant map]
\label{thm:C}
For every real $k>0$, define
\be
\Cint(k)=-\frac{k^{2}\zeta(3)}{16\pi}
+\frac{2k}{\pi}\int_{0}^{\infty}\log(2\cosh x)\,\log\big(1-\e^{-kx}\big)\,\rd x\,.
\label{eq:CW}
\ee
The integral converges absolutely. For every positive integer $k$,
$\Cint(k)$ equals~\eqref{eq:clausen-odd} when $k$ is odd and
\eqref{eq:clausen-even} when $k$ is even.
\end{theorem}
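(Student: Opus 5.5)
The plan is to expand the second logarithm in a geometric series, reduce the integral to a double series of alternating Euler type, and evaluate that series with roots-of-unity factorizations.

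\emph{Convergence.} Absolute convergence is elementary. Near $x=0$ we have $\log(2\cosh x)\to\log 2$ and $\log(1-\e^{-kx})\sim\log(kx)$, which is integrable. As $x\to\infty$, $\log(2\cosh x)\sim x$ while $|\log(1-\e^{-kx})|\sim\e^{-kx}$, which decays exponentially.

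\emph{Reduction to a double series.} For $x>0$ I will write
\[
\log(2\cosh x)=x+\log(1+\e^{-2x}),\qquad
\log(1-\e^{-kx})=-\sum_{n\ge1}\frac{\e^{-knx}}{n}.
\]
Both series have terms of fixed sign or are dominated termwise, so Fubini and Tonelli justify interchanging sums and integral. The $x$ piece gives $-\zeta(3)/k^2$, which contributes $-2\zeta(3)/(\pi k)$ to $\Cint(k)$. Expanding $\log(1+\e^{-2x})$ as an alternating series in $m$ turns the remaining piece into
\[
-\frac{2k}{\pi}\,S(k),\qquad
S(k)=\sum_{m,n\ge1}\frac{(-1)^{m+1}}{n\,m\,(2m+kn)} .
\]
This is the alternating, Tornheim-type Euler sum. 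Summing over $m$ first gives the equivalent single-sum form $\sum_n n^{-2}k^{-1}\bigl[\log 2-\tfrac12(\Hgen_{kn/2}-\Hgen_{kn/4})\bigr]$, which is the rational-argument Euler sum referred to in the text.

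\emph{Roots-of-unity evaluation.} To evaluate $S(k)$, I will use $1/(2m+kn)=\int_0^1u^{2m+kn-1}\,\rd u$, which gives
\[
S(k)=-\int_0^1\log(1-u^k)\log(1+u^2)\,\frac{\rd u}{u}.
\]
Then I will factor $1-u^k=\prod_{j=0}^{k-1}(1-\varpi_k^ju)$ and $1+u^2=(1-\ii u)(1+\ii u)$. This reduces $S(k)$ to a finite sum of the integrals
\[
I(\alpha,\beta)=\int_0^1\log(1-\alpha u)\log(1-\beta u)\,\rd u/u ,\qquad \alpha=\varpi_k^j,\ \beta=\pm\ii .
\]
For these I will use the standard closed form of $I(\alpha,\beta)$ in terms of $\Li_3$ evaluated at $\alpha$, $\beta$, $\alpha/\beta$, $\alpha/(\alpha-\beta)$ and $\beta/(\beta-\alpha)$, plus products of $\Li_2$ and $\log$ terms. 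Taking the real part and summing over $j$ and over $\beta=\pm\ii$, the $\Li_3$ values on the unit circle combine into $\zeta(3)$ times rational numbers. The $\Li_2\cdot\log$ products, with the logarithm of a unit-modulus argument equal to $\ii$ times a rational multiple of $\pi$, supply exactly the structures $j\,\Cltwo(2\pi j/k)$, $j\,\Cltwo(4\pi j/k)$ and $G=\Cltwo(\pi/2)$. The explicit prefactor $2k/\pi$ then cancels the $\pi$ coming from the argument, which explains the linear weights $j$.

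\emph{Case split and main obstacle.} The parity and mod-$4$ structure enter through coincidences among the arguments. Specifically, whether $\pm\ii$ or $-1$ is a $k$-th root of unity changes which $\alpha/\beta$ become $1$ or roots of unity of order dividing $4k$. When $\alpha=\beta$, the general formula degenerates and must be replaced by the diagonal evaluation $\int_0^1\log^2(1-\alpha u)\,\rd u/u=2\Li_3(\alpha)-\ldots$. I will therefore treat $k$ odd, $k\equiv2\pmod4$ and $k\equiv0\pmod4$ separately. The last two cases merge into \eqref{eq:clausen-even} with $k=2p$ through the $(-1)^p$ factors.

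The main obstacle is the bookkeeping that collapses the resulting $\Cltwo$ values at arguments $\pi(4j\pm k)/(2k)$ into the stated sums. For this I will use the duplication relation $\Cltwo(2\theta)=2\Cltwo(\theta)-2\Cltwo(\pi-\theta)$ and the reflection $\Cltwo(2\pi-\theta)=-\Cltwo(\theta)$ to move every argument onto the grids $2\pi j/k$ and $4\pi j/k$. I will then reindex $j\to k-j$ to turn the weights into $j$. This folding step is where the parity-dependent coefficients $\tfrac38(1-(-1)^j)$, $\tfrac14(-1)^j$ and $\tfrac12(3-(-1)^{p+j})$ are produced. I would cross-check it first against $k=1,2,4$, where the target values are $-G-5\zeta(3)/(8\pi)$, $-5\zeta(3)/(2\pi)$ and $-3\zeta(3)/\pi$.
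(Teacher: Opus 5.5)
Your convergence argument, the value $-\zeta(3)/k^2$ of the $x$ piece, the double series $S(k)$, and the identity $S(k)=-\int_0^1\log(1-u^k)\log(1+u^2)\,\rd u/u$ are all correct and coincide with the paper's reduction. This last identity is the double-logarithm representation of Proposition~\ref{prop:double-log}. Your aside on summing over $m$ first is wrong, although you never use it. Since $\sum_{m\ge1}(-1)^{m+1}\bigl[1/m-1/(m+z)\bigr]=\Hgen_z-\Hgen_{z/2}$, one gets $S(k)=k^{-1}\sum_n n^{-2}\bigl(\Hgen_{kn/2}-\Hgen_{kn/4}\bigr)$, with no $\log 2$ and no factor $\tfrac12$. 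Your version fails the check $S(k)\sim\zeta(2)\log2/k$ as $k\to\infty$.

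The genuine gap is the evaluation step. It carries the whole content of the theorem, and you only assert it. Splitting $\log(1+u^2)=\mathrm{Log}(1-\ii u)+\mathrm{Log}(1+\ii u)$ commits you to a two-variable closed form for $I(\alpha,\beta)$ at $|\alpha|=|\beta|=1$. The collapse mechanism you describe does not match what such formulas contain.

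First, two of your $\Li_3$ arguments, $\alpha/(\alpha-\beta)=(1-\beta/\alpha)^{-1}$ and $\beta/(\beta-\alpha)$, lie on the line $\re z=\tfrac12$. Writing $\e^{\ii\psi}=\beta/\alpha$, their modulus is $1/(2|\sin(\psi/2)|)$, which exceeds one for many $j$. They are not unit-circle values, so no root filter removes them. They can only be traded for $\Li_3(\beta/\alpha)$ via $\Li_3(z)+\Li_3(1-z)+\Li_3(1-1/z)=\zeta(3)+\cdots$, whose elementary terms are cubic in $\log(1-\beta/\alpha)$ and $\log(\beta/\alpha)$.

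Second, the logarithms multiplying the $\Li_2$ terms cannot all be logarithms of unit-modulus numbers. $I(\alpha,\beta)$ has branch points at $\alpha=1$ and $\beta=1$; for instance $\partial_\alpha I=\log(1-\beta)\log(1-\alpha)/\alpha+O(1)$ near $\alpha=1$. Hence factors such as $\log(1-\alpha)$ and $\log(1-\beta)$ appear, with nonzero real parts $\log|2\sin(\pi j/k)|$ and $\tfrac12\log 2$.

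The actual proof therefore requires three things you have not done:
\begin{enumerate}
\item Show that all these real-logarithm and cubic-log terms cancel after summing over $j$ and $\pm$.
\item Fix the branches of the continued $\Li_3$, $\Li_2$ and $\log$ uniformly in $j$.
\item Handle $j=0$, where $\log(1-\alpha)$ diverges, in addition to your diagonal case $\alpha=\beta$.
\end{enumerate}
Calling this bookkeeping leaves the parity-dependent coefficients underived.

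The paper avoids all of this.
\begin{enumerate}
\item For even $k=2p$ it substitutes $v=u^2$, turning the integral into $\tfrac12\int_0^1\log(1-v^p)\log(1+v)\,\rd v/v$. No $\pm\ii$ and no mod-$4$ split ever arise.
\item It factors only $\log(1-u^p)$ and keeps $\log(1+u^a)$, $a\in\{1,2\}$, intact. This reduces everything to the one-variable kernel $\mathcal Q_a(\theta)=\re\int_0^1\mathrm{Log}(1-\e^{\ii\theta}u)\log(1+u^a)\,\rd u/u$ at $\theta=2\pi r/p$.
\item The second derivative $\mathcal Q_a''$ is elementary after one integration by parts. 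It matches the second derivative of an explicit combination of $\cC_3$, $\Cltwo$ and linear-times-$\Cltwo$ terms.
\item The integration constants are fixed by $\mathcal Q_a'(\pi)=0$ and the vanishing mean of the cosine series over $[0,\pi]$ (Propositions~\ref{prop:K1} and~\ref{prop:K2}).
\item The root-of-unity filter evaluates the $\cC_3$ and shifted-Clausen sums. A finite reindexing then yields the weights $\tfrac12(3-(-1)^{p+j})$, $\tfrac38(1-(-1)^j)$ and $\tfrac14(-1)^j$.
\end{enumerate}
If you keep your route, you must supply a branch-checked formula for $\re I(\e^{\ii\theta},\e^{\ii\phi})$ valid for all $\theta,\phi$. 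That amounts to redoing the kernel computation with the second angle left free.
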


Appendix~\ref{app:proof} gives a self-contained proof. The first step reduces
\eqref{eq:CW} exactly to the convergent alternating Euler series
\be
\Cint(k)
=
-\frac{k^{2}\zeta(3)}{16\pi}
-\frac{2\zeta(3)}{\pi k}
-\frac{k}{\pi}
\sum_{m\geq1}
\frac{(-1)^{m+1}}{m^{2}}\,
\Hgen_{2m/k}\,.
\label{eq:CEuler}
\ee
For integer $k$, the rational-argument harmonic sum is then evaluated by a
root-of-unity filter. The odd- and even-level evaluations require separate
identities, stated in Theorem~\ref{thm:Euler}. That theorem evaluates the sums
$\cE_{1,p}$ and $\cE_{2,p}$ of Eq.~\eqref{eq:E-def} in closed form and is an
independent arithmetic result, with no input from the Bethe-vacuum data. At
$k=2$, the series reduces to a classical alternating harmonic sum and the
result follows directly.

The function $\Cint(k)$ is also related to the constant map
$A(k)$ of the ABJM $S^3$ partition function
\cite{Marino:2011eh,Hanada:2012si,Hatsuda:2015owa,Hatsuda:2014vsa},
\be
A(k)
=
\frac{2\zeta(3)}{\pi^{2}k}
\left(1-\frac{k^{3}}{16}\right)
+
\frac{k^{2}}{\pi^{2}}
\int_{0}^{\infty}
\frac{x\log(1-\e^{-2x})}{\e^{kx}-1}\,
\rd x\,,
\qquad
k>0\,.
\label{eq:Ak}
\ee
Proposition~\ref{prop:CA} proves that
\be
\frac{\rd}{\rd k}
\left[
\frac{\Cint(k)}{k}
\right]
=
\frac{2\pi}{k^{2}}
\left[
A\!\left(\frac{k}{2}\right)-A(k)
\right]
-\frac{\zeta(3)}{4\pi}\,,
\qquad
k>0\,.
\label{eq:CAw}
\ee
Thus, the level derivative of $\Cint(k)/k$ is expressed as a finite difference
of the sphere constant map. The same pair $A(k)$ and $A(k/2)$ appears in the
constant term of the index below. Equation~\eqref{eq:CAw} is an analytic
relation among constant-map contributions, not an identity between the full
observables.

A direct numerical evaluation provides an independent check of
Theorem~\ref{thm:C}. At $70$-digit working precision, the finite Clausen
formula~\eqref{eq:clausen-odd}--\eqref{eq:clausen-even}, the Euler
series~\eqref{eq:CEuler}, and the integral~\eqref{eq:CW}
agree for every $1\leq k\leq50$; the largest pairwise discrepancy is
$6.4\times10^{-69}$ and occurs at $k=45$, as shown in
Figure~\ref{fig:constant_maps}(b). This check uses no Bethe-vacuum data.

The comparison with the computed plateaux provides a separate test of the
identification of this analytic constant with the finite-rank observable.
After subtracting the leading term in~\eqref{eq:Wparent} and the closed
constant map, the residuals at the largest available ranks are
$1.3\times10^{-107}$ at $(k,N_{\max})=(1,800)$,
$9.2\times10^{-37}$ at $(5,500)$, and
$7.4\times10^{-9}$ at $(50,500)$. Their growth with $k$ follows from the
increase of the physical nome at the largest available rank and does not
indicate a loss of accuracy in the analytic constant-map identity.

\begin{figure}[t]
\centering
\begin{minipage}[t]{0.485\textwidth}
\centering
\textbf{(a)}\\[1pt]
\includegraphics[width=\linewidth]{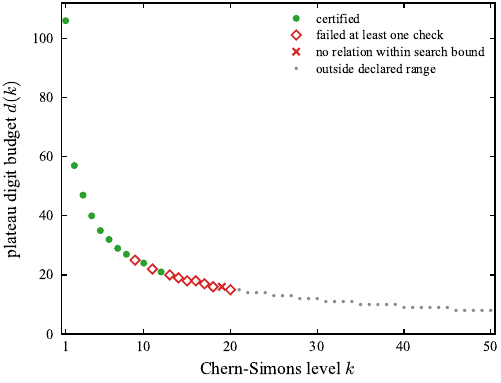}
\end{minipage}
\hfill
\begin{minipage}[t]{0.485\textwidth}
\centering
\textbf{(b)}\\[1pt]
\includegraphics[width=\linewidth]{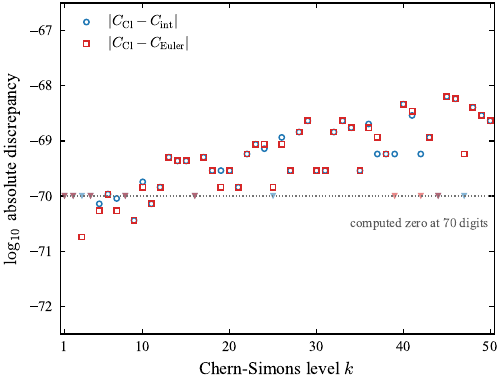}
\end{minipage}
\caption{
(a) The conservative plateau digit budget $d(k)$ decreases from $106$ digits
at $k=1$ to $8$ digits at $k=50$. Each marker represents one integer
Chern--Simons level. Filled circles denote relations that satisfy the four
robustness checks of the companion Letter~\cite{Hosseini:2026dkj} together
with the additional three-point plateau-tail criterion. Open diamonds denote
attempted candidates rejected by at least one required check. The cross at
$k=19$ indicates that no relation was found within the PSLQ search bound, and
grey points lie outside the declared levelwise search range $k\leq20$.
(b) The discrepancies between the finite Clausen
forms~\eqref{eq:clausen-odd}--\eqref{eq:clausen-even} and the Euler and
integral representations~\eqref{eq:CEuler} and~\eqref{eq:CW} are shown for
$1\leq k\leq50$ at $70$-digit working precision. Downward triangles mark
levels at which the computed discrepancy vanishes at this precision and is
placed at the censoring threshold. This panel tests the analytic identity of
Theorem~\ref{thm:C} without Bethe-vacuum data.}
\label{fig:constant_maps}
\end{figure}

\subsection{Type-IIA genus expansion}
\label{sec:Wgenus}

At fixed 't~Hooft coupling $\lambda=N/k$, the perturbative
parent~\eqref{eq:Wparent} admits the type-IIA genus expansion
\be
\im\Wt(N,k)\big|_{\rm pert}
\sim-\sum_{h\ge0}(2\pi\ii\lambda)^{2h-2}\,\cW_h(\lambda)\,N^{2-2h}\,,
\label{eq:Wgenus}
\ee
where $h$ denotes the type-IIA genus. Using
$\Nh=k(\lamh+2/(3k^2))$ and expanding at large $k$, one finds
\be
\cW_0(\lambda)
=
\frac{2\sqrt2\pi^{4}}{3}\lamh^{3/2}
-\frac{\pi\zeta(3)}{4}\,,
\qquad
\cW_1(\lambda)
=
-\frac{\pi^{2}}{3\sqrt2}\lamh^{1/2}
+\frac{\pi}{3}\log2\,,
\label{eq:W01}
\ee
and, for every $h=n+1\geq2$,
\bea
\cW_{n+1}(\lambda)
&=
\frac{(-1)^{n+1}\sqrt2}
{2^{n}3^{n+2}\pi^{2n-2}}
\binom{3/2}{n+1}
\lamh^{\frac12-n}
+ \omega_n \,,\\
\omega_n &= -
\frac{
2^{2n+1}(2^{2n}-1)\pi
\lvert \mathrm B_{2n}\mathrm B_{2n+2}\rvert
}{
n(2n+2)!
} \, .
\label{eq:Wgen}
\eea
For example,
\be
\cW_2(\lambda)
=
\frac{\sqrt2}{144}\lamh^{-1/2}
-\frac{\pi}{180}\,,
\qquad
\cW_3(\lambda)
=
\frac{\sqrt2}{5184\pi^{2}}\lamh^{-3/2}
-\frac{\pi}{3780}\,.
\ee

The two terms in~\eqref{eq:Wgen} have distinct origins. All dependence on
$\lamh$ follows from the binomial expansion
\be
 \left(\lamh+\frac{2}{3k^2}\right)^{3/2}
 =\sum_{r\ge0}\binom{3/2}{r}\lamh^{3/2-r}
 \left(\frac{2}{3k^2}\right)^r .
\ee
The genus-dependent constant $\omega_n$ descends from the constant map. After rescaling
$y=kx$ in~\eqref{eq:CW}, the expansion
\be
 \log(2\cosh z)=\log2+
 \sum_{n\ge1}\frac{2^{2n}(2^{2n}-1)\mathrm B_{2n}}{2n(2n)!}z^{2n} \, ,
\ee
together with
\be
 \int_0^\infty y^{2n}\log(1-\e^{-y})\,\rd y
 =-(2n)!\zeta(2n+2) \, ,
\ee
yields the large-$k$ asymptotic series in even powers of $1/k$. One Bernoulli
number arises from the Taylor expansion of $\log(2\cosh z)$ and the second
from the special value
\be
\zeta(2n+2)=\frac{(2\pi)^{2n+2}|\mathrm B_{2n+2}|}{2(2n+2)!} \, .
\ee
The integral identity follows by
expanding $-\log(1-\e^{-y})=\sum_{r\ge1}\e^{-ry}/r$, applying
Tonelli's theorem to the nonnegative exponential terms, and
evaluating
\be
\int_0^\infty y^{2n}\e^{-ry}\,\rd y=\frac{(2n)!}{r^{2n+1}} \, .
\ee

Since
$\lvert B_{2n}\rvert\sim2(2n)!/(2\pi)^{2n}$, the
$\lamh$-independent term in~\eqref{eq:Wgen} grows factorially with $n$.
The type-IIA genus expansion is therefore asymptotic. At fixed $(k,\lambda)$,
its truncation error decreases to an optimal order and then grows. Effects
exponentially small in the type-IIA string coupling lie beyond this
perturbative expansion.

Figure~\ref{fig:genusW} compares the truncated series with the closed
finite-rank parent for the first three cases listed below. At $\lambda=10$,
the optimal truncation occurs at $h=9$ for $k=10$, with error
$6.0\times10^{-8}$; at $h=16$ for $k=20$, with error $6.4\times10^{-15}$;
at $h=24$ for $k=30$, with error $7.9\times10^{-22}$; and at $h=48$ for
$k=60$, with error $1.9\times10^{-42}$.

\begin{figure}[t]
\centering
\includegraphics[width=0.67\linewidth]{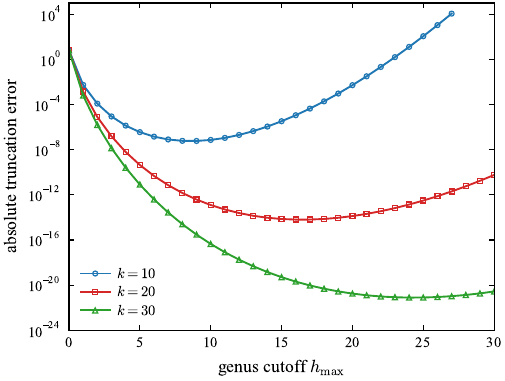}
\caption{Absolute truncation error of the type-IIA genus expansion of
$\im\Wt$ at $\lambda=10$ for the indicated values of $k$. The horizontal
axis is the integer genus cutoff $h_{\max}$. Each curve reaches a minimum at
finite order, as expected from the factorial growth of $\omega_n$ in
Eq.~\eqref{eq:Wgen}. The perturbative parent is evaluated directly, so the
comparison tests the large-$k$ coefficients derived from
Eqs.~\eqref{eq:Wparent} and~\eqref{eq:CW}.}
\label{fig:genusW}
\end{figure}

\subsection{The exponentially small tail}
\label{sec:Weichler}

At the three levels for which deep data are available, $k=1,2,4$, a
rank-differenced peeling procedure reconstructs, respectively,
\be
M_1^{\scriptscriptstyle(\mathcal W)}=51\,,
\qquad
M_2^{\scriptscriptstyle(\mathcal W)}=67\,,
\qquad
M_4^{\scriptscriptstyle(\mathcal W)}=93
\ee
rational exponential sectors. Through these orders, the data satisfy
\bea
\im\Wt(N,k)
&=\frac{\pi^{2}}{3}\sqrt{\frac{k}{2}}\,\Nh^{3/2}+\CW(k)
 +\frac{k^{2}}{2\pi}\sum_{m=1}^{M_k^{\scriptscriptstyle(\mathcal W)}}c^{(k)}_m
 \left(t+\frac1m\right)q^m
 \\
 &\quad
+\text{higher instanton orders } (m > M_k^{\scriptscriptstyle(\mathcal W)})\,,
\label{eq:Wfull}
\eea
where every reconstructed coefficient agrees with
\be
c_m^{(k)}
=
\frac{(-1)^{m+1}}{m^{2}}
\begin{cases}
5\left[
\sigma_3(m)-4\sigma_3(m/2)
\right],
& k=1,2,
\\[3pt]
\sigma_3(m)-16\sigma_3(m/4)\,,
& k=4.
\end{cases}
\label{eq:cm}
\ee
As usual, the convention $\sigma_3(x)=0$ for
$x\notin\mathbb Z_{>0}$ is understood.

We conjecture that the reconstructed coefficient formula continues to all
orders, giving
\be
\mathcal W_{\rm np}(N,k)
=
\frac{k^{2}}{2\pi}
\sum_{m\geq1}
c_m^{(k)}
\left(
t+\frac{1}{m}
\right)q^m\,.
\label{eq:Wnp}
\ee
The finite-order statement~\eqref{eq:Wfull}, including all instanton
orders $m\leq M_k^{\scriptscriptstyle(\mathcal W)}$, is reconstructed from
the computed data, while~\eqref{eq:Wnp} is the proposed continuation to
all $m\geq1$. Neither the combination $t+1/m$ nor the overall factor
$k^2/(2\pi)$ is imposed in the sector reconstruction. The design matrix
contains independent columns $q^m$ and $tq^m$, and the normalization is
selected from the cross-level family specified in
Section~\ref{sec:method}.

The divisor combinations in~\eqref{eq:cm} acquire a modular interpretation
only after the coefficients have been reconstructed. Work first in the shifted
nome $Q=-q$. Using \eqref{eq:E2-E4}, one has
\be
E_4(Q)-L^{2}E_4(Q^L)
=
1-L^{2}
+
240\sum_{m\geq1}
\left[
\sigma_3(m)
-
L^{2}\sigma_3\!\left(\frac{m}{L}\right)
\right]Q^m\,.
\label{eq:E4}
\ee
For $L=2$ and $L=4$, the bracketed coefficients reproduce the two divisor
factors in~\eqref{eq:cm}. The alternating sign is accounted for below when the
shifted nome is evaluated at $Q=-q$ and the overall signs of the physical
series are included. Thus, in the shifted $Q$-coordinate, the $k=1,2$
sequence is associated with the level-two modular form
\be
E_4(Q)-4E_4(Q^2)\in M_4(\Gamma_0(2))\,,
\ee
whereas the $k=4$ sequence is associated with the level-four modular form
\be
E_4(Q)-16E_4(Q^4)\in M_4(\Gamma_0(4))\,.
\label{eq:E4levelsQ}
\ee
Here ``level two'' and ``level four'' refer to modular level in the shifted
$Q$-coordinate. The spaces $M_4(\Gamma_0(2))$ and
$M_4(\Gamma_0(4))$ have dimensions $2$ and $3$, respectively. Their
cusp-form subspaces vanish~\cite{diamond2005first}.
The reconstructed divisor sequences
therefore select particular Eisenstein elements of these shifted-nome modular
spaces.

The same sequences have a different modular description when written in the
physical nome $q$. The half-period shift
$Q=-q=\e^{2\pi\ii(\tau+1/2)}$ changes the apparent modular level. Using
\cite{Dorigoni:2020oon}
\be
E_4(-q)
=
-E_4(q)+18E_4(q^2)-16E_4(q^4)\,,
\label{eq:E4halfperiod}
\ee
one obtains
\bea
E_4(-q)-4E_4(q^2)
&=
-E_4(q)+14E_4(q^2)-16E_4(q^4)\,,
\\
E_4(-q)-16E_4(q^4)
&=
-E_4(q)+18E_4(q^2)-32E_4(q^4)\,.
\label{eq:E4physical}
\eea
Consequently, the $k=1,2$ sequence has a level-two representation in the
shifted nome $Q$ and a level-four representation in the physical nome $q$,
whereas the $k=4$ sequence remains at level four in both coordinates. For
$k=4$, the half-period shift changes the Eisenstein combination but not the
modular subgroup, since all terms $E_4(q^d)$ with $d\mid4$ are forms on
$\Gamma_0(4)$. We retain the shifted nome $Q$ in the Eichler construction
below because it keeps the correspondence with the divisor combinations
in~\eqref{eq:cm} direct.

The form of the nonperturbative sectors naturally leads to an Eichler
primitive, as seen from
\begin{equation*}
\frac{1}{m^2}\left(t+\frac1m\right)
=
\frac{t}{m^2}+\frac{1}{m^3}\,,
\end{equation*}
and, with the logarithmic derivative $D$ defined in~\eqref{eq:Ddef}, a
coefficientwise third primitive of a weight-four Fourier series produces the
factor $m^{-3}$, while
the operator $tD$ restores the $t/m^2$ term. To make this explicit, define the
Lambert series
\be
\cL_3(Q)
=
\sum_{n\geq1}\Li_3(Q^n)
=
\sum_{m\geq1}\frac{\sigma_3(m)}{m^3}Q^m\,,
\qquad
|Q|<1\,.
\ee
The second equality follows by expanding
$\Li_3(Q^n) = \sum_{r \geq 1} Q^{n r}/r^3$ and collecting $m = n r$:
\begin{equation*}
\sum_{n \mid m} \frac{1}{(m/n)^3} = \frac{1}{m^3}\sum_{n \mid m} n^{3} = \frac{\sigma_3(m)}{m^3}\,.
\end{equation*}
It satisfies
\be
D^3\cL_3(Q)=\frac{E_4(Q)-1}{240}\,.
\ee
After removing the constant Fourier mode, the inverse-derivative convention
in~\eqref{eq:Dinverse} defines the third Eichler primitive of the nonconstant
Fourier part of the weight-four modular form
$E_4(Q)-L^2E_4(Q^L)$ on $\Gamma_0(L)$ by~\cite{pacsol2013modular}
\be
\widetilde E_L(Q)
\equiv
D^{-3}
\left[
E_4(Q)-L^2E_4(Q^L)-(1-L^2)
\right]
=
240
\left[
\cL_3(Q)
-\frac{1}{L}\cL_3(Q^L)
\right].
\label{eq:Etilde}
\ee

The conjectural continuation~\eqref{eq:Wnp} can then be written as
\bea
\mathcal W_{\rm np}(N,k)
&=
(1+tD)\Phi_k(q)\,,
\\
\Phi_1(q)
&=
-\frac{1}{96\pi}\widetilde E_2(-q)\,,
\qquad
\Phi_2(q)=4\Phi_1(q)\,,
\qquad
\Phi_4(q)
=
-\frac{1}{30\pi}\widetilde E_4(-q)\,.
\label{eq:Eichler}
\eea
By Eq.~\eqref{eq:Ddef}, the same operator acts as $D=q\partial_q$ on
functions of $-q$. Equation~\eqref{eq:Eichler} is then an
identity coefficient by coefficient once~\eqref{eq:cm} is assumed. The
substitution $Q=-q$ supplies the factor $(-1)^m$, while the overall signs in
$\Phi_k$ produce the alternating factor $(-1)^{m+1}$ appearing
in~\eqref{eq:cm}.

The candidate coefficients grow at most linearly. The inequality
$\sigma_3(m)<\zeta(3)m^3$, together with their explicit expressions,
implies $\lvert c_m^{(k)}\rvert\le \mathsf C_k m$ for some
constant $\mathsf C_k>0$. This gives the upper bound
\be
\limsup_{m\to\infty}\lvert c_m^{(k)}\rvert^{1/m}
\le
\lim_{m\to\infty}(\mathsf C_km)^{1/m}
=1\,.
\ee
For the matching lower bound, let $\mathfrak p$ run over the odd primes. Along this
subsequence,
\be
c_\mathfrak p^{(1)}=c_\mathfrak p^{(2)}
=
5\left(\mathfrak p+\mathfrak p^{-2}\right),
\qquad
c_\mathfrak p^{(4)}
=
\mathfrak p+\mathfrak p^{-2},
\ee
and therefore
\be
\limsup_{m\to\infty}\lvert c_m^{(k)}\rvert^{1/m}
\ge
\lim_{\substack{\mathfrak p\to\infty\\
                 \mathfrak p\ \mathrm{odd\ prime}}}
\lvert c_{\mathfrak p}^{(k)}\rvert^{1/\mathfrak p}
=1\,.
\ee
Combining the two bounds yields
\be
\limsup_{m\to\infty}
\big|c_m^{(k)}\big|^{1/m}
=1\,.
\label{eq:unitradius}
\ee
Since $m^{-1/m}\to1$ as $m\to\infty$, the same root-test limit holds for
$c_m^{(k)}/m$. Thus the two power series
$\sum_{m\geq1}c_m^{(k)}q^m$ and
$\sum_{m\geq1}c_m^{(k)}q^m/m$ appearing in~\eqref{eq:Wnp}
both have unit radius of convergence. Since $t=-\log q$, the complete
expression is well defined after choosing a branch of the logarithm on a
simply connected subdomain of $0<|q|<1$, but does not extend holomorphically
to $q=0$.

\begin{figure}[t]
\centering
\includegraphics[width=0.67\linewidth]{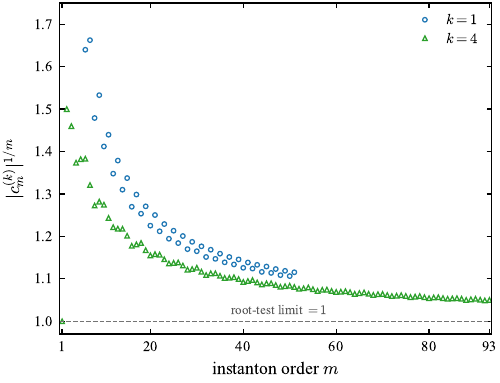}
\caption{Root-test values $|c_m^{(k)}|^{1/m}$ through $m=51$ at $k=1$ and
through $m=93$ at $k=4$. For $k=1$, the values at $m=1,\ldots,5$ lie above
the displayed vertical range. The $k=2$ coefficients coincide with the $k=1$
sequence through their $51$ common reconstructed orders and reproduce
Eq.~\eqref{eq:cm} through all $67$ orders reconstructed at $k=2$. They are
omitted to avoid overplotting. The data approach the unit-radius limit implied
by the divisor-sum continuation~\eqref{eq:unitradius}. This provides a
finite-depth diagnostic of the conjectural all-order formula.}
\label{fig:growthW}
\end{figure}

The coincidence of the reconstructed $k=1$ and $k=2$ sequences is suggestive
in view of the $\cN=8$ enhancement, although the present reconstruction does
not derive it from supersymmetry. The $k=4$ sequence is comparably simple
without a corresponding enhancement. In the shifted nome $Q=-q$, its divisor
law is the nonconstant Fourier part of the level-four form
$E_4(Q)-16E_4(Q^4)$, and the candidate exponential tail is obtained from the
corresponding third Eichler primitive. In the physical nome $q$, the same
sequence is represented by the level-four combination
$-E_4(q)+18E_4(q^2)-32E_4(q^4)$. The absence of cusp forms in
$M_4(\Gamma_0(4))$ strongly constrains this modular representation once the
divisor law is known; it does not explain why the selected Bethe-vacuum
observable realizes that law. The present deep reconstruction is restricted to $k=1,2,4$, so whether
comparable divisor-sum closures occur at other levels remains open.
Deriving the coefficient law from the Bethe equations, or identifying a
symmetry or geometric mechanism responsible for the $k=4$ simplification,
would clarify whether it is isolated or belongs to a broader non-$\cN=8$
arithmetic pattern among these Bethe observables.

\section{The Cardy limit of the superconformal index}
\label{sec:Cardy}

Evaluated on the same Bethe solution, the effective twisted
superpotential of Section~\ref{sec:W} also controls the leading term in the
Cardy limit of the superconformal index. Let
$\mathsf x$ denote the angular fugacity and set
\be
\mathsf x=\e^{\ii\pi\omega}=\e^{-\beta}\,,
\qquad
\beta=-\ii\pi\omega\in\mathbb R_{>0}\,.
\label{eq:SCIvariables}
\ee
The fugacities $y_a=\e^{\ii\Delta_a}$ are held fixed at the universal point
$\Delta_a=\pi/2$. In the finite-$N$ Cardy/Bethe relation below, the
$\beta$-independent term is the contribution of the same Bethe vacuum to the
topologically twisted index at the associated universal sphere fluxes
$\mathfrak n_a=1/2$.

\subsection{The Cardy coefficient}
\label{sec:CardyA}

Let $Z_{\rm vac}(N,k)\equiv Z_{\rm data}(N,k)$ denote the contribution of the selected Bethe vacuum to
the genus-zero topologically twisted index, and define its real free energy by%
\footnote{The free energy $F(N,k)$ of the full $k$-fold $\mathbb Z_k$ orbit,
defined in~\eqref{eq:orbit}, is related to the single-vacuum quantity by
$F(N,k)=F_{\rm vac}(N,k)-\log k$.}
\be
F_{\rm vac}(N,k)
\equiv
-\re\log Z_{\rm vac}(N,k)
=
-\log\big|Z_{\rm vac}(N,k)\big|\,.
\label{eq:Fvac}
\ee
Let $\cI_{\rm vac}(\omega;N,k)$ denote the contribution to the
superconformal index associated with the same Bethe vacuum. The finite-$N$
Bethe formulation gives
\cite{Choi:2019dfu,Bobev:2022wem,Bobev:2024mqw}
\bea
\log\cI_{\rm vac}(\omega;N,k)
&=
-\frac{1}{\pi\omega}\,\im\Wt(N,k)
-F_{\rm vac}(N,k)
+\ii\phi
+O(\omega)
\\
&=
\frac{\ii}{\beta}\,\mathcal C(N,k)
-F_{\rm vac}(N,k)
+\ii\phi
+O(\beta)\,,
\label{eq:CardyA}
\eea
as $\omega\to\ii0^+$, or equivalently $\beta\to0^+$. Here
\be
\mathcal C(N,k)
\equiv
\im\Wt(N,k)>0
\label{eq:CardyC}
\ee
on the selected branch. The quantity
$\phi\in\mathbb R/(2\pi\mathbb Z)$ denotes the remaining
$\beta$-independent phase. We keep it explicit because its value can depend
on localization conventions and logarithm branches.

The strict Cardy factor associated with this saddle is the local expression
\be
\cI_{\rm C}(\beta;N,k)
\equiv
\exp\!\left(\frac{\ii\mathcal C(N,k)}{\beta}\right).
\label{eq:Cardybeta}
\ee
It retains only the leading $1/\beta$ term in~\eqref{eq:CardyA}. On the
positive real $\beta$ ray this factor is a pure phase, so the leading Cardy
term does not by itself produce exponential growth of $|\cI_{\rm C}|$ along that ray.

At $k=1,2,4$, the Cardy coefficient is the same on-shell quantity reconstructed
in Section~\ref{sec:Weichler}. Adopting the all-order conjecture
\eqref{eq:Wnp}, we write
\be
\mathcal C(N,k)
=
\frac{\pi^{2}}{3}\sqrt{\frac{k}{2}}\,\Nh^{3/2}
+\CW(k)
+\frac{k^{2}}{2\pi}
\sum_{m\geq1}
c_m^{(k)}
\left(t+\frac1m\right)q^m \, ,
\label{eq:Aexact}
\ee
where the coefficients $c_m^{(k)}$ are given in~\eqref{eq:cm}. The series is
absolutely convergent for $0<q<1$.

\subsection{Large-charge coefficients}
\label{sec:CardyInv}
\label{sec:CardyLimits}

At fixed $y_a$, choose the normalization of $\mathsf x$ so that its conjugate
charge has unit spacing, and write the complete superconformal index as
\be
\cI_{\rm full}(\mathsf x;N,k)
=
\sum_{\ell\geq0}
a(\ell;N,k)\,\mathsf x^\ell\,.
\label{eq:Iexpansion}
\ee
Other charge normalizations can be treated by first passing to a
uniformizing fugacity. Let $\Gamma_{\mathsf x}$ be a positively oriented contour about
$\mathsf x=0$. Cauchy's formula reads
\be
a(\ell;N,k)
=
\frac{1}{2\pi\ii}
\oint_{\Gamma_{\mathsf x}}
\frac{\rd\mathsf x}{\mathsf x^{\ell+1}}\,
\cI_{\rm full}(\mathsf x;N,k)\,.
\label{eq:CauchySCI}
\ee

The strict Cardy factor~\eqref{eq:Cardybeta} is a local approximation near
$\mathsf x=1$. We therefore define $a^{\rm C}(\ell;N,k)$ as the local contribution obtained by
replacing the full integrand in~\eqref{eq:CauchySCI} by the strict Cardy factor
near $\beta=0$.

This step requires analytic assumptions. We assume that, on a chosen branch of
$\beta=-\log\mathsf x$, the Cardy expansion can be continued into a sector
containing the first-quadrant saddle found below, and that a corresponding
local cycle can be reached by deforming the coefficient contour without
crossing an obstruction. We also assume that the local Cardy approximation remains valid
along the relevant part of the deformed cycle. Using
$\rd\mathsf x=-\mathsf x\,\rd\beta$, and denoting the correspondingly oriented
local contour in the $\beta$-plane by $\Gamma_\beta$, gives
\be
a^{\rm C}(\ell;N,k)
=
\frac{1}{2\pi\ii}
\int_{\Gamma_\beta}
\rd\beta\,
\exp\!\left(
\ell\beta+\frac{\ii\mathcal C(N,k)}{\beta}
\right).
\label{eq:CardyIntegral}
\ee

For $\ell>0$, if the rescaled local contour is homologous to a positively
oriented loop about the origin, this integral evaluates to
\be
a^{\rm C}(\ell;N,k)
=
\sqrt{\frac{\ii\mathcal C(N,k)}{\ell}}\,
I_1\!\left(
2\sqrt{\ii\mathcal C(N,k)\ell}
\right).
\label{eq:degeneracy}
\ee
For a complex variable $z$ on the chosen square-root branch, the modified Bessel function is normalized here by
\be
\frac{1}{2\pi\ii}
\oint_{\mathscr C_0}
\rd s\,
\exp\!\left(s+\frac{z}{s}\right)
=
\sqrt z\,I_1(2\sqrt z)\,,
\label{eq:BesselDefinition}
\ee
where $\mathscr C_0$ encircles $s=0$ counterclockwise.

On the principal branch $\sqrt{\ii}=\e^{\ii\pi/4}$, define
\be
\chi_\ell
=
\sqrt{2\mathcal C(N,k)\ell}\,.
\label{eq:chiell}
\ee
For
\be
S(\beta)
=
\ell\beta+\frac{\ii\mathcal C(N,k)}{\beta}\,,
\ee
the first-quadrant saddle is
\be
\beta_{\rm s}
=
\e^{\ii\pi/4}
\sqrt{\frac{\mathcal C(N,k)}{\ell}}\,,
\qquad
S(\beta_{\rm s})
=
(1+\ii)\chi_\ell\,.
\label{eq:saddle}
\ee
The argument of the Bessel function is
$2\sqrt{\ii\mathcal C(N,k)\ell}=(1+\ii)\chi_\ell$, which lies in the sector where
the standard exponentially growing asymptotic of $I_1$ applies. Hence, at
fixed $(N,k)$ and as $\ell\to\infty$,
\be
a^{\rm C}(\ell;N,k)
\sim
\frac{\mathcal C(N,k)^{1/4}}
{2\sqrt{\pi}\,\ell^{3/4}}\,
\e^{\chi_\ell}\,
\e^{\ii(\chi_\ell+\pi/8)}
\left[1+O(\chi_\ell^{-1})\right].
\label{eq:degeneracyAsymptotic}
\ee
Its real part has the additive asymptotic expansion
\be
\re a^{\rm C}(\ell;N,k)
=
\frac{\mathcal C(N,k)^{1/4}}
{2\sqrt{\pi}\,\ell^{3/4}}\,
\e^{\chi_\ell}
\left[
\cos\!\left(\chi_\ell+\frac{\pi}{8}\right)
+O(\chi_\ell^{-1})
\right].
\label{eq:degeneracyReal}
\ee
The error is written additively because a relative estimate is not uniform
near the zeros of the cosine.

Self-consistency of the Cardy approximation at the saddle requires
$|\beta_{\rm s}|\ll1$, or
\be
\ell\gg\mathcal C(N,k)\,.
\label{eq:regime}
\ee
Since $\mathcal C(N,k)\sim N^{3/2}$ at fixed $k$, a necessary condition for a
Cardy limit uniform in rank is
\be
\frac{\ell}{N^{3/2}}\longrightarrow\infty\,.
\ee
This condition places the saddle in the Cardy region. Sufficiency also
requires uniform control of the $N$-dependence of the terms omitted
in~\eqref{eq:CardyA}; in particular, the large-charge and large-rank limits
need not commute.

The quantity $a^{\rm C}(\ell;N,k)$ is the local contribution obtained from the
strict $1/\beta$ Cardy factor, not the coefficient of the complete index.
Restoring the $\beta$-independent term in~\eqref{eq:CardyA} multiplies
\eqref{eq:degeneracy} by
$\e^{-F_{\rm vac}(N,k)+\ii\phi}$, while terms suppressed by positive powers of
$\beta$ generate subleading large-charge corrections. Combining
Eqs.~\eqref{eq:orbit} and~\eqref{eq:Zparent}, at fixed $k$ and large
$N$ one has
\be
\e^{-F_{\rm vac}(N,k)}
=
\exp\!\left[
-\frac{\pi\sqrt{2k}}{3}\,\Nh^{3/2}
+O(\Nh^{1/2})
\right].
\ee
This factor is independent of $\ell$ at fixed $(N,k)$ but is exponentially
suppressed in rank, and it must be retained as part of the complete saddle
amplitude.

The complete coefficient contour may contain conjugate or competing saddles,
or may fail to pass through the saddle~\eqref{eq:saddle} on the chosen sheet.
If a second saddle contributes the complex conjugate of the dressed local
term, the pair gives
\be
2\re\!\left[
\e^{-F_{\rm vac}(N,k)+\ii\phi}
a^{\rm C}(\ell;N,k)
\right],
\ee
whereas other saddles may modify or cancel this contribution. Before
specializing $y_a$, the fully charge-refined coefficients are integer-valued
supertraces. At $y_a=\e^{\ii\pi/2}$, the coefficients
$a(\ell;N,k)$ become phase-weighted sums and are not guaranteed to be
non-negative or real. Equation~\eqref{eq:degeneracy} exactly evaluates
the strict-factor contour integral under the stated contour assumption; its
identification with a contribution to a coefficient of the complete index is
conditional and asymptotic. It is therefore neither an exact integer
coefficient nor a non-negative degeneracy. Related issues of contour and sheet
dependence in three-dimensional indices are discussed in
Ref.~\cite{ArabiArdehali:2025bub}.

\section{The topologically twisted index}
\label{sec:Z}

The selected-orbit index is evaluated on the same Bethe root as the twisted
superpotential and has a different finite-rank completion. At fixed level, its
perturbative parent contains three rank-dependent terms, its exponential
sectors are quadratic in the instanton action, and one of its two independent
nonperturbative generators has a modular-product candidate. The candidate
families are motivated by the reconstructed arithmetic and finite-order growth;
within them, exact low-order algebra fixes the integer parameters, and every
remaining reconstructed coefficient verifies the result. Assuming the all-order
continuation, the convergence radius and leading coefficient growth follow
analytically.

\subsection{Fixed-level parent and rank-independent term}
\label{sec:Zpert}

At fixed $k$, the perturbative parent is
\be
F(N,k)\big|_{\rm pert}
=
\frac{\pi\sqrt{2k}}{3}
\left(
\Nh^{3/2}-\frac{3}{k}\Nh^{1/2}
\right)
+\frac12\log\Nh-\fzero(k)\,.
\label{eq:Zparent}
\ee
This agrees with~\cite[Eq.~(17)]{Bobev:2022jte} after specializing that result to
$\mathfrak g=0$. Independently of that input, unrestricted four-term fits at the
representative levels $k=1,2,3,5,10,20,50$ reconstruct the functional form
directly from the data. At each of these levels, we determine the
level-dependent fit coefficients
$\mathsf F_{3/2},\mathsf F_{1/2},\mathsf F_{\log},\mathsf F_0$ from the ansatz
\be
\mathsf F_{3/2}\Nh^{3/2}
+\mathsf F_{1/2}\Nh^{1/2}
+\mathsf F_{\log}\log\Nh
+\mathsf F_0
\ee
on the four largest available ranks. At low levels the result is
\be
\mathsf F_{3/2}=\frac{\pi\sqrt{2k}}{3}\,,
\qquad
\mathsf F_{1/2}=-\pi\sqrt{\frac{2}{k}}\,,
\qquad
\mathsf F_{\log}=\frac12
\ee
to working precision. The loss of digits at larger $k$ tracks
the increasing unpeeled instanton remainder. For example,
$\mathsf F_{\log}/(1/2)=1-5.5\times10^{-11}$ at $k=20$, while
$\mathsf F_{\log}/(1/2)=1-3.6\times10^{-4}$ at $k=50$. The $\Nh^{1/2}$ and
$\log\Nh$ terms, absent from~\eqref{eq:Wparent}, are therefore resolved by the
index data alone.

References~\cite{Bobev:2022jte,Bobev:2022eus} do not fix the rank-independent term. Within
the declared six-feature search space and height bounds, its plateau is exactly
reconstructed as
\be
\fzero(k)
=
2A(k)-6A\!\left(\frac{k}{2}\right)-\frac{3\CW(k)}{\pi}
-\frac{k^{2}\zeta(3)}{2\pi^{2}}+\frac12\log k-\frac52\log2\,.
\label{eq:f0arith}
\ee
Here $A(k)$ is defined in~\eqref{eq:Ak}, and $\CW(k)$ is given by
Theorem~\ref{thm:C}. In the ordered feature basis
\be
\left\{A(k),A(k/2),\CW(k)/\pi,k^2\zeta(3)/\pi^2,\log k,\log2\right\},
\ee
the coefficient vector
\be
\left(2,-6,-3,-\frac12,\frac12,-\frac52\right)
\ee
was fixed on $k=1,\ldots,6$ and tested on nine withheld levels. The residuals
are $3.3\times10^{-29}$ at $k=7$, $1.0\times10^{-23}$ at $k=10$,
$1.9\times10^{-15}$ at $k=20$, and $5.6\times10^{-8}$ at $k=50$. Each reaches
the instanton floor at that level, so the withheld tests constrain coefficient
errors down to the unpeeled nonperturbative tail and no further.

Within this search class, Eq.~\eqref{eq:f0arith} identifies the index constant
map as a combination of the sphere constant maps at levels $k$ and $k/2$, the
twisted-superpotential constant map, and elementary terms. The declared
six-dimensional feature space is guided by structures already present in
neighboring localization observables and by the scales appearing in the
type-IIA genus expansion. The half-level entry $A(k/2)$ has two specific precedents
available before fitting the index constant. Writing the $S^3$ rank shift as
$B_{S^3}(k)=k/24+1/(3k)$ and the corresponding Bethe-observable shift as
$B_{\rm Bethe}(k)=k/24-2/(3k)$, one finds
$B_{\rm Bethe}(k)=2[B_{S^3}(k)-B_{S^3}(k/2)]$. The same pair of levels also
appears in the constant maps through Eq.~\eqref{eq:CAw}, where the level
derivative of $C_{\rm int}(k)/k$ involves $A(k)-A(k/2)$. Together, these
occurrences provide a concrete precedent for including $A(k/2)$ among the
candidate features in the search for the index plateau. The rational
coefficient vector in~\eqref{eq:f0arith} is determined from the index data.

The arithmetic expression~\eqref{eq:f0arith} also admits the one-kernel
representation
\bea
\fzero(k)
&=
-\frac{3\zeta(3)}{8\pi^{2}}k^{2}
+\frac76\log k
+f_0
+\frac{k}{\pi^{2}}
\int_{0}^{\infty}
\cK(x)\log\bigl(1-\e^{-kx}\bigr)\,\rd x\,,
\\
\cK(x)
&=
6x\tanh x
+4\log\frac{\sinh x}{x}
+4\left(\frac{x}{\tanh x}-1\right),
\\
f_0
&=
-8\zeta'(-1)-\frac{23}{6}\log2-\frac23\log\pi\,.
\label{eq:f0kernel}
\eea
Here $\zeta'(s)=\rd\zeta(s)/\rd s$. Theorem~\ref{thm:f0} proves that
\eqref{eq:f0arith} and~\eqref{eq:f0kernel} are the same function of $k$. The proof substitutes the integral
representations of $\CW$ and $A$, integrates by parts once, and evaluates three
elementary moments. The residual $1/k$ dependence cancels identically, and the
remaining constant follows from the Riemann zeta functional equation. As a
numerical check, the two formulas agree to $2.5\times10^{-119}$ at
$120$-digit working precision over fourteen levels.

\subsection{Type-IIA genus expansion}
\label{sec:Zgenus}

With the convention of~\eqref{eq:Wgenus}, let $F_h(\lambda;k)$ denote the
coefficient at type-IIA genus $h$. The perturbative parent has the asymptotic
expansion
\be
F(N,k)\big|_{\rm pert}
\sim
-\sum_{h\geq0}
(2\pi\ii\lambda)^{2h-2}
F_h(\lambda;k)N^{2-2h}\,.
\ee
Expanding~\eqref{eq:Zparent} at fixed $\lambda=N/k$ gives
\be
F_0(\lambda)
=
\frac{4\sqrt2\pi^{3}}{3}\lamh^{3/2}
+\frac32\zeta(3)\,,
\qquad
F_1(\lambda;k)
=
\frac{2\sqrt2\pi}{3}\lamh^{1/2}
-\frac12\log\lamh
+\frac23\log k
+f_0\,,
\label{eq:F01}
\ee
and, for $h=n+1\geq2$,
\bea
F_{n+1}(\lambda)
&=
\frac{1}{n\,2^{n+1}3^{n}\pi^{2n}\lamh^{n}}
+
\frac{(-1)^{n}\sqrt2}{2^{n}3^{n}\pi^{2n-1}}
\left[
\binom{1/2}{n}-\frac29\binom{3/2}{n+1}
\right]
\lamh^{\frac12-n}
+d_n\,,
\\
d_n
&=
\frac{2^{2n+2}|\mathrm B_{2n}\mathrm B_{2n+2}|}{(2n+2)!}
\left(3\cdot4^{n}-1+\frac1n\right).
\label{eq:Fgen}
\eea

The three terms in~\eqref{eq:Fgen} have distinct origins. Since $\Nh=k(\lamh+2/(3k^2))$,
the logarithmic contribution in~\eqref{eq:Zparent} expands as
\bea
\frac12\log\Nh
&=
\frac12\log k
+\frac12\log\lamh
+\frac12\log\left(
1+\frac{2}{3k^2\lamh}
\right)\\
&=
\frac12\log k
+\frac12\log\lamh
+\frac12\sum_{n\geq1}
\frac{(-1)^{n+1}}{n}
\left(
\frac{2}{3k^2\lamh}
\right)^n.
\label{eq:logNhexp}
\eea
For $h=n+1\geq2$, its term of order $k^{-2n}$ is
\be
\frac{(-1)^{n+1}2^{n-1}}
{n\,3^n\lamh^n}\,\frac{1}{k^{2n}}\,.
\ee
Using $N=\lambda k$ to convert to the normalization of the genus expansion
gives
\be
\frac{1}
{n\,2^{n+1}3^n\pi^{2n}\lamh^n}\,,
\ee
which is the first term in~\eqref{eq:Fgen}. The second comes from the large-$k$
expansion of the shifted $\Nh^{3/2}$ and $\Nh^{1/2}$ terms in
\eqref{eq:Zparent}, while $d_n$ descends from the constant map
\eqref{eq:f0kernel}.

Only $F_1$ retains explicit level dependence. The $\frac12\log k$ from
$\frac12\log\Nh$ combines with the $-\frac76\log k$ from $-\fzero(k)$ to
produce $-\frac23\log k$ in the parent and hence $+\frac23\log k$ in
$F_1$. The constants $d_n$ grow factorially, so the type-IIA genus expansion
is asymptotic. At $\lambda=10$, the optimal truncation occurs at
$h=8,16,24,47$ for $k=10,20,30,60$, with minimum errors
$8.8\times10^{-7}$, $1.9\times10^{-13}$, $3.5\times10^{-20}$, and
$1.7\times10^{-40}$, respectively. Figure~\ref{fig:genusZ} displays the first
three cases.

\begin{figure}[t]
\centering
\includegraphics[width=0.67\linewidth]{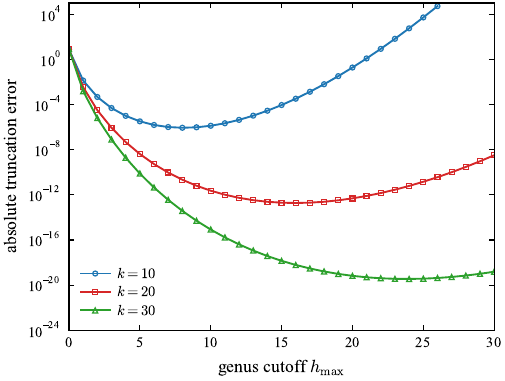}
\caption{Absolute truncation error of the type-IIA expansion of the
selected-orbit index parent at $\lambda=10$ for $k=10,20,30$, measured against
the closed expression~\eqref{eq:Zparent}. The horizontal axis is the integer
genus cutoff $h_{\max}$. Each curve reaches a finite optimal order and then
turns upward, as expected from the factorial growth of $d_n$ in
Eq.~\eqref{eq:Fgen}. The comparison tests the derived large-$k$ coefficients
and their implementation.}
\label{fig:genusZ}
\end{figure}

\subsection{Instanton sectors}
\label{sec:Zsectors}

For $k=1,2,4$, the finite-depth data obey
\bea
F(N,k)
&=
F(N,k)\big|_{\rm pert}
-
\sum_{m=1}^{M_k^{\scriptscriptstyle(F)}}
\left[
R_m^{(k)}
+\frac{k^2}{\pi^2}
\left(
A_m^{(k)}t^2
+B_m^{(k)}\left(t+\frac1m\right)
\right)
\right]q^m
\\
&\quad
+\text{higher instanton orders } (m>M_k^{\scriptscriptstyle(F)})\,,
\label{eq:Znp}
\eea
where $M_k^{\scriptscriptstyle(F)}$ denotes the reconstructed depth of the
index sectors, with
\be
(M_1^{\scriptscriptstyle(F)},M_2^{\scriptscriptstyle(F)},M_4^{\scriptscriptstyle(F)})=(17,15,15)\,.
\ee
We conjecture that the same sector structure continues to all orders,
\be
F_{\rm np}(N,k)
=
-\sum_{m\geq1}
\left[
R_m^{(k)}
+\frac{k^2}{\pi^2}
\left(
A_m^{(k)}t^2+B_m^{(k)}\left(t+\frac1m\right)
\right)
\right]q^m\,.
\label{eq:Znp-continuation}
\ee
Only the coefficients through $m=M_k^{\scriptscriptstyle(F)}$ are
reconstructed from the data. The reduction below to the two generators
$\cR_k(q)$ and $\cU_k(q)$ separates the part admitting a
modular-product candidate from the part that remains unresolved.

Three structural features of~\eqref{eq:Znp} emerge from the search itself.
First, each sector is quadratic in $t$. Fitting the first sector
with trial polynomial degrees $P=0,1,2,3$ and testing on withheld ranks gives
residual ratios $66$, $48$, $3.3\times10^{-5}$, and $3.3\times10^{-5}$ at
$k=1$, while the fitted cubic coefficient is $4.6\times10^{-215}$. Degree two
is therefore selected, and degree three adds no resolved contribution. Second,
the common normalization $k^2/\pi^2$ is selected from the declared cross-level
family, with the power of $k$ fixed across levels.

Third, the fitted linear and constant terms combine as $t+1/m$. To state this
test explicitly, write the unrestricted $m$-th sector as
\be
\left(
\mathsf f_2t^2+\mathsf f_1t+\mathsf f_0
\right)q^m\,.
\ee
Comparison with~\eqref{eq:Znp} gives
\be
A_m^{(k)}
=-\frac{\pi^2}{k^2}\mathsf f_2\,,
\qquad
B_m^{(k)}
=-\frac{\pi^2}{k^2}\mathsf f_1\,,
\qquad
R_m^{(k)}
=\frac{\mathsf f_1}{m}-\mathsf f_0\,.
\label{eq:ABRfromfit}
\ee
Thus the $t+1/m$ structure is tested by the rational reconstruction of
$\mathsf f_1/m-\mathsf f_0$. It succeeds in every resolved sector, whereas the
unconstrained constant coefficient $\mathsf f_0$ does not pass the same
rationality test.

We reconstruct $17$, $15$, and $15$ sectors at $k=1,2,4$, respectively. The
two leading sectors are shown in Table~\ref{tab:leading}; the complete set of
reconstructed coefficients is listed in Table~\ref{tab:Zsectors}.

\begin{table}[ht]
\centering\small
\renewcommand{\arraystretch}{1.2}
\begin{tabular}{@{}crrrrrr@{}}
\toprule
 & \multicolumn{3}{c}{$k=1,2$} & \multicolumn{3}{c}{$k=4$}\\
\cmidrule(lr){2-4}\cmidrule(lr){5-7}
$m$ & $A_m^{(1,2)}$ & $B_m^{(1,2)}$ & $R_m^{(1,2)}$ & $A_m^{(4)}$ & $B_m^{(4)}$ & $R_m^{(4)}$\\
\midrule
$1$ & $-\tfrac{13}{2}$   & $-10$  & $76$    & $-\tfrac12$      & $-1$  & $20$\\
$2$ & $\tfrac{1441}{4}$  & $316$  & $-3590$ & $\tfrac{49}{4}$  & $10$  & $-262$\\
\bottomrule
\end{tabular}
\caption{The two leading exactly reconstructed index sectors in the
normalization of~\eqref{eq:Znp}. The sequences at $k=1$ and $k=2$ coincide,
whereas those at $k=4$ do not. Further orders are listed in
Table~\ref{tab:Zsectors}.}
\label{tab:leading}
\end{table}

The $k=1$ and $k=2$ sequences coincide term by term, with no relative factor,
through all fifteen sectors reconstructed at both levels. This is the
index counterpart of the shared twisted superpotential sequence in~\eqref{eq:cm}. The
smaller reconstruction depth relative to $\Wt$ has a structural origin. Each
index sector contains three independent features rather than two, and its
coefficient heights grow much faster. The numerator of $A_{17}^{(1)}$, for
example, already has $32$ digits.

\subsection{A finite-order relation among the sectors}
\label{sec:Brel}

The reconstructed coefficients satisfy
\be
m\,B_m^{(k)}
=
\begin{cases}
3A_m^{(k)}+\dfrac{R_m^{(k)}}{8}\,, & k=1,2\,,\\[10pt]
3A_m^{(k)}+\dfrac{R_m^{(k)}}{16}
-\dfrac{3\,\sigma_1^{\rm odd}(m)}{4m}\,, & k=4\,.
\end{cases}
\label{eq:Brel}
\ee
For $k=1,2$, the relation is fixed on the two lowest sectors and verified on
the remaining fifteen sectors at $k=1$ and thirteen at $k=2$. At $k=4$, it is
fixed on the three lowest sectors and verified on the remaining twelve. These
are exact identities in $\mathbb Q$ over the reconstructed ranges
$m\leq17,15,15$. The appearance
of $\sigma_1^{\rm odd}$ at $k=4$ and its absence at $k=1,2$
is a further trace of the level-two versus level-four distinction of
\eqref{eq:cm} in the shifted $Q$-coordinate.

Introduce the level-dependent formal generating functions
\be
\cR_k(q)=\sum_{m\geq1}R_m^{(k)}q^{m}\,,
\qquad
\cU_k(q)=\sum_{m\geq1}U_m^{(k)}q^{m}\,,
\qquad
U_m^{(k)}\equiv m\,A_m^{(k)}\,.
\label{eq:RUdef}
\ee
Only their coefficients through $m=M_k^{\scriptscriptstyle(F)}$ are currently known.
Over these ranges, Eq.~\eqref{eq:Brel} reduces the
independent nonperturbative data from three sequences to the pair
$(\cR_k,\cU_k)$. We now identify a modular-product continuation for $\cR_k$ within
two data-informed integer-parameter families, and then exhaust a specified
library for $\cU_k$.

\subsection{A modular-product conjecture for the first generator}
\label{sec:Rmodular}

Only a finite initial segment of $\cR_k$ is known, consisting of
$(M_1^{\scriptscriptstyle(F)},M_2^{\scriptscriptstyle(F)},
M_4^{\scriptscriptstyle(F)})=(17,15,15)$ coefficients at $k=1,2,4$.
These coefficients do not by themselves determine a unique all-order
continuation or its radius of convergence. We therefore distinguish the
finite-order information supplied by the reconstruction, the choice of
modular families, and the analytic consequences that follow once a
continuation is specified.

\paragraph{Arithmetic and finite-order analytic clues.}
Four observations constrain a useful candidate class. First, the signs of
$R_m^{(k)}$ alternate over every resolved order. This selects the shifted nome
$Q=-q$ as the natural series variable and agrees with the half-period shift
that organizes the twisted superpotential. The conventions for $q$, $Q$,
$\tau_2$, and $\tau_4$ are fixed in Eqs.~\eqref{eq:Qnome}--\eqref{eq:tauconv}.

Second, the reconstructed coefficients of the twisted superpotential display
modular organisation at the same Chern--Simons levels. Equation~\eqref{eq:cm}
contains the divisor combinations
$\sigma_3(m)-4\sigma_3(m/2)$ at $k=1,2$ and
$\sigma_3(m)-16\sigma_3(m/4)$ at $k=4$, which are the nonconstant Fourier
coefficients of weight-four Eisenstein forms on $\Gamma_0(2)$ and
$\Gamma_0(4)$, respectively. This does not imply that the index is modular,
but it makes product blocks of shifted-nome modular levels two and four a
natural restricted class to examine.

Third, the finite relation~\eqref{eq:Brel} contains additional weight-two
level-four arithmetic at $k=4$. Its inhomogeneous term is generated by
\be
\mathscr H_{\rm odd}(z)
=
\sum_{m\geq1}\frac{\sigma_1^{\rm odd}(m)}{m}\,z^m
=
\log\frac{\Eul(z^2)}{\Eul(z)}
\label{eq:sigmaodd}
\ee
as an identity of formal power series.%
\footnote{Expanding the Euler product gives
$\log\Eul(z)=-\sum_{m\geq1}\sigma_1(m)z^m/m$. Hence the coefficient of
$z^m$ in $\log[\Eul(z^2)/\Eul(z)]$ is
$[\sigma_1(m)-2\sigma_1(m/2)]/m$. Writing
$m=2^{v_2(m)}u$ with $u$ odd and using multiplicativity of $\sigma_1$ gives
$\sigma_1(m)-2\sigma_1(m/2)=\sigma_1(u)=\sigma_1^{\rm odd}(m)$, with the
convention~\eqref{eq:divisor}.}
Using Eq.~\eqref{eq:Eul-derivative} and the Euler-product identity
\be
\Eul(-Q)=\frac{\Eul(Q^2)^3}{\Eul(Q)\Eul(Q^4)}
\ee
one obtains
\be
1+24D\mathscr H_{\rm odd}(-Q)
=
E_2(Q)-4E_2(Q^2)+4E_2(Q^4)
=
-\bigl(2E_2(Q^2)-E_2(Q)\bigr)
+2\bigl(2E_2(Q^4)-E_2(Q^2)\bigr)\,.
\label{eq:sigmaoddmod}
\ee
The first bracket is the weight-two Eisenstein series on $\Gamma_0(2)$ in
$\tau_2$; the second is its pullback under $\tau_2\mapsto2\tau_2$ and lies on
$\Gamma_0(4)$. Thus the $k=4$ relation supplies shifted-nome level-four data
that are absent at $k=1,2$.

Fourth, the reconstructed coefficients provide finite-order information about
the exponential scale and the type of a possible nearest singularity. For an
all-order series $\cR_k(q)=\sum_{m\geq1}R_m^{(k)}q^m$, the
Cauchy--Hadamard theorem gives the reciprocal radius of convergence as
\be
\limsup_{m\to\infty}|R_m^{(k)}|^{1/m}\,.
\ee
The available coefficient windows are finite and cannot determine this
limsup. We therefore use the root and successive-ratio quantities
\be
\bigl|R_m^{(k)}\bigr|^{1/m},
\qquad
\bigl|R_m^{(k)}/R_{m-1}^{(k)}\bigr|
\ee
as finite-order diagnostics of the exponential scale. Neither diagnostic is
guaranteed to converge for a general power series.

The resolved coefficients contain an exact factor $1/m$. More precisely,
\be
r_m^{(k)}=(-1)^{m+1}mR_m^{(k)}
\label{eq:rdef}
\ee
is a positive integer throughout each reconstructed range. Removing the
$1/m$ factor produces stable finite-order estimates of the exponential scale.
At $k=1$,
\be
\bigl(r_{17}^{(1)}\bigr)^{1/17}=85.019695223\ldots,
\qquad
\frac{r_{17}^{(1)}}{r_{16}^{(1)}}=85.019695223\ldots,
\ee
which agree with $\e^{\pi\sqrt2}=85.019695223\ldots$ to $23$ and $20$
significant digits, respectively. The corresponding root and ratio at $k=4$
agree with $\e^\pi=23.140692633\ldots$ to $17$ and $16$ significant digits.
These data identify
\be
\rho_1=\rho_2=\e^{\pi\sqrt2},
\qquad
\rho_4=\e^\pi
\label{eq:rhovalues}
\ee
as candidate reciprocal radii.

The raw diagnostics provide an independent finite-order check of these
candidates. For an exact logarithmic coefficient law
\be
|R_m|=\frac{\rho^m}{m}\,,
\ee
one has identically
\be
|R_m|^{1/m}
=
\rho\,m^{-1/m}\,,
\qquad
\left|\frac{R_m}{R_{m-1}}\right|
=
\rho\,\frac{m-1}{m}\,.
\label{eq:rawdeficit}
\ee
Both indicators lie below $\rho$, with relative deficits
$1-m^{-1/m}\simeq(\log m)/m$ and $1/m$. At $m=17$ these deficits are
$15.4\%$ and $5.9\%$. Substituting $\rho_1=\e^{\pi\sqrt2}$ predicts
\be
\rho_1\,17^{-1/17}=71.9681262253\ldots,
\qquad
\rho_1\,\frac{16}{17}=80.0185366807\ldots.
\ee
The reconstructed $k=1$ coefficients give
$|R_{17}^{(1)}|^{1/17}=71.9681262253\ldots$ and
$|R_{17}^{(1)}/R_{16}^{(1)}|=80.0185366807\ldots$, in agreement with these
predictions to ten digits. This agreement supports the candidate scale selected by the normalized
coefficients $r_m^{(k)}$, but does not establish a radius of convergence.

The same normalization also provides finite-order evidence for the type of
singularity. Since $r^{(k)}_m>0$ across the reconstructed range, the sequence
shows no modulation beyond the overall alternation. A single
dominant singularity on the negative $q$ axis would have this
sign pattern, while a conjugate pair would generally produce
modulation. Finitely many terms cannot exclude a modulation that appears only
at higher order. Among the elementary singular models, a coefficient prefactor
$m^{-1}$ is characteristic of a logarithm.%
\footnote{For
$\mathscr F(q)=C_\star(1-q/q_\star)^{-\alpha}$ with
$\alpha\notin\{0,-1,-2,\ldots\}$, the binomial series gives
$[q^m]\mathscr F\sim
[C_\star/\Gamma(\alpha)]m^{\alpha-1}q_\star^{-m}$, whereas
$[q^m]\log(1-q/q_\star)^{-1}=m^{-1}q_\star^{-m}$. Transfer to a general
function with the same local expansion requires the usual analytic-continuation
hypotheses of singularity analysis~\cite[Secs.~VI.2--VI.5]{flajolet2009analytic}.
These hypotheses are not assumed at this stage.}
The normalized ratio $r_m^{(k)}/\rho_k^m$ approaches one rapidly over the
available range. Its deviation from unity is $1.4\times10^{-22}$ at the last
$k=1$ order and $8.4\times10^{-17}$ at the last $k=4$ order. This
matches the logarithmic model with unit strength. Since
\be
\log(1+\rho_k q)=\sum_{m\geq1}(-1)^{m+1}\frac{\rho_k^{m}}{m}\,q^{m}\,,
\ee
the local model suggested by the data is
\be
\cR_k(q)\approx\log(1+\rho_k q)
\quad\text{near}\quad
q_\star=-\rho_k^{-1}.
\label{eq:darboux}
\ee
In the shifted nome, the corresponding candidate zero of the logarithm's
argument lies at $Q_\star=\rho_k^{-1}$, namely
$\e^{-\pi\sqrt2}$ for $k=1,2$ and $\e^{-\pi}$ for $k=4$.

\paragraph{Restricted modular families.}
The preceding observations motivate an outer logarithm, an Euler-product
block of the appropriate shifted-nome level, and an argument capable of
vanishing at the candidate point. They do not determine a unique function.
No automated search over a broad modular dictionary was performed for
$\cR_k$. Instead, the analysis considered the following two restricted
families after the full coefficient windows had been examined.

For the common $k=1,2$ sequence, define
\be
P_p(Q)=\prod_{n\geq1}(1+Q^n)^p
=\left(\frac{\Eul(Q^2)}{\Eul(Q)}\right)^p\,,
\qquad p\in\mathbb Z\,.
\label{eq:Ppdef}
\ee
Equation~\eqref{eq:Eul-derivative} gives
\be
D\log P_p(Q)
=
\frac{p}{24}\bigl[2E_2(Q^2)-E_2(Q)-1\bigr]\,.
\label{eq:Pp-level2}
\ee
Thus $D\log P_p$ is the nonconstant part of the weight-two Eisenstein series
$2E_2(Q^2)-E_2(Q)$ on $\Gamma_0(2)$. The product $P_p$ is nonzero for
$|Q|<1$, so $\log P_p$ alone cannot produce the inferred interior logarithmic
singularity. We restricted the search to the two-term Laurent deformation
\be
P_p(Q)-\frac{cQ}{P_p(Q)}\,.
\ee
It has unit constant term, uses the same level-two product block and its
inverse, and can acquire a simple zero inside the unit disc. The restriction
reflects a choice of candidate family and carries no claim of uniqueness for
the architecture. The integrality of the normalized reconstructed coefficients
motivates $p,c\in\mathbb Z$.

For $k=4$, the divisor structure in Eqs.~\eqref{eq:cm} and
\eqref{eq:sigmaoddmod} instead points to a shifted-nome level-four block. The
unit-normalized Euler quotient supported on the divisors one and four is
\be
\frac{\Eul(Q)}{\Eul(Q^4)}\,,
\qquad
D\log\frac{\Eul(Q)}{\Eul(Q^4)}
=
\frac{1}{24} \bigl[E_2(Q)-4E_2(Q^4)+3\bigr]\,.
\label{eq:Eul14-level4}
\ee
This quotient is nonzero for $|Q|<1$ and therefore cannot supply the candidate
zero. Within this product architecture, a separate weight-zero factor is required.

The modular lambda function enters at this point. By the conventions and group
argument in Section~\ref{sec:natural-variables},
\be
\mathfrak h_4(Q)\equiv\lambda_{\rm mod}(-Q)
=\lambda_{\rm mod}(2\tau_2+1)
\label{eq:h4def}
\ee
is a Hauptmodul for $\Gamma_0(4)$ in the shifted parametrization and satisfies
$\mathfrak h_4(Q)=O(Q)$ at the cusp. Since the compact modular curve $X_0(4)$ has genus
zero~\cite{diamond2005first}, rational weight-zero modular functions at this
level can be expressed rationally in $\mathfrak h_4$. Within this function field, the
simplest nonconstant polynomial factor with value one at the cusp is
$1+c\mathfrak h_4(Q)$. This choice defines a restricted linear family and is not an
exhaustive classification.

The candidate zero gives a sharper clue. In the theta convention
$Q=\e^{\ii\pi\tau_4}$, $Q_\star=\e^{-\pi}$ corresponds to $\tau_4=\ii$,
while $-Q_\star$ corresponds to $\tau_4=\ii+1$. Using
Eq.~\eqref{eq:lambda-T},
\be
\mathfrak h_4(Q_\star)
=
\lambda_{\rm mod}(\ii+1)
=
\frac{\lambda_{\rm mod}(\ii)}{\lambda_{\rm mod}(\ii)-1}
=-1\,.
\label{eq:h4star}
\ee
Within the chosen linear factor, the condition
$1+c \mathfrak h_4(Q_\star)=0$ therefore predicts $c=1$ exactly. The point
$Q_\star$ is still inferred from finite-order coefficients here, so this is a
data-informed prediction inside the restricted family. We keep $c$ symbolic
below in order to test that prediction independently using exact coefficient
algebra.

The two candidate families are
\bea
\cR^{[2]}_{p,c}(Q)
&=
\log\!\left[P_p(Q)-\frac{cQ}{P_p(Q)}\right],
\\[3pt]
\cR^{[4]}_{p,c}(Q)
&=
\log\!\left[
\bigl(1+c\lambda_{\rm mod}(-Q)\bigr)
\left(\frac{\Eul(Q)}{\Eul(Q^4)}\right)^p
\right],
\qquad p,c\in\mathbb Z\,.
\label{eq:Rfamilies}
\eea
The logarithm branches are fixed to vanish at $Q=0$. The first family is used
for the common $k=1,2$ sequence and the second for $k=4$.

\paragraph{Exact parameter determination within the chosen families.}
For the shared $k=1,2$ sequence, let
$a_m(p,c)=[Q^m]\cR^{[2]}_{p,c}(Q)$. The first three coefficients are
\be
a_1=p-c\,,
\qquad
a_2=\frac{p+4pc-c^2}{2}\,,
\qquad
a_3=\frac{4p+3pc-6p^2c+6pc^2-c^3}{3}\,.
\label{eq:Rparameter12}
\ee
Because $Q=-q$, the reconstruction gives
$(a_1,a_2,a_3)=(-76,-3590,-614704/3)$. The first equation yields
$c=p+76$, and the second reduces to
\be
(p+12)(p+39)=0\,.
\ee
The two integral solutions are $(-12,64)$ and $(-39,37)$. The
corresponding values of $a_3$ are $-614704/3$ and $-713146/3$, respectively,
so the reconstructed third coefficient selects
\be
(p,c)=(-12,64)\,.
\ee
With these parameters fixed, the remaining fourteen coefficients at $k=1$
and twelve at $k=2$ are reproduced exactly over $\mathbb Q$ by
$\cR^{[2]}_{-12,64}(Q)$.

At $k=4$, let $b_m(p,c)=[Q^m]\cR^{[4]}_{p,c}(Q)$. The first two coefficients
are
\be
b_1=-p-16c\,,
\qquad
b_2=-\frac{3p}{2}-128c-128c^2\,.
\label{eq:Rparameter4}
\ee
There are two complementary parameter determinations. The candidate-zero
condition gives $c=1$ by Eq.~\eqref{eq:h4star}, and the first reconstructed
coefficient $b_1=-20$ then fixes $p=4$. The coefficients
$m=2,\ldots,15$ are subsequently reproduced exactly over $\mathbb Q$.

Independently of the candidate zero,
matching the first two exact coefficients $(b_1,b_2)=(-20,-262)$ gives
\be
p=20-16c\,,
\qquad
(c-1)(16c+29)=0\,.
\ee
The two rational solutions are $(p,c)=(4,1)$ and $(49,-29/16)$, and
$(4,1)$ is the unique integral pair.
The remaining thirteen coefficients are then reproduced exactly over
$\mathbb Q$ by $\cR^{[4]}_{4,1}(Q)$.
The coefficient-only calculation therefore recovers the same integer
parameters selected by the candidate-zero condition.

The uniqueness statements are global over integer pairs inside the two
families in Eq.~\eqref{eq:Rfamilies}. They do not establish uniqueness among
other modular, quasimodular, theta-product, or nonmodular architectures. The
family choice is data-informed, while the parameter determination and holdout
checks are exact finite-order statements.

With the selected parameters, define
\be
\mathcal A(Q)=P_{-12}(Q)\,,
\qquad
\Xi(Q)=1+\lambda_{\rm mod}(-Q)
=
\frac{2\lambda_{\rm mod}(Q)-1}{\lambda_{\rm mod}(Q)-1}\,.
\label{eq:Xidef}
\ee
The second equality follows from the $T$-transformation in
Eq.~\eqref{eq:lambda-T}. The exact agreement over the resolved coefficients
motivates the conjecture that $\cR_k(q)$ admits the all-order continuation
\be
\cR_k(q)
=
\begin{cases}
\displaystyle
\log\!\left[
\mathcal A(Q)-\dfrac{64Q}{\mathcal A(Q)}
\right],
& k=1,2,
\\[12pt]
\displaystyle
\log\!\left[
\Xi(Q)
\left(\dfrac{\Eul(Q)}{\Eul(Q^4)}\right)^4
\right],
& k=4,
\end{cases}
\qquad Q=-q\,.
\label{eq:Rmodular}
\ee
The branch is fixed by $\cR_k(0)=0$. Equation~\eqref{eq:Rmodular} agrees
coefficientwise with the reconstruction modulo
$q^{M_k^{\scriptscriptstyle(F)}+1}$ for
$(M_1^{\scriptscriptstyle(F)},M_2^{\scriptscriptstyle(F)},
M_4^{\scriptscriptstyle(F)})=(17,15,15)$.

\subsection{The nearest singularity}
\label{sec:sing}

For $k=1,2$, the conjecture can be rewritten as an eta quotient, making its
nearest singularity accessible analytically. With the nome convention
$Q=\e^{2\pi\ii\tau_2}$ from~\eqref{eq:tauconv} and the eta normalization
in~\eqref{eq:eta-def}, define
\be
\mathfrak f_2(\tau_2)
=
\left(
\frac{\eta(\tau_2)}{\eta(2\tau_2)}
\right)^{24}
\label{eq:hauptmodul}
\ee
the Hauptmodul of $\Gamma_0(2)$~\cite{van_Ekeren_2018}. Since
\be
\prod_{n\geq1}(1+Q^n)
=
Q^{-1/24}\frac{\eta(2\tau_2)}{\eta(\tau_2)}\,,
\ee
one obtains the branch-independent identity
$\mathcal A(Q)^2=\mathfrak f_2(\tau_2)Q$ and hence
\be
\mathcal A(Q)-\frac{64Q}{\mathcal A(Q)}
=
\mathcal A(Q)\,\frac{\mathfrak f_2(\tau_2)-64}{\mathfrak f_2(\tau_2)}\,.
\label{eq:factorized}
\ee
This factorization avoids introducing separate square roots of $Q$ and
$\mathfrak f_2(\tau_2)^{-1}$.
Because $\mathcal A(Q)$ and $\mathfrak f_2(\tau_2)$ are nonzero in the upper half-plane,
the zeros of the logarithm's argument are exactly the solutions of
$\mathfrak f_2(\tau_2)=64$.

\begin{proposition}[Nearest singularity and coefficient growth of $\cR_k$]
\label{prop:radius}
Assume the all-order continuation~\eqref{eq:Rmodular}. Then $\cR_k(q)$ is
holomorphic on $|q|<|q_\star|$ and singular at $q_\star$, where
\be
q_\star=-\e^{-\pi\sqrt2}
\quad (k=1,2)\,,
\qquad
q_\star=-\e^{-\pi}
\quad (k=4)\,.
\label{eq:qstar}
\ee
Consequently,
\be
\limsup_{m\to\infty}|R_m^{(k)}|^{1/m}
=|q_\star|^{-1}\,,
\ee
namely $\e^{\pi\sqrt2}\approx85.0197$ for $k=1,2$ and
$\e^{\pi}\approx23.1407$ for $k=4$. Moreover, the singularity at
$q_\star$ is a simple logarithmic branch point. In a neighborhood of $q_\star$,
\be
\cR_k(q)=\log\Bigl(1-\frac{q}{q_\star}\Bigr)+\mathcal H_k(q)\,,
\label{eq:localform}
\ee
with $\mathcal H_k$ holomorphic at $q_\star$, and the coefficients obey
\be
R^{(k)}_m=(-1)^{m+1}\frac{\rho_k^{m}}{m}\bigl(1+O(\theta_k^{m})\bigr)\,,
\qquad
\theta_k=
\begin{cases}
\e^{-2\pi\sqrt2/3},&k=1,2,\\
\e^{-4\pi/5},&k=4,
\end{cases}
\label{eq:Rasym}
\ee
with $\rho_k=|q_\star|^{-1}$ as in~\eqref{eq:rhovalues}.
\end{proposition}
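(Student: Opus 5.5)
The approach is to read off every singularity of $\cR_k$ from the zeros of the argument of the logarithm in~\eqref{eq:Rmodular}. Those zeros will be located by modular transformations of the relevant Hauptmodul. The local form~\eqref{eq:localform} and the asymptotics~\eqref{eq:Rasym} then follow by singularity analysis of a function whose only singularities are isolated logarithmic points.

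The first step treats $k=1,2$ and uses the factorization~\eqref{eq:factorized}. On $|Q|<1$, $\mathcal A(Q)$ is a convergent product with no zeros. $\mathfrak f_2$ is holomorphic and nonvanishing on $\mathbb H$, and its pole sits at the cusp $0$, which is not in the disc. Hence $\cR_k$ is holomorphic on any disc around $Q=0$ free of solutions of $\mathfrak f_2(\tau_2)=64$, and each simple solution gives a simple logarithmic branch point. To find these solutions I will use the Fricke involution $\tau_2\mapsto-1/(2\tau_2)$. By~\eqref{eq:eta-S} it sends $\mathfrak f_2$ to $2^{12}/\mathfrak f_2$, so at its fixed point $\tau_2=\ii/\sqrt2$ one has $\mathfrak f_2=\pm64$. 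Since $\mathfrak f_2=Q^{-1}\prod_n(1+Q^n)^{-24}>0$ for $0<Q<1$, the value is $+64$. This gives the zero $Q_\star=\e^{-\pi\sqrt2}$, i.e. $q_\star=-\e^{-\pi\sqrt2}$.

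Because $\mathfrak f_2$ is a Hauptmodul, it takes the value $64$ exactly once on $X_0(2)$. All other solutions are therefore $\Gamma_0(2)$-images $\gamma(\ii/\sqrt2)$, with imaginary part $(1/\sqrt2)/|\mathsf c\,\ii/\sqrt2+\mathsf d|^2$. For $\mathsf c=0$ these are translates, which give the same $Q$. Otherwise $\mathsf c\ne0$ is even and $\mathsf d$ odd, so $|\mathsf c\,\ii/\sqrt2+\mathsf d|^2\ge3$. Equality at $\mathsf c=\pm2$, $\mathsf d=\pm1$ puts the next zeros on the circle $|Q|=\e^{-\pi\sqrt2/3}$, and the ratio of radii is $\theta_{1,2}=\e^{-2\pi\sqrt2/3}$. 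For simplicity of the zero, note that $\ii/\sqrt2$ is not an elliptic point of $\Gamma_0(2)$, whose only elliptic point is the $\Gamma_0(2)$-class of $(1+\ii)/2$. A Hauptmodul is locally injective away from elliptic points, so $\mathfrak f_2'(\ii/\sqrt2)\neq0$.

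The second step treats $k=4$ in the same way, using~\eqref{eq:Xidef}: $\Xi=(2\lambda_{\rm mod}-1)/(\lambda_{\rm mod}-1)$ with $\lambda_{\rm mod}$ evaluated at $\tau_4$, $Q=\e^{\ii\pi\tau_4}$. The factors $\vartheta_3$ and $\Eul(Q),\Eul(Q^4)$ are nonvanishing for $|Q|<1$. The value $\lambda_{\rm mod}=1$ is attained only at a cusp. So the singularities are exactly the points where $\lambda_{\rm mod}(\tau_4)=1/2$. By~\eqref{eq:lambda-T} and the Hauptmodul property on $X(2)$, these form the $\Gamma(2)$-orbit of $\ii$, with imaginary parts $1/(\mathsf c^2+\mathsf d^2)$ where $\mathsf c$ is even and $\mathsf d$ odd. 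The orbit points with $\mathsf c=0$ give $|Q|=\e^{-\pi}$; the corresponding value is $Q_\star=\e^{-\pi}$, matching~\eqref{eq:h4star}, so $q_\star=-\e^{-\pi}$. The next orbit points, with $\mathsf c=\pm2$, $\mathsf d=\pm1$, give $|Q|=\e^{-\pi/5}$, hence $\theta_4=\e^{-4\pi/5}$. Since $\Gamma(2)/\{\pm1\}$ is torsion free, $\lambda_{\rm mod}'(\ii)\ne0$ and the zero is simple.

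Finally, I will assemble the conclusions. In both cases, distinct $\mathsf c=0$ points differ by translations that leave $Q$ unchanged, so $q_\star$ is the unique singularity on $|q|=|q_\star|$, and the Cauchy--Hadamard theorem gives the stated limsup. Near $q_\star$ the argument of the logarithm equals $(1-q/q_\star)\,g(q)$ with $g$ holomorphic and nonzero there. This yields~\eqref{eq:localform}, with $\mathcal H_k=\log g$ on the branch fixed by $\cR_k(0)=0$. Subtracting $\log(1-q/q_\star)$, whose coefficients are exactly $(-1)^{m+1}\rho_k^m/m$, leaves a function holomorphic for $|q|<|q_\star|/\theta_k$. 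On the next circle it has only finitely many singularities, since $\gamma(\ii/\sqrt2)$ and $\gamma(\ii)$ take finitely many values modulo $1$ for fixed $(\mathsf c,\mathsf d)$. Each of these is again a simple logarithm, so its coefficients are $O((\rho_k\theta_k)^m/m)$, which gives~\eqref{eq:Rasym}.

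The main obstacle I expect is the exhaustiveness of the orbit bound. One must rule out every other solution of $\mathfrak f_2=64$ or $\lambda_{\rm mod}=1/2$ in the disc. This requires the Hauptmodul argument (a single value on the compact curve), together with care that elliptic points and cusps are correctly excluded.
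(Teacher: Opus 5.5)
Your proposal is correct and takes essentially the same route as the paper. Both use the factorization \eqref{eq:factorized}, obtain $\mathfrak f_2(\ii/\sqrt2)=64$ from the Fricke fixed point plus positivity, and invoke the Hauptmodul single-orbit argument with the bounds $\mathsf c^2/2+\mathsf d^2\ge3$ and $\mathsf c^2+\mathsf d^2\ge5$. Both also get simplicity of the zero from non-ellipticity and finish by subtracting the first- and second-shell logarithms. There is one harmless slip: $\mathfrak f_2$ has its pole at the cusp $\infty$, i.e.\ at $Q=0$, and vanishes at the cusp $0$. The argument only uses that $\mathfrak f_2$ is holomorphic and nonvanishing on $\mathbb H$, and that is correct.
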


\begin{proof}
Consider first $k=1,2$. By~\eqref{eq:factorized}, the argument of the
logarithm is
$\mathcal A(Q)\bigl(\mathfrak f_2(\tau_2)-64\bigr)/\mathfrak f_2(\tau_2)$.
The eta function is non-vanishing in the upper half-plane, so both
$\mathcal A(Q)$ and $\mathfrak f_2(\tau_2)$ are nonzero there. Near $Q=0$,
one has $\mathfrak f_2(\tau_2)\sim Q^{-1}$ and the argument tends to one,
consistently with $\cR_k(0)=0$. Its zeros in $0<|q|<1$ are therefore
exactly the solutions of $\mathfrak f_2(\tau_2)=64$.

Let $X_0(2)$ and $X(2)$ denote the compact modular curves associated with
$\Gamma_0(2)$ and $\Gamma(2)$, respectively. Because $X_0(2)$ has genus
zero~\cite{diamond2005first} and $\mathfrak f_2$ is a Hauptmodul~\cite{van_Ekeren_2018}, the solution set is a
single $\Gamma_0(2)$-orbit.
Under the Fricke involution
$W_2:\tau_2\mapsto-1/(2\tau_2)$, Eq.~\eqref{eq:eta-S} gives
\be
\mathfrak f_2\!\left(-\frac{1}{2\tau_2}\right)
=
\frac{2^{12}}{\mathfrak f_2(\tau_2)}\,.
\ee
At the fixed point $\tau_2=\ii/\sqrt2$, this gives
$\mathfrak f_2(\ii/\sqrt2)^2=2^{12}$. Positivity of the eta quotient on the
positive imaginary axis selects $\mathfrak f_2(\ii/\sqrt2)=64$, so the
solution set is the $\Gamma_0(2)$-orbit of
$\ii/\sqrt2$.

The point of this orbit closest to $q=0$ is the one with largest
$\im\tau_2$, because $|q|=|Q|=\e^{-2\pi\im\tau_2}$. For
$\mathsf M=\begin{pmatrix}\mathsf a&\mathsf b\\\mathsf c&\mathsf d\end{pmatrix}\in\Gamma_0(2)$,
\be
\im(\mathsf M\tau_2)
=
\frac{\im\tau_2}{|\mathsf c\tau_2+\mathsf d|^2}\,,
\label{eq:orbit-im}
\ee
with $\mathsf c$ even. If $\mathsf c=0$, $\mathsf M$ is a translation and the imaginary part
remains $1/\sqrt2$. If $|\mathsf c|\geq2$, then at $\tau_2=\ii/\sqrt2$,
\be
|\mathsf c\tau_2+\mathsf d|^2
=
\frac{\mathsf c^2}{2}+\mathsf d^2
\geq3\,,
\ee
where $\mathsf d$ is odd. Every non-translational image therefore has strictly smaller
imaginary part. The maximum occurs at $\tau_2=\ii/\sqrt2+\mathsf b$, $\mathsf b\in\mathbb Z$,
which gives $Q=\e^{-\pi\sqrt2}$ and
$q=-Q=-\e^{-\pi\sqrt2}$. The logarithm's argument is analytic and nonzero on
$|q|<\e^{-\pi\sqrt2}$ and tends to one at the origin, so it has a unique
analytic logarithm there with $\cR_k(0)=0$. At $q_\star$ the argument vanishes,
and $\cR_k$ is singular.

For $k=4$, the Euler-product factor in~\eqref{eq:Rmodular} is nonzero for
$|Q|<1$. The modular lambda function omits the value $1$ in the upper
half-plane, so the denominator in~\eqref{eq:Xidef} produces no interior pole.
The zeros are therefore the solutions of $\lambda_{\rm mod}(Q)=1/2$. With
$Q=\e^{\ii\pi\tau_4}$, $\lambda_{\rm mod}$ is a Hauptmodul for $\Gamma(2)$ and
$\lambda_{\rm mod}(\ii)=1/2$, so the solution set is the $\Gamma(2)$-orbit of
$\ii$. For $\mathsf M\in\Gamma(2)$, $\mathsf c$ is even and $\mathsf d$ is odd. If $\mathsf c=0$, the
image differs from $\ii$ by an even translation and retains imaginary part
one. If $|\mathsf c|\geq2$, then at $\tau_4=\ii$,
\be
|\mathsf c\tau_4+\mathsf d|^2=\mathsf c^2+\mathsf d^2\geq5\,,
\ee
and hence $\im(\mathsf M\tau_4)\leq1/5$. The maximum is attained at
$\tau_4=\ii+\mathsf b$, $\mathsf b\in2\mathbb Z$, giving $Q=\e^{-\pi}$ and
$q=-Q=-\e^{-\pi}$. The same analytic-logarithm argument proves that this is
the nearest singularity, now using $|q|=|Q|=\e^{-\pi\im\tau_4}$.

It remains to show that the zero at $q_\star$ is simple. Both dominant points
have trivial projective stabilizer in the relevant group. For $\mathsf M\in
\Gamma_0(2)$ fixing $\tau_{2,\star}=\ii/\sqrt2$, the fixed-point equation
$\mathsf c\tau^2+(\mathsf d-\mathsf a)\tau-\mathsf b=0$ with $\tau^2=-\tfrac12$ separates into $\mathsf a=\mathsf d$ and
$\mathsf c=-2\mathsf b$, and the determinant condition becomes $\mathsf a^2+2\mathsf b^2=1$, whose only
integer solutions are $\mathsf b=0$, $\mathsf a=\pm1$, leaving only $\mathsf M=\pm\mathbb I$.
Similarly, for $\mathsf M\in\Gamma(2)$ fixing $\tau_{4,\star}=\ii$ one finds
$\mathsf a=\mathsf d$, $\mathsf c=-\mathsf b$ and $\mathsf a^2+\mathsf b^2=1$; since $\mathsf a$ must be odd, again only
$\mathsf M=\pm\mathbb I$ survives. Neither point is an elliptic point of
its group, and neither is a cusp, so the quotient maps are unramified there.
Since $\mathfrak f_2$ and $\lambda_{\rm mod}$ are Hauptmoduls for $X_0(2)$ and $X(2)$
respectively, each is a local coordinate near the corresponding point, whence
$\mathfrak f_2'(\tau_{2,\star})\neq0$ and $\lambda_{\rm mod}'(\tau_{4,\star})\neq0$. Consequently
$\mathfrak f_2-64$ and $\lambda_{\rm mod}-\tfrac12$, and with them the arguments of the two
logarithms in~\eqref{eq:Rmodular}, have simple zeros at $Q_\star=-q_\star$.

Writing that argument as $(Q-Q_\star)\,g_k(Q)$ with $g_k$ holomorphic and nonzero
at $Q_\star$, and using
$Q-Q_\star=-Q_\star(1-Q/Q_\star)=-Q_\star(1-q/q_\star)$,
gives~\eqref{eq:localform} with $\mathcal H_k=\log\bigl(-Q_\star g_k\bigr)$ holomorphic at
$q_\star$. Extracting coefficients from $\log(1-q/q_\star)$ yields
$-q_\star^{-m}/m$, and $q_\star=-\rho_k^{-1}$ gives the leading term in
\eqref{eq:Rasym}.

To control the remainder, organize the other zeros by modular orbit shells.
For $k=1,2$, write $\mathsf c=2n$ with $n\in\mathbb Z$. At $\tau_{2,\star}=\ii/\sqrt2$ the denominator in
\eqref{eq:orbit-im} is
\be
 |\mathsf c\tau_{2,\star}+\mathsf d|^2=2n^2+\mathsf d^2.
\ee
For a non-translational image $n\neq0$ and $\mathsf d$ is odd. The smallest value is
$3$, attained for example by
$\mathsf M_{\rm sh}=\begin{psmallmatrix}1&0\\2&1\end{psmallmatrix}\in\Gamma_0(2)$;
the next possible value is at least $9$. Hence the second shell lies at
$\varrho_{2,1}=\varrho_{2,2}=\e^{-\pi\sqrt2/3}$. For $k=4$, the corresponding denominator is
$4n^2+\mathsf d^2$; its smallest non-translational value is $5$, again attained by
$\mathsf M_{\rm sh}\in\Gamma(2)$, and the next possible value is at least $13$. Thus
$\varrho_{2,4}=\e^{-\pi/5}$. Modulo translations, each of these shells contains only
finitely many orbit points. Triviality of the projective stabilizer is
preserved under conjugation, so all associated zeros are simple. The
discreteness of the next denominator
value provides a radius $\varrho_{3,k}>\varrho_{2,k}$ before the following shell.

Let $\mathcal Z_{2,k}$ denote the finite set of zeros on $|q|=\varrho_{2,k}$. With
all logarithms normalized to vanish at the origin, the function
\be
 \mathcal G_k(q)=\cR_k(q)-\log\!\left(1-\frac{q}{q_\star}\right)
 -\sum_{q_j\in\mathcal Z_{2,k}}
  \log\!\left(1-\frac{q}{q_j}\right)
\label{eq:second-shell-subtraction}
\ee
extends holomorphically to $|q|<\varrho_{3,k}$. This follows after differentiation.
The logarithmic derivative of the argument in \eqref{eq:Rmodular}
is meromorphic, with residue one at every simple zero, and
the displayed subtraction cancels all poles in that disc. Consequently,
\be
 R_m^{(k)}=-\frac{q_\star^{-m}}{m}
 -\frac{1}{m}\sum_{q_j\in\mathcal Z_{2,k}}q_j^{-m}
 +[q^m]\mathcal G_k(q)\,.
\label{eq:second-shell-coefficients}
\ee
For any $r$ with $\varrho_{2,k}<r<\varrho_{3,k}$, Cauchy's estimate gives
$[q^m]\mathcal G_k=O(r^{-m})$. Dividing~\eqref{eq:second-shell-coefficients} by the
leading term shows that the second shell contributes
$O((|q_\star|/\varrho_{2,k})^m)$, while the holomorphic remainder contributes
$O(m(|q_\star|/r)^m)$, which is smaller. Since
$|q_\star|/\varrho_{2,k}=\theta_k$, this proves
\eqref{eq:Rasym}.
\end{proof}

The rate in~\eqref{eq:Rasym} is rapid. Since $\theta_1=\theta_2\approx0.0517$
and $\theta_4\approx0.0810$, each additional instanton order
suppresses the second-shell scale by more than an order of magnitude. This accounts for the
accuracy of the finite-order diagnostics of Section~\ref{sec:Rmodular}. At
$k=1$ and $m=17$, $\theta_1^{17}\approx1.4\times10^{-22}$, while the
reconstructed coefficient differs from $(-1)^{m+1}\rho_1^m/m$ by
$1.4\times10^{-22}$ in relative terms. At $k=4$ and $m=15$, the corresponding
numbers are $4\times10^{-17}$ and $8\times10^{-17}$. Their agreement in scale
provides a consistency check of the all-order conjecture~\eqref{eq:Rmodular}.

We also locate the first zero numerically without using the analytic value
in~\eqref{eq:qstar} to initialize the search. Let $\mathscr P_k(q)$ denote
the argument of the logarithm in~\eqref{eq:Rmodular}. The product
representations above show that $\mathscr P_k$ is holomorphic for $|q|<1$.
For any $0<r<1$ such that $\mathscr P_k$ has no zero on $|q|=r$, the argument
principle gives
\be
\mathsf N_k(r)
=
\frac{1}{2\pi\ii}
\oint_{|q|=r}
\rd q\,
\frac{\mathscr P_k'(q)}{\mathscr P_k(q)}\,,
\label{eq:argument-principle-check}
\ee
where $\mathsf N_k(r)$ is the number of zeros of $\mathscr P_k$ in
$|q|<r$, counted with multiplicity. We evaluate $\mathsf N_k(r)$ on circles
of increasing radius to identify the first annulus containing a zero. The
first nonzero count is one, and bisection on the negative real axis then
isolates that zero.
The result reproduces~\eqref{eq:qstar} with the discrepancies shown below.
\begin{center}
\small
\renewcommand{\arraystretch}{1.2}
\begin{tabular}{@{}crr@{}}
\toprule
Working digits
& \multicolumn{1}{c}{$k=1,2$}
& \multicolumn{1}{c}{$k=4$}\\
\midrule
$80$  & $2.5\times10^{-83}$  & $2.9\times10^{-83}$\\
$150$ & $5.0\times10^{-153}$ & $7.3\times10^{-153}$\\
$250$ & $1.2\times10^{-253}$ & $3.7\times10^{-253}$\\
\bottomrule
\end{tabular}
\end{center}

Assuming the all-order continuation~\eqref{eq:Rmodular},
Proposition~\ref{prop:radius} establishes
$\rho_1=\rho_2=\e^{\pi\sqrt2}$ and $\rho_4=\e^\pi$ as the exact reciprocal
radii; equivalently, the radii of convergence are $\e^{-\pi\sqrt2}$ and
$\e^{-\pi}$. The corresponding convergence discs are strictly smaller than the unit
discs of the two ordinary power series underlying the twisted-superpotential
conjecture~\eqref{eq:Wnp}. The proof is analytic and does not
rely on convergence of the finite-order root or ratio
diagnostics. Those diagnostics provide
quantitative checks of the coefficient asymptotics~\eqref{eq:Rasym}.

At $k=1,2,4$, every computed physical point lies inside these convergence
discs. On the physical locus, the condition $|q|<|q_\star|$ is equivalent to
$\Nh>k/4$ for $k=1,2$ and to $\Nh>1/2$ for $k=4$.

The second generator has the same candidate exponential scale but a different
power-law prefactor. Its successive-coefficient ratios approach
$-\rho_k$, reaching $-85.16$ at $k=1$ and $-23.21$ at $k=4$ at the largest
reconstructed orders. The coefficients $U_m^{(k)}$ lack the additional $1/m$ suppression
present in $R_m^{(k)}$. The finite data are consistent with
$(-1)^{m+1}U_m^{(k)}/\rho_k^m$ approaching a constant.
Over the reconstructed range, the ratios
$U_m^{(k)}/(m R_m^{(k)})$ drift smoothly, and a second-order extrapolation in
$1/m$ gives
\be
\frac{U_m^{(k)}}{mR_m^{(k)}}
\;\stackrel{\rm extrap.}{\longrightarrow}\;
-\frac18 \quad (k=1,2)\,,
\qquad
-\frac1{16} \quad (k=4)\,,
\label{eq:Uleading}
\ee
to between thirteen and eighteen significant digits, the accuracy degrading with
the shallower reconstruction at $k=4$. Under this extrapolation,
\eqref{eq:Uleading} would imply that the leading singular behavior of $\cU_k$
at $q_\star$ is that of $-\frac18 D\cR_k$ and
$-\frac1{16}D\cR_k$, respectively, with $D$ as in~\eqref{eq:Ddef}. The rational
prefactors are those multiplying $R_m$ in~\eqref{eq:Brel}. Only $17$, $15$,
and $15$ coefficients are available, so \eqref{eq:Uleading} is an
extrapolation from finitely many terms. It suggests the candidate leading
singular behavior of $\cU_k$ but does not constrain the singularity structure
of the unknown all-order function without an additional hypothesis class.

\begin{figure}[t]
\centering
\includegraphics[width=0.67\linewidth]{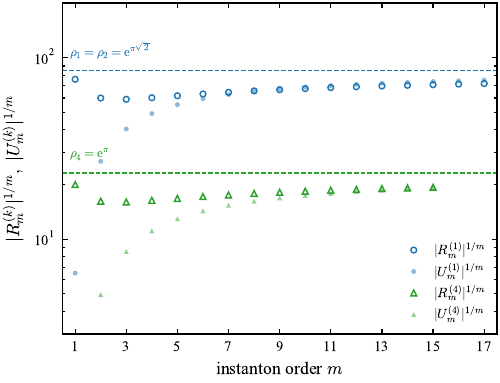}
\caption{Finite-order root-test diagnostics for the reconstructed index
generators. Open symbols show $|R_m^{(k)}|^{1/m}$, and the lighter filled
symbols show $|U_m^{(k)}|^{1/m}$, through $m=17$ at $k=1$ and through $m=15$
at $k=4$. The $k=2$ sequences coincide with those at $k=1$ through their
fifteen common orders and are omitted to avoid overplotting. Dashed lines mark
the candidate reciprocal radii
$\rho_1=\rho_2=\e^{\pi\sqrt2}$ and $\rho_4=\e^\pi$, equivalently
$|q_\star|^{-1}$ in Proposition~\ref{prop:radius}, under the all-order
continuation~\eqref{eq:Rmodular}. The behavior of $U_m^{(k)}$ is finite-order
evidence only.}
\label{fig:growthZ}
\end{figure}

\subsection{The unresolved generator}
\label{sec:U}

The finite-order behavior in~\eqref{eq:Uleading}, together with the modular
structure found for $\cR_k$, makes it natural to examine a bounded
low-complexity $q$-series library in the shifted nome $Q=-q$. The arguments
$Q,Q^2,Q^4$ accommodate the level-two and level-four structures identified
above. For $a=1,3$, define the Lambert series
\be
\mathscr L_a(x)
=
\sum_{n\geq1}\frac{n^a x^n}{1-x^n}
=
\sum_{m\geq1}\sigma_a(m)x^m\,,
\qquad |x|<1\,.
\ee
The raw library contains the nonconstant parts of the Eisenstein series
$E_2,E_4,E_6$ evaluated at $Q,Q^2,Q^4$, the series
$\mathscr L_1,\mathscr L_3$ and their images under $D$ at the same arguments,
the entries
$\cR_k,D\cR_k,D^2\cR_k,D^{-1}\cR_k$, and the two mixed terms
$\cR_k[E_2(Q)-1]$ and $(D\cR_k)[E_2(Q)-1]$.
Here $D^{-1}$ is understood as in~\eqref{eq:Dinverse}.

For $r\in\{1,2,4\}$,
\be
\mathscr L_1(Q^r)=-\frac{E_2(Q^r)-1}{24}\,,
\qquad
\mathscr L_3(Q^r)=\frac{E_4(Q^r)-1}{240}\,.
\ee
The six Lambert-series entries $\mathscr L_a(Q^r)$ are therefore exact scalar
multiples of Eisenstein entries. We retain the Eisenstein representatives and
remove these six duplicates, leaving the $21$ entries listed in
Table~\ref{tab:Ulibrary}.

\begin{table}[ht]
\centering
\small
\renewcommand{\arraystretch}{1.1}
\begin{tabular}{@{}llc@{}}
\toprule
Class & Retained entries & Count \\
\midrule
Eisenstein entries
&
$E_2(Q^r)-1,\ E_4(Q^r)-1,\ E_6(Q^r)-1$
&
$9$
\\
Lambert-series derivatives
&
$D\mathscr L_1(Q^r),\ D\mathscr L_3(Q^r)$
&
$6$
\\
First-generator entries
&
$\cR_k,\ D\cR_k,\ D^2\cR_k,\ D^{-1}\cR_k$
&
$4$
\\
Mixed terms
&
$\cR_k[E_2(Q)-1],\ (D\cR_k)[E_2(Q)-1]$
&
$2$
\\
\midrule
Total && $21$ \\
\bottomrule
\end{tabular}
\caption{Deduplicated library used in the bounded search for $\cU_k$, with
$r\in\{1,2,4\}$.}
\label{tab:Ulibrary}
\end{table}

Let $F_j$, $j=1,\ldots,21$, denote the retained entries after substituting
$Q=-q$. At each level we form the exact rational coefficient matrix
\be
\mathsf M_k
=
\bigl([q^m]F_j\bigr)_{
1\leq m\leq M_k^{(F)},\,
1\leq j\leq21}\,.
\ee
Its dimensions are $17\times21$, $15\times21$, and $15\times21$ at
$k=1,2,4$, and its exact ranks are respectively
\be
\operatorname{rank}\mathsf M_1=17\,,
\qquad
\operatorname{rank}\mathsf M_2=15\,,
\qquad
\operatorname{rank}\mathsf M_4=15\,.
\ee
Thus the full library has maximal row rank on each available coefficient
window. The low-complexity constraint in the search is the restriction to at
most four library entries.

More precisely, for every subset $S\subset\{1,\ldots,21\}$ with
$1\leq|S|\leq4$, we test whether rational coefficients $c_j$ exist such that
\be
[q^m]\cU_k(q)
=
\sum_{j\in S}c_j[q^m]F_j
\qquad
\text{for all}\quad
1\leq m\leq M_k^{(F)}\,.
\ee
There are
\be
\sum_{s=1}^{4}\binom{21}{s}=7{,}546
\ee
such subsets. For a subset with full column rank, a nonsingular square
subsystem determines the coefficients $c_j$ exactly over $\mathbb Q$, and
all remaining reconstructed coefficients provide exact checks. A
rank-deficient subset introduces no additional span beyond a smaller subset
already included in the search. No subset reproduces $\cU_k$ through all
reconstructed orders at $k=1,2,$ or $4$.

The entries involving $\cR_k$ are evaluated only through the same reconstructed
depths $M_k^{(F)}$. Equation~\eqref{eq:Rmodular} reproduces the reconstructed
coefficients $R_m^{(k)}$ exactly over these ranges, so no coefficients of
$\cR_k$ beyond the reconstructed windows enter the finite search. Extending
these entries to higher orders would require the conjectural continuation
in~\eqref{eq:Rmodular}.

The negative result is exact over the reconstructed ranges within the library
of Table~\ref{tab:Ulibrary} and the four-entry cap. It leaves open
representations involving weight-two modular forms of level eight in the
shifted nome, quasimodular forms of higher depth, mock-modular objects, theta
or product representations, and derivatives of $\cR_k$ beyond second order.

The rank-dependent perturbative structure of the selected-orbit index is
independently reconstructed from the data and agrees with the literature. Its
rank-independent term is exactly reconstructed within the declared basis,
and Theorem~\ref{thm:f0} proves the equivalence of its two analytic
representations. In the nonperturbative sector, $\cR_k$ has a unique
modular-product candidate within the families above over the reconstructed
orders, and Proposition~\ref{prop:radius} gives an exact radius theorem under
its all-order continuation. The all-order form of $\cU_k$ remains open.
Section~\ref{sec:validation} quantifies the residual associated with truncating
the index sectors at the reconstructed depth.

\section{The analysis framework}
\label{sec:method}

The computational analysis begins from the four high-precision scalar tables
described in Section~\ref{sec:setup}, which contain the on-shell values
$\im\Wt(N,k)$ and $\log Z_{\rm data}(N,k)$. The upstream Bethe solvers are not
part of the reproducibility workflow considered here. Starting from these
tables, we reconstruct the shifted rank, extract the perturbative parents,
reconstruct the constant maps and exponentially small sectors, and carry out
the arithmetic and modular searches developed in
Sections~\ref{sec:Weichler} and~\ref{sec:Rmodular}. This section describes the
logic of these inferences and the conditions under which each conclusion is
drawn.

Two features distinguish the analysis from unrestricted curve fitting. First,
the quantities being reconstructed are exact objects, such as rational numbers
or rational combinations of prescribed transcendental constants. Recognition
is therefore performed within explicitly stated bases, coefficient bounds, or
search classes. Failure of such a test excludes a candidate only within the
declared class and does not constitute a broader negative result. Second, the
hypothesis classes are supplied either by physical input or by arithmetic
structure identified in the data, and their provenance is stated explicitly.
In most stages, the class is fixed before the data used for parameter
determination are introduced.

The modular family for $\cR_k$ is the exception. Its architecture is informed
by the full reconstructed coefficient window, after which exact low-order
algebra determines the integer parameters within the chosen families. The
parameter-determining coefficients are kept disjoint from those used to test
the resulting parameter values. The choice of family itself, however, is
data-informed. We therefore do not describe the remaining coefficients as
out-of-sample tests of that choice. Table~\ref{tab:blind} summarizes the inference
or proof status of each stage, the class searched, the working precision, and
the independent checks that are applied.

\begingroup
\small
\renewcommand{\arraystretch}{1.16}
\begin{longtable}{@{}>{\raggedright\arraybackslash}p{0.21\linewidth}
                      >{\raggedright\arraybackslash}p{0.17\linewidth}
                      >{\raggedright\arraybackslash}p{0.56\linewidth}@{}}
\caption{Inference and proof status of each stage of the analysis, using the
vocabulary of Section~\ref{sec:exactness}. Working precisions are quoted in
decimal digits, while exact-rational stages are carried out over
$\mathbb Q$. Throughout the table, ``reconstructed'' refers to the stated
finite ansatz, basis, or search class.}
\label{tab:blind}\\
\toprule
Stage & Status & Declared search class and independent check\\
\midrule
\endfirsthead
\multicolumn{3}{l}{\footnotesize\tablename~\thetable\ continued}\\
\toprule
Stage & Status & Declared search class and independent check\\
\midrule
\endhead
\midrule
\multicolumn{3}{r}{\footnotesize Continued on the next page}\\
\endfoot
\bottomrule
\endlastfoot

Shift, Eq.~\eqref{eq:shift}
& reconstructed within a prescribed two-feature level law
& The scan uses $220$-digit arithmetic. At each $k=1,\ldots,50$, all sign
changes detected on a grid of spacing $1/4$ over $[-10,5]$ are bracketed, and
the calculation is repeated in four disjoint high-rank windows. The
coefficients in
$s_k=-\mu_{\rm sh}k-\nu_{\rm sh}/k$
are determined from $k=1,\ldots,12$ and recognized as rationals by continued
fractions with denominators at most $10^4$. The resulting law is then
validated on the withheld levels $k=13,\ldots,50$. Downstream stages use the
recognized rational law, while the individual roots are retained only as
window-stability diagnostics. \\
\addlinespace

Perturbative parents, Eqs.~\eqref{eq:Wparent} and~\eqref{eq:Zparent}
& reconstructed within a prescribed four-feature basis
& With the recognized shift law fixed, unconstrained four-parameter fits in
the basis
$\{\Nh^{3/2},\Nh^{1/2},\log\Nh,1\}$
are performed at $k=1,2,3,5,10,20,50$. At each level, the fit uses the four
largest available ranks and $220$-digit arithmetic, with selected diagnostics
recomputed at $240$ digits. The fitted coefficients agree with the closed
parent coefficients, with the agreement becoming less precise at larger $k$
because the unpeeled nonperturbative remainder is less strongly suppressed
over the available rank range. The resulting parent formulas are subsequently
tested by the residual and literature checks described below. \\
\addlinespace

Twisted superpotential constant $\CW(k)$,
Eqs.~\eqref{eq:clausen-odd}--\eqref{eq:clausen-even}
& levelwise integer-relation reconstruction
& Master arithmetic uses $240$ digits, while the levelwise PSLQ budget is
$d=\lfloor-\log_{10}|p_0-p_1|\rfloor-1$. Internal relations are removed from
the declared basis
$\{\zeta(3)/\pi,G,\Cltwo(2\pi j/k):1\leq j<k\}$
using the basis values alone. For $k\leq20$, the PSLQ search uses coefficient
bound $10^6$. The four robustness checks of the companion
Letter~\cite{Hosseini:2026dkj} require primitive height at most $10^5$,
reconstruction error below $10^{-(d-2)}$, stability at the reduced digit
budget $\lfloor3d/4\rfloor$, and stability under deletion of each unused basis
element. These checks are supplemented by the independent three-point
plateau-tail criterion \eqref{eq:plateau-test}. The Letter protocol certifies
$k=1,\ldots,8,10,12$, and the additional criterion leaves this set unchanged.
The closed parity formula is assembled using only $k\leq7$ and does not enter
the levelwise certification. \\
\addlinespace

Index constant $\fzero(k)$, Eq.~\eqref{eq:f0arith}
& rational reconstruction within a prescribed six-feature space
& At $120$-digit working precision, the largest-rank plateaux at
$k=1,\ldots,6$ are used to reconstruct the coefficients in the feature set
$\{A(k),A(k/2),\CW(k)/\pi,k^2\zeta(3)/\pi^2,\log k,\log 2\}$.
Continued-fraction recognition uses a relative tolerance of $10^{-20}$ and
allows at most six decimal digits in each numerator and denominator. The nine
levels $k=7,8,9,10,12,16,20,30,50$ are withheld from coefficient
determination and used only for validation. Equality of the arithmetic and
one-kernel representations is then proved in
Theorem~\ref{thm:f0}. \\
\addlinespace

Twisted superpotential sector structure, Eq.~\eqref{eq:Wfull}
& exact finite-range reconstruction within a prescribed basis linear in $t$
& The sector peel uses $900$-digit arithmetic in the two columns $q^m$ and
$tq^m$. For each instanton order $m$, the coefficients are reconstructed from
two interleaved fitting blocks and checked on a disjoint third block. A sector
is retained only if the two reconstructions agree, rational reconstruction
succeeds within the declared height bound, and the rank-differenced residual
on the held-out block is reduced by at least a factor of four. In every
retained sector, the reconstructed coefficients satisfy the $1/m$ relation in
Eq.~\eqref{eq:Wfull}. The overall normalization is selected from the family
$\pi^a k^b$, with $a,b\in\mathbb Z$ and $|a|,|b|\leq3$. The exponent of
$\pi$ is selected by coefficient height, while the exponent of $k$ is
determined by requiring the independently reconstructed $k=1$ and $k=2$
leading-coefficient sequences to coincide. \\
\addlinespace

Index sector structure, Eq.~\eqref{eq:Znp}
& reconstructed within bounded polynomial and normalization families
& The sector peel uses $900$-digit arithmetic. The polynomial degree is
determined from the $m=1$ sector by comparing $P=0,1,2,3$ in the basis
$\{t^p q:0\leq p\leq P\}$ on a disjoint held-out rank block. Degree two is
the lowest degree beyond which the held-out residual does not improve, while
the fitted cubic coefficient at $P=3$ is numerically negligible. Writing an
unconstrained sector as
$(\mathsf f_2 t^2+\mathsf f_1 t+\mathsf f_0)q^m$, the combination
$\mathsf f_1/m-\mathsf f_0$ admits the bounded rational reconstruction at
every retained instanton order, whereas $\mathsf f_0$ does not. This selects
the $t+1/m$ decomposition in Eq.~\eqref{eq:Znp}. The leading-coefficient
normalization is selected by the same bounded $\pi^a k^b$ search and
cross-level $k=1,2$ matching used for the twisted-superpotential sectors. \\
\addlinespace

Relation among sector coefficients, Eq.~\eqref{eq:Brel}
& exact finite-range reconstruction over $\mathbb Q$
& At $k=1,2$, the two coefficients multiplying $A_m^{(k)}$ and
$R_m^{(k)}$ are determined from the sectors $m=1,2$. At $k=4$, all three
coefficients, including the one multiplying
$\sigma_1^{\rm odd}(m)/m$, are determined from $m=1,2,3$. The resulting
relation is then verified exactly on every remaining reconstructed sector. \\
\addlinespace

Modular product ansatz for $\cR_k$, Eq.~\eqref{eq:Rmodular}
& exact parameter determination within data-informed product families
& The sign pattern, normalized coefficient growth, candidate zeros, and the
level-two and level-four arithmetic in the shifted nome $Q=-q$ motivate the
two modular-product families of Section~\ref{sec:Rmodular}. The first three
exact coefficients of the common $k=1,2$ sequence determine the integer
parameters $(p,c)$ uniquely within the stated family. At $k=4$, the
finite-order candidate zero independently selects $c=1$, while the first two
exact coefficients determine $(p,c)=(4,1)$. The remaining $14$, $12$, and
$13$ reconstructed coefficients at $k=1,2,4$, respectively, are withheld
from these determinations and reproduced exactly. The product families are
data-informed, and their continuation beyond the reconstructed orders remains
conjectural. \\
\addlinespace

Radius of convergence, Proposition~\ref{prop:radius}
& analytic theorem conditional on the conjectural product continuation
& The normalized reconstructed coefficients indicate candidate reciprocal
radii, while the raw root and ratio diagnostics test the predicted finite-$m$
deficits. Assuming the all-order continuation in Eq.~\eqref{eq:Rmodular},
modular transformations determine the relevant zero orbit, and comparison over
its modular images identifies the zero nearest to the origin. An independent
numerical check based on the argument principle and bisection locates the same
zero without using its analytic position as input. Under the same assumption,
Proposition~\ref{prop:radius} proves the logarithmic singularity, the radius of
convergence, and the leading coefficient growth. \\
\addlinespace

Bounded modular search for $\cU_k$
& exhaustive negative result within a prescribed finite library
& After six exact proportional duplicates are removed, the $27$-entry raw
library reduces to the $21$ entries of Table~\ref{tab:Ulibrary}. The resulting
coefficient matrices have exact ranks $17$, $15$, and $15$ at
$k=1,2,4$, respectively. At each level, all $7{,}546$ subsets containing at
most four library entries are examined. Whenever the selected entries are
linearly independent, an invertible coefficient subsystem determines the
rational coefficients exactly, and the resulting combination is tested
against all remaining reconstructed coefficients. No candidate reproduces
$\cU_k$ through all available orders. The negative result is exhaustive within
the declared library and the four-entry cap. Representations outside this
search class remain open. \\
\addlinespace
\end{longtable}
\endgroup

\subsection{Precision and conditioning}

Stored precision, working precision, and numerical accuracy are distinct. The
$200$- and $800$-digit labels refer to the decimal digits retained in the input
tables. They are not certified error bounds, as discussed in
Section~\ref{sec:setup}. The table loaders preserve these entries as decimal
strings and convert them to arbitrary-precision numbers only after the working
precision has been set. Numerical calculations are then performed at the
precision specified for each stage. Working above the stored width provides
guard digits against arithmetic loss, but it does not increase the information
content of the archived data.

We use the term ``empirical resolution limit'' for the scale beyond which the
stored data no longer support stable finite-data tests. In the sector peels
based on the $800$-digit tables, this limit is implemented conservatively by
requiring the smallest exponential factor in a complete fitting window to
satisfy
\begin{equation*}
q(N,k)^{m+W-1}>10^{-780}
\end{equation*}
at every usable rank. Here $m$ is the first sector in the window and $W$ is
its width. This operational threshold leaves a margin below the nominal $800$-digit width.
It should not be interpreted as a certified error estimate.

At high instanton order, the sector design matrices develop a large dynamic
range because the entries in the columns $t^p q^m$ span many decimal orders.
Before each arbitrary-precision solve, every column is divided by its largest
absolute entry on the corresponding fitting block. The two fitting blocks are
disjoint and interleaved over the same rank range. Their agreement determines
the precision demanded of the rational reconstruction. A separate held-out
block tests the effect of subtracting the reconstructed sector. The complete
acceptance rule is described in Section~\ref{sec:peel}.

Once a coefficient has been recognized, its rational part is stored and
manipulated exactly. The accompanying powers of $\pi$ and $k$ are kept
explicit, while $t$ and $q$ continue to be evaluated in arbitrary-precision
arithmetic. The resulting statements have the finite-data status ``exactly
reconstructed'' defined in Section~\ref{sec:exactness}.

Rigorous certification of these inferences would require certified enclosures
for the archived observables and a proof that those enclosures propagate
through every numerical solve. It would also require rigorous exclusion of
competing rational or integer relations within the declared search bounds.
Ball arithmetic provides a natural framework for the propagation of certified
enclosures~\cite{johansson2017arb}, but it does not by itself supply the
required uniqueness and exclusion arguments.

\subsection{Reconstruction of the shifted rank and perturbative parents}

At fixed $k$, we write the high-rank twisted-superpotential data provisionally
in the form
\be
\mathcal Y_N^{(k)}
\equiv
\im\Wt(N,k)
=
\alpha_k^{\rm par}(N+s_k)^{3/2}
+\delta_k^{\rm par}
+\varepsilon_N^{(k)}\,,
\ee
where $\varepsilon_N^{(k)}$ denotes the finite-rank remainder. For each
three-rank window, we temporarily neglect this remainder and eliminate
$\alpha_k^{\rm par}$ and $\delta_k^{\rm par}$. This gives a single scalar
equation for $s_k$. The calculation is performed at $220$-digit working
precision.

We scan the interval $[-10,5]$ on a grid of spacing $1/4$, bracket every sign
change detected between adjacent grid points, and refine each bracket by
bisection. No information from Eq.~\eqref{eq:shift} is used to initialize the
search. The procedure is repeated in four offset high-rank windows. The
endpoint of each window is ten ranks below that of the preceding window, and
the three sampling points within each window span forty ranks. The windows are
therefore distinct but not disjoint.

For every $1\leq k\leq50$, the scan detects exactly one sign-changing bracket
in each of the four windows. The root from the highest-rank window is retained
for the level-law reconstruction, while the other three roots provide
stability diagnostics. The largest spread among the four roots is
$1.15\times10^{-8}$ at $k=50$. This is also the level at which the retained
root has the largest discrepancy from the recognized level law, consistent
with the weaker suppression of the finite-rank remainder over the available
rank range.

The retained roots are fitted to the prescribed two-feature law
\begin{equation*}
s_k=-\mu_{\rm sh}k-\frac{\nu_{\rm sh}}{k}\,.
\end{equation*}
An overdetermined fit on $k=1,\ldots,12$, followed by continued-fraction
recognition with denominator at most $10^4$, yields
\begin{equation*}
\mu_{\rm sh}=\frac1{24}\,,
\qquad
\nu_{\rm sh}=-\frac23\,.
\end{equation*}
Relative to the recognized rational law, the maximum absolute residual is
$3.48\times10^{-21}$ on the determining levels and
$9.18\times10^{-9}$ on the thirty-eight withheld levels
$k=13,\ldots,50$. The latter occurs at $k=50$ and is comparable to the
cross-window spread at that level. Equation~\eqref{eq:shift} is therefore
exactly reconstructed within the prescribed two-feature law.

The feature space $\{k,1/k\}$ is fixed before the level data are fitted. It is
motivated by the analogous $k$- and $1/k$-dependence of the Airy shift in the
closely related $S^3$ localization problem
\cite{Fuji:2011km,Marino:2011eh}. The two coefficients of the present shift
law are nevertheless determined entirely from the Bethe-vacuum data.

Once the rational law
\be
s_k=-\frac{k}{24}+\frac{2}{3k}
\ee
has been recognized, it is used in every subsequent stage that depends on
$\Nh$, $t$, or $q$. The individual levelwise roots are retained only as
stability diagnostics.

With the shift fixed, unconstrained four-parameter fits are performed in the
prescribed basis
\begin{equation*}
\{\Nh^{3/2},\Nh^{1/2},\log\Nh,1\}
\end{equation*}
at $k=1,2,3,5,10,20,50$. Each fit uses the four largest available ranks for
the corresponding observable and $220$-digit arithmetic. All four
coefficients are determined independently at each level. Selected diagnostics
are recomputed separately at $240$ digits.

These fits test the perturbative structures in
Eqs.~\eqref{eq:Wparent} and~\eqref{eq:Zparent}. The fitted coefficients agree
with the closed parent coefficients, although the agreement becomes less
precise at larger $k$. Over the available rank range, the unpeeled
nonperturbative remainder is then less strongly suppressed. The fitted
rank-independent coefficients are not used as the constant maps in the
subsequent analysis. Those terms are reconstructed separately. The parent
formulas used downstream are supported jointly by these fits,
their agreement with the perturbative ABJM results of
Refs.~\cite{Bobev:2022jte,Bobev:2022eus}, and the residual tests of
Section~\ref{sec:validation}.

\subsection{Recognition of constants}

A constant plateau poses a different inference problem from a sector
coefficient. For $\CW(k)$, the principal modelling choice is the
transcendental basis in which the plateau is tested. The dilogarithmic
structure of the twisted superpotential naturally suggests Clausen values,
since
\be
\Cltwo(\theta)
=
\im\Li_2(\e^{\ii\theta})
=
\sum_{n\geq1}\frac{\sin(n\theta)}{n^{2}}\,,
\ee
for real $\theta$.
while the integer level singles out the rational angles $2\pi j/k$. The
observed level dependence of the numerical plateaux also motivates a
$\zeta(3)/\pi$ component. Their leading quadratic growth is consistent with
$-k^2\zeta(3)/(16\pi)$. We therefore take as the starting basis
\be
\{\zeta(3)/\pi,\,G,\,\Cltwo(2\pi j/k):1\leq j<k\} \, .
\ee

The raw Clausen list contains elementary redundancies. For even $k$, the entry
with $j=k/2$ vanishes because $\Cltwo(\pi)=0$. The entries labelled by $j$
and $k-j$ differ only by a sign,
\be
\Cltwo\!\left(\frac{2\pi(k-j)}{k}\right)
=
-\Cltwo\!\left(\frac{2\pi j}{k}\right).
\ee
Catalan's constant is retained as a separate basis element at every level.
When $4\mid k$, the rational-angle Clausen list already contains this value,
since $\Cltwo(\pi/2)=G$, and the corresponding Clausen entry is removed.
Further rational dependencies can remain among $G$ and the surviving Clausen
values. One source is the duplication identity
\be
\Cltwo(2\theta)
=
2\Cltwo(\theta)+2\Cltwo(\theta+\pi)\,.
\ee

These additional dependencies are removed before the plateau is introduced.
The internal-dependency search is performed at $200$-digit working precision
by applying PSLQ to $G$ and the surviving Clausen values, with tolerance
$10^{-150}$ and coefficient bound $10^5$. The distinguished entries
$\zeta(3)/\pi$ and $G$ are retained by construction. Whenever PSLQ finds a
bounded dependence, only a participating Clausen value is eligible for
removal. Among the eligible entries, the one with the smallest reduced
denominator is removed, with ties broken by the largest numerator. The search
is repeated until PSLQ finds no further relation within these bounds.

Using the plateau estimates and the conservative digit budget $d(k)$ defined
in Eq.~\eqref{eq:plateau-digits}, the levelwise relation search is adapted to
the information available at each level. Master arithmetic uses $240$ digits.
For a digit budget $d$, each PSLQ run uses fifteen guard digits, subject to a
minimum working precision of thirty digits. The tolerance is
$10^{-(d-4)}$, and the coefficient search bound is $10^6$
\cite{ferguson1999analysis,ferguson1998polynomial}. After primitive
normalization, a relation is certified under the four robustness checks of the
companion Letter~\cite{Hosseini:2026dkj} only if all of the following
conditions hold.
\begin{enumerate}[label=(\roman*),leftmargin=2.1em,itemsep=1pt,topsep=2pt]
\item Its integer height is at most $10^5$.
\item Solving the relation for $p_0(k)$ gives an error smaller than
$10^{-(d-2)}$.
\item Repeating the search with digit budget $\lfloor3d/4\rfloor$ returns the
same relation up to overall scale.
\item After each basis element with zero coefficient is deleted in turn, PSLQ
returns the same relation with that entry removed.
\end{enumerate}

We supplement these four robustness checks with an independent criterion for
the unresolved finite-rank tail. At fixed $k$, define
\be
\chi_{\rm tail}(k)
=
\frac{|p_0(k)-p_1(k)|}{|p_1(k)-p_2(k)|}\,,
\qquad
T(k)
=
|p_0(k)-p_1(k)|
\frac{\chi_{\rm tail}(k)}{1-\chi_{\rm tail}(k)}\,,
\qquad
0\leq\chi_{\rm tail}(k)<1\,.
\label{eq:plateau-test}
\ee
Under a locally geometric model for the final plateau errors, $T(k)$
estimates the remaining distance from $p_0(k)$ to the limiting plateau. We
require the reconstruction error at $p_0(k)$ to be no larger than $5T(k)$.
If $\chi_{\rm tail}(k)\notin[0,1)$, no finite geometric-tail estimate is
assigned and the criterion is not satisfied.

The four-check Letter protocol certifies
$k=1,\ldots,8,10,12$. Every one of these levels also satisfies the
three-point tail criterion, so the final certified set is unchanged. The tail
criterion by itself also passes at $k=9$, but the relation at that level fails
the three-quarter-precision check. At $k=19$, no relation is found within the
PSLQ search bound. All other attempted levels that are not certified fail at
least one of the required checks.

We attempt the levelwise integer-relation search only for $k\leq20$. This is a
conservative operational cutoff based on the competition between the available
plateau precision and the precision required for a bounded PSLQ search. For an
integer relation among $n$ numbers with coefficient height $H$, the required
precision is expected to scale roughly as
$n\log_{10}H$~\cite{ferguson1999analysis}. At $k=21$, the digit budget is
$d(21)=15$, while the pruned PSLQ vector has nine components. At the
certification height cap $H=10^5$, the corresponding heuristic scale is about
$45$ digits. Covering the full PSLQ search range up to $H=10^6$ would instead
require about $54$ digits. A null search at this level would therefore not
exclude relations throughout the declared coefficient range. Since the
plateau budget decreases further at larger $k$, negative searches become
progressively less informative. The cutoff at $k=20$ does not imply the
absence of lower-height relations at higher levels.

The index constant is reconstructed separately in the prescribed six-feature
space
\be
\left\{
A(k),\,
A(k/2),\,
\CW(k)/\pi,\,
k^2\zeta(3)/\pi^2,\,
\log k,\,
\log2
\right\}.
\ee
At $120$-digit working precision, the largest-rank plateaux at
$k=1,\ldots,6$ form a square linear system for the six coefficients.
Continued-fraction recognition is then applied with relative tolerance
$10^{-20}$. The absolute numerator and denominator of each recognized
coefficient are required to contain at most six decimal digits. The nine
levels
$
k=7,8,9,10,12,16,20,30,50
$
are excluded from coefficient determination and used only for validation.
Their residuals are consistent with the corresponding unpeeled instanton
remainders. This procedure yields Eq.~\eqref{eq:f0arith}.

Theorem~\ref{thm:f0} proves that the arithmetic and one-kernel
representations define the same function for every real $k>0$. Their equality
is therefore an analytic result. The identification of this common function
with the rank-independent term of the index remains exactly reconstructed
from finite data within the prescribed six-feature space.

\subsection{Adaptive sector peeling}
\label{sec:peel}

After subtracting a perturbative parent at fixed level $k$, write the residual
as
\be
\mathcal D(N)
=
\mathcal C_{\rm res}
+\sum_{m\geq1}\mathcal S_m(t)q^m\,,
\ee
where $\mathcal C_{\rm res}$ is the rank-independent term and
$\mathcal S_m(t)$ is the polynomial multiplying the $m$th instanton sector.

Every sector solve is performed on rank differences. For the solve beginning
at sector $m$, the code chooses a reference rank $N_{\rm ref}$ from the usable
ranks and fits
\begin{equation*}
\mathcal D(N)-\mathcal D(N_{\rm ref})\,.
\end{equation*}
The reference rank is fixed within that solve, but it can change when the
usable rank window changes at a later instanton order. The constant
$\mathcal C_{\rm res}$ cancels algebraically from every such system.

Neither $\CW(k)$ nor $\fzero(k)$ is supplied to the corresponding sector
peel. After the finite set of retained sectors has been subtracted, the
remaining residual at the largest deep-grid rank is recorded as a plateau
diagnostic. Its agreement with the separately reconstructed constant map is a
cross-check, not an input to the peel.

For a peel beginning at sector $m$, the design matrix contains a look-ahead
window of sectors
\begin{equation*}
m,m+1,\ldots,m+W-1\,.
\end{equation*}
The code first tries $W=8$ and decreases the width, if necessary, down to
$W=2$. A width is admissible only when the complete window remains above the
empirical resolution limit and enough ranks remain for two fitting blocks, a
held-out block, and the reference rank.

If the polynomial degree is $P$, each fitting block contains
$(P+1)W$ ranks. The two fitting blocks are disjoint and interleaved over the
same high-rank range. A third block of four ranks is kept separate for the
held-out test. Each column is rescaled by its largest absolute entry before
the square linear system is solved.

The two fitting blocks produce independent numerical estimates of the
coefficients in the first sector of the window. Their minimum relative
agreement determines the tolerance used by the stage-specific rational
recognizer. If the agreement corresponds to $d_{\rm agr}$ decimal digits, the
recognition tolerance is
\begin{equation*}
\tau
=
\max\!\left\{
\min\!\left(d_{\rm agr}/2,300\right),10
\right\},
\end{equation*}
and the allowed decimal height of the rational numerator and denominator is
\begin{equation*}
h_{\rm cap}
=
\max\!\left\{
\left\lfloor\tau/2.5\right\rfloor,6
\right\}.
\end{equation*}
The precise quantities subjected to rational recognition depend on the
observable. For the twisted superpotential they are the normalized
coefficient of $tq^m$ and the ratio of the $q^m$ and $tq^m$ coefficients.
For the index they are the normalized $t^2q^m$ and $tq^m$ coefficients and
the combination that tests the $t+1/m$ structure.

A sector is retained only if this recognition succeeds and its exact
subtraction reduces the largest rank-differenced residual on the held-out
block by at least a factor of four. The reconstructed rational coefficients
are then subtracted from all ranks before the next sector is considered. The
rational coefficients and the selected powers of $\pi$ and $k$ are stored
exactly. Only the evaluation of $t$ and $q$ remains numerical. The acceptance
rule is fixed throughout the peel, while the window width, recognition
tolerance, and height cap adapt deterministically to the information available
at each order.

The reconstruction depths are outputs of this procedure. For the twisted
superpotential,
\begin{equation*}
\left(
M_1^{\scriptscriptstyle(\mathcal W)},
M_2^{\scriptscriptstyle(\mathcal W)},
M_4^{\scriptscriptstyle(\mathcal W)}
\right)
=
(51,67,93)\,.
\end{equation*}
At $k=1$, the peel stops after sector $51$ because no admissible rank window of
width at least two remains. At $k=2$ and $k=4$, the next sector fails the
bounded rational-reconstruction test after sectors $67$ and $93$,
respectively.

For the index,
\begin{equation*}
\left(
M_1^{\scriptscriptstyle(F)},
M_2^{\scriptscriptstyle(F)},
M_4^{\scriptscriptstyle(F)}
\right)
=
(17,15,15)\,.
\end{equation*}
At all three levels, the next sector fails the stage-specific bounded
rational-reconstruction test. These depths describe the finite ranges over
which the coefficients are exactly reconstructed. They are not termination
statements for the underlying instanton series.

\subsection{Cross-level normalization and structure searches}

At a fixed level, a factor $k^b$ in the normalization can be absorbed into a
rational rescaling of the coefficient sequence. The power of $k$ therefore
cannot be determined from fixed-level data alone. We begin from the prescribed
finite family
\begin{equation*}
\pi^a k^b\,,
\qquad
a,b\in\mathbb Z\,,
\qquad
|a|,|b|\leq3\,.
\end{equation*}
For the twisted superpotential, this family is tested on the coefficient of
$tq^m$ in the first four sector solves. For the index, it is tested on the
coefficient of $t^2q^m$ in the first three sector solves. At each level,
bounded rational reconstruction selects the representation of smallest
coefficient height. This determines the power of $\pi$, while the power of
$k$ remains degenerate with a rational rescaling.

The remaining ambiguity is resolved across levels. After removing a candidate
factor $k^b$, we require the independently reconstructed sequences at $k=1$
and $k=2$ to coincide over the sectors used in the normalization search.
Exactly one value of $b$ satisfies this condition in each case. The resulting
prefactors are $k^2/(2\pi)$ for the twisted-superpotential sectors and
$k^2/\pi^2$ for the index sectors, as written in
Eqs.~\eqref{eq:Wnp} and~\eqref{eq:Znp}. No prior knowledge of the rational
coefficient sequences is used to select these normalizations. Their subsequent
arithmetic structure is analyzed only after the sector coefficients have been
reconstructed.

Once rational recognition has succeeded, the remaining finite-range structure
tests are carried out exactly over $\mathbb Q$. The relation
\eqref{eq:Brel} is determined from the minimum number of low-order sectors and
verified on every remaining reconstructed sector, as described in
Section~\ref{sec:Brel}.

The modular-product analysis of $\cR_k$ has a different provenance. Its family
architecture is informed by the reconstructed sign pattern, coefficient
growth, divisor arithmetic, and candidate zero locations discussed in
Section~\ref{sec:Rmodular}. Once the two families have been specified, their
integer parameters are determined by exact low-order equations, and all
remaining reconstructed coefficients test those parameter values. Because the
families themselves are data-informed, these checks validate the parameter
determination within the chosen families rather than the choice of family.

The search for $\cU_k$ is an exhaustive negative test within a prescribed
finite class. After exact proportional duplicates are removed from the
library, every subset containing at most four retained entries is examined.
For each subset with linearly independent columns, exact row pivots select an
invertible coefficient subsystem. The resulting rational combination is then
tested against the complete reconstructed coefficient window. The construction
and the scope of the resulting negative statement are given in
Section~\ref{sec:U}. No conclusion is drawn about representations outside the
declared library or the four-entry cap.

\subsection{Relation to symbolic regression}

The analysis is related to physics-informed symbolic regression in a limited
sense. Candidate expressions are sought within classes motivated by physical
or arithmetic structure and are confronted with high-precision data
\cite{Udrescu:2019mnk,Cranmer:2023pysr,Raayoni:2021ram}. Unlike generic
symbolic regression, however, the search does not range over a broad
expression space or rely on heuristic optimization. Each admissible class is
stated explicitly and, where relevant, bounded by its dimension, polynomial
degree, coefficient height, or library size. Numerical fits first isolate
candidate coefficients, after which continued fractions or PSLQ are used for
bounded recognition. Once the relevant coefficients have been rationally
reconstructed, the subsequent relation and library searches are carried out
exactly over $\mathbb Q$.

The interpretation of a successful check depends on how the candidate class
was chosen. When the class is specified independently of the data used for
parameter determination, separate data are reserved for validation whenever
possible. The modular family for $\cR_k$ is different because its architecture
is informed by the reconstructed coefficient window. The remaining
coefficients therefore test the parameter values within that family rather
than the choice of family itself. The status of each result is classified
according to Section~\ref{sec:exactness}, while analytic proofs are stated
separately from these data-driven inferences.

\section{Quantitative validation}
\label{sec:validation}

Each quantitative statement in this section is recomputed from the
high-precision values using the precision and comparison criteria stated below.
The resulting tests quantify the numerical consistency of the reconstructed formulas.

\subsection{Residual tests of the twisted-superpotential candidate}

We test the candidate obtained by combining the perturbative expression
\eqref{eq:Wparent} with the all-order divisor series
\eqref{eq:Wnp}. In this calculation, $\CW(k)$ is evaluated from the finite
Clausen formulas
\eqref{eq:clausen-odd}--\eqref{eq:clausen-even}, while the coefficients
$c_m^{(k)}$ are generated from the divisor formula \eqref{eq:cm}.

At each $(N,k)$, the instanton sum is truncated at
\be
M_{\rm sat}(N,k)
=
\left\lceil
\frac{830\log 10}{t(N,k)}
\right\rceil+8\,.
\label{eq:W-saturation-cutoff}
\ee
With $q=\e^{-t}$, this prescription gives
$q^{M_{\rm sat}}\leq10^{-830}q^8$, and the first omitted exponential factor
is smaller still. The eight-sector margin places the truncation scale below
the resolution of the $800$-digit archive.

Let
\be
\Delta_{\mathcal W}(N,k)
=
\im\Wt_{\rm data}(N,k)
-
\im\Wt_{\rm cand}^{(M_{\rm sat})}(N,k)
\ee
denote the resulting residual. Across all $297$ entries of the deep
twisted-superpotential table, the levelwise maxima are
\begin{center}
\small
\renewcommand{\arraystretch}{1.2}
\begin{tabular}{@{}ccc@{}}
\toprule
$k$
& rank $N_\star$
& $\displaystyle
   \max_{2\leq N\leq100}
   |\Delta_{\mathcal W}(N,k)|$ \\
\midrule
$1$ & $71$ & $4.79\times10^{-797}$ \\
$2$ & $66$ & $4.98\times10^{-797}$ \\
$4$ & $94$ & $4.95\times10^{-797}$ \\
\bottomrule
\end{tabular}
\end{center}
where $N_\star$ is the rank at which the maximum is attained. The overall
maximum is $4.98\times10^{-797}$ at $(k,N)=(2,66)$.

The archive stores each value to a fixed number of significant digits, so
the number of digits retained after the decimal point varies. One unit in the
final stored digit of the value at $(N,k)$ is therefore
\be
u_{\rm arch}(N,k)
=
10^{
\left\lfloor
\log_{10}
\left|
\im\Wt_{\rm data}(N,k)
\right|
\right\rfloor-799
}\,.
\ee
This unit grows with the magnitude of the observable, which behaves as
$N^{3/2}$ at large rank. The high-rank locations of the absolute maxima above
should therefore not be interpreted as a deterioration of the candidate.
Figure~\ref{fig:W_saturation} plots the dimensionless ratio
$|\Delta_{\mathcal W}|/u_{\rm arch}$. All $297$ ratios are smaller than
$0.5$, so the candidate agrees with every archived value within half a unit of
its final stored digit.

\begin{figure}[t]
\centering
\includegraphics[width=0.62\textwidth]{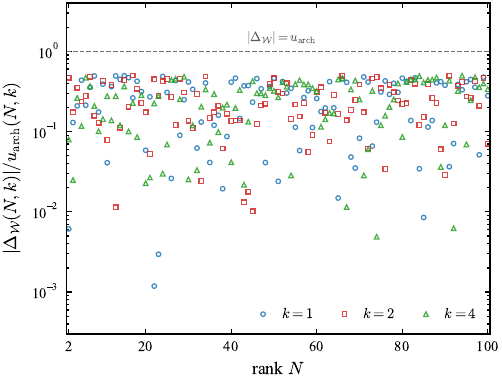}
\caption{Residual test of the all-order twisted-superpotential candidate
\eqref{eq:Wnp} at every $2\leq N\leq100$ and $k=1,2,4$. The divisor series
is truncated adaptively using $830$-digit working arithmetic. The vertical
axis is $|\Delta_{\mathcal W}(N,k)|/u_{\rm arch}(N,k)$, where
$u_{\rm arch}$ is one unit in the final stored digit of the corresponding
$800$-digit value. The dashed line marks
$|\Delta_{\mathcal W}|=u_{\rm arch}$. All $297$ residuals lie below half a
final-digit unit.}
\label{fig:W_saturation}
\end{figure}

The residual remains sensitive to the truncation prescription until the
omitted instanton scale has fallen below the archive resolution. To exhibit
this effect, the validation test also applies a uniform cutoff $M=60$ on a
thirteen-rank sample spanning $2\leq N\leq100$ at $k=2$ and $k=4$. The
largest residuals on this sample occur at $N=2$, where the nome is largest and
the omitted tail is least suppressed. They are
\begin{equation*}
3.82\times10^{-247}
\quad\text{at }k=2\,,
\qquad
3.43\times10^{-164}
\quad\text{at }k=4\,.
\end{equation*}
With the adaptive cutoff \eqref{eq:W-saturation-cutoff}, these small-rank
truncation effects are suppressed below the archival resolution. The largest
absolute residuals then occur at $N=66$ and $N=94$, respectively, where the
absolute size of the final stored digit is larger. Residuals at this precision
must therefore be quoted together with the truncation prescription used to
obtain them.

We also fix $(k,N)$ and examine the residual as a function of the truncation
order $M$. The three cases shown in Figure~\ref{fig:W_convergence} are
summarized below.
\begin{center}
\small
\renewcommand{\arraystretch}{1.2}
\begin{tabular}{@{}ccccc@{}}
\toprule
$k$ & $N$ & $M_{\max}$ &
$M_k^{\scriptscriptstyle(\mathcal W)}$ &
comparison with the reconstructed range \\
\midrule
$4$ & $20$ & $108$ & $93$ &
continuation tested for $94\leq m\leq108$ \\
$2$ & $30$ & $67$ & $67$ &
reconstructed endpoint reached \\
$1$ & $40$ & $45$ & $51$ &
within the reconstructed range \\
\bottomrule
\end{tabular}
\end{center}
In each case, the logarithm of the residual follows the geometric slope
$\log_{10}q$ until it reaches the resolution of the archived datum. The
$k=4$ curve uses the continued divisor coefficients through $m=108$, fifteen
orders beyond the independently reconstructed depth. No resolvable departure
from the geometric decay is observed over this interval. The $k=2$ curve
reaches the final reconstructed order, while the $k=1$ curve remains within
the reconstructed range. Thus all three curves test the expected truncation
behavior and its saturation at the archive resolution, while the $k=4$ curve
alone directly probes the all-order continuation beyond the reconstructed
coefficients.

\begin{figure}[!ht]
\centering
\includegraphics[width=0.62\textwidth]{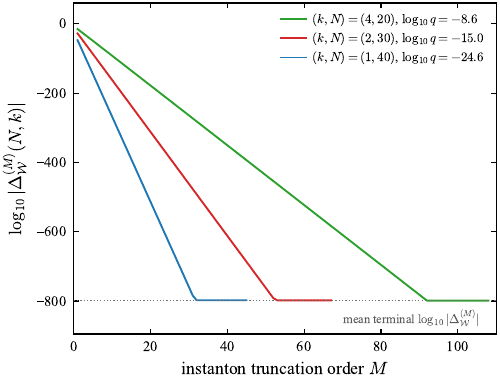}
\caption{Dependence of the twisted-superpotential residual on the integer
instanton truncation order $M$ at
$(k,N)=(4,20),(2,30),(1,40)$. Before saturation, each curve follows the
geometric slope $\log_{10}q$ given in the legend. The horizontal dotted line
marks the mean of the three terminal values of
$\log_{10}|\Delta_{\mathcal W}^{(M)}(N,k)|$. The curves terminate at
$M=108,67,45$, respectively, while the corresponding reconstructed depths are
$93,67,51$. The $k=4$ curve therefore probes the continued divisor formula
for $94\leq m\leq108$. The $k=2$ curve reaches its reconstructed endpoint,
whereas the $k=1$ curve remains within the reconstructed range.}
\label{fig:W_convergence}
\end{figure}

\subsection{Cross-checks of the constant maps}

The finite Clausen forms~\eqref{eq:clausen-odd}--\eqref{eq:clausen-even},
the Euler representation~\eqref{eq:CEuler}, and the integral
representation~\eqref{eq:CW} of $\CW(k)$ agree for every $1\le k\le50$ at
$70$-digit working precision. The largest pairwise discrepancy is
$6.4\times10^{-69}$ at $k=45$. No Bethe data enter this comparison. It tests
the numerical evaluation and mutual consistency of the three representations
proved equivalent in Theorem~\ref{thm:C}. Under a precision scan, the largest
discrepancy over $40\le k\le50$ tracks the working precision, taking the values
$8.6\times10^{-49}$, $6.4\times10^{-69}$, and $4.1\times10^{-99}$ at $50$,
$70$, and $100$ digits, respectively.

At $120$-digit working precision, the arithmetic
representation~\eqref{eq:f0arith} and the one-kernel
representation~\eqref{eq:f0kernel} of $\fzero(k)$ are compared at
$k=1,\ldots,10,12,16,20,30,50$.
The largest discrepancy is $2.5\times10^{-119}$ and occurs at $k=50$.
This numerical comparison complements Theorem~\ref{thm:f0}, which proves the
equality of the two representations for every real $k>0$.

As an independent check of Theorem~\ref{thm:Euler}, direct numerical summation
of the alternating Euler series~\eqref{eq:E-def} agrees with the closed
forms~\eqref{eq:E1-theorem} and~\eqref{eq:E2-theorem} for
$\cE_{1,p}$ through $p=12$ and for $\cE_{2,p}$ at odd $p\leq11$,
respectively. At $60$-digit working precision, the largest discrepancy is
$7.8\times10^{-61}$.

\subsection{Finite reconstruction depths}

The adaptive peeling procedure returns the depths shown in
Table~\ref{tab:depth}.

\begin{table}[ht]
\centering
\small
\begin{tabular}{@{}lccc@{}}
\toprule
 & $k=1$ & $k=2$ & $k=4$ \\
\midrule
Twisted superpotential,
$M_k^{\scriptscriptstyle(\mathcal W)}$
& $51$ & $67$ & $93$ \\
Selected-orbit index,
$M_k^{\scriptscriptstyle(F)}$
& $17$ & $15$ & $15$ \\
\bottomrule
\end{tabular}
\caption{Finite reconstruction depths returned by the adaptive peeling
procedure of Section~\ref{sec:peel}. At $k=1$, the
twisted-superpotential peel stops after sector $51$ because no admissible rank
window remains. At $k=2$ and $k=4$, the next sector fails bounded rational
recognition after sectors $67$ and $93$, respectively. The three index peels
stop by the same recognition criterion after sectors $17$, $15$, and $15$.
These depths do not imply termination of either instanton expansion.}
\label{tab:depth}
\end{table}

At every reconstructed twisted-superpotential order, the coefficient of
$q^m$ divided by the coefficient of $tq^m$ is $1/m$. The sector therefore
has the $t+1/m$ structure of Eq.~\eqref{eq:Wfull}, and the recognized
coefficient $c_m^{(k)}$ agrees with the divisor expression
\eqref{eq:cm}. Every reconstructed index sector has the corresponding
$t+1/m$ structure in Eq.~\eqref{eq:Znp}. The relation
\eqref{eq:Brel} also holds exactly over $\mathbb Q$ throughout the
reconstructed ranges.

The index peel fits three features per sector, compared with two for the
twisted superpotential. In all three index runs, the first unretained sector
fails the bounded rational-recognition step. The depths in
Table~\ref{tab:depth} record where the fixed reconstruction criteria cease to
support another sector and should not be interpreted as intrinsic termination
orders.

\subsection{Impact of the unresolved generator}

After subtracting all reconstructed index sectors, we difference the remaining
residual against the reference rank $N_{\rm ref}=100$ so that the
rank-independent term cancels. Denote the absolute rank-differenced residual
at level $k$ by $\mathfrak r_M^{(k)}(N)$, where
$M=M_k^{\scriptscriptstyle(F)}$ is the reconstructed depth.

At $N=30,40,60,80$, the rank dependence of
$\mathfrak r_M^{(k)}(N)$ follows that of $q^{M+1}$ for each of
$k=1,2,4$, as shown in Figure~\ref{fig:TTI_scaling}. At fixed $k$, the ratio
\begin{equation*}
\frac{|\mathfrak r_M^{(k)}(N)|}{q^{M+1}}
\end{equation*}
varies by less than a factor of $2.7$, while the residual itself decreases by
many decimal orders. The first omitted exponential scale therefore controls
the rank dependence of the unresolved contribution, with a level-dependent
coefficient.

For the twisted superpotential, the divisor formula~\eqref{eq:cm} supplies
coefficients beyond the independently reconstructed depth and permits the
continued sum in Eq.~\eqref{eq:Wnp}. No corresponding all-order expression
for $\cU_k(q)$ has been identified within the bounded search of
Section~\ref{sec:U}. The index reconstruction therefore cannot presently be
continued beyond the finite range in Table~\ref{tab:depth}.

\begin{figure}[t]
\centering
\includegraphics[width=0.62\textwidth]{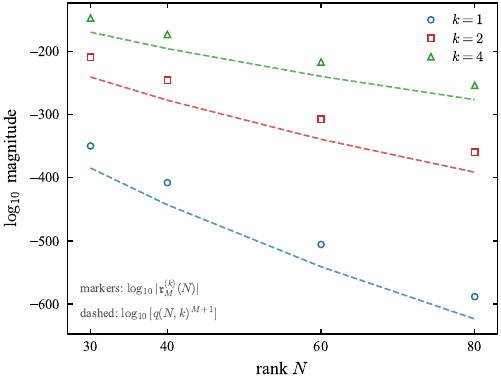}
\caption{First-omitted-sector scaling of the selected-orbit index. For each
level, $M$ denotes the corresponding reconstructed-sector cutoff
$\smash{M_k^{\scriptscriptstyle(F)}}$. After the reconstructed sectors are subtracted
and the rank-independent term is removed by rank differencing, the markers
show $\smash{\log_{10}|\mathfrak{r}_M^{(k)}(N)|}$ at $N=30,40,60,80$, while the dashed
curves show $\log_{10}[q(N,k)^{M+1}]$. At fixed $k$, the ratio
$\smash{|\mathfrak{r}_M^{(k)}(N)|/q(N,k)^{M+1}}$ varies by less than a factor of $2.7$.
The near-parallel decay is consistent with the exponential scale of the first
unreconstructed sector.}
\label{fig:TTI_scaling}
\end{figure}

\subsection{Type-IIA genus expansions}

At $90$-digit working precision, the genus
expansions~\eqref{eq:Wgen} and~\eqref{eq:Fgen} are compared with their
closed perturbative parents on the grid
\begin{equation*}
\lambda=5,10,20\,,
\qquad
k=10,20,30,60\,.
\end{equation*}
In all twelve cases, the truncation error reaches a minimum at finite order
and subsequently increases, as expected for an asymptotic expansion. At
$\lambda=10$, the minimum twisted-superpotential error decreases from
$6.0\times10^{-8}$ at $k=10$ to $1.9\times10^{-42}$ at $k=60$. For the
index, the corresponding errors decrease from
$8.8\times10^{-7}$ to $1.7\times10^{-40}$.

For each fixed value of $k$, the minimizing truncation order is the same at
$\lambda=5,10,20$ on the tested grid. This observed independence is
consistent with the late-order growth being governed by the
$\lambda$-independent Bernoulli-number contributions inherited from the
constant maps. The $\lambda$-dependent terms contribute at low and
intermediate order but do not share the dominant factorial growth.

Figures~\ref{fig:genusW} and~\ref{fig:genusZ} display the representative
slice $\lambda=10$ for $k=10,20,30$. Together with the complete grid
described above, they test the derived large-$k$ coefficient formulas against
the closed perturbative parents.

\section{Discussion and outlook}
\label{sec:discussion}

\subsection{Arithmetic organization on the selected Bethe orbit}

The comparison between the twisted superpotential and the topologically
twisted index is especially informative because both quantities are evaluated
on the same selected Bethe solution and its $k$-fold orbit. Their different
finite-rank structures therefore reflect the different information carried by
the two observables rather than a change of saddle. For the twisted
superpotential, the exponentially small sectors reduce at $k=1,2,4$ to a
single rational sequence $c_m^{(k)}$ governed by the divisor
formula~\eqref{eq:cm}. The finite-range relation~\eqref{eq:Brel} instead
reduces the three reconstructed index sequences to two independent generators,
$\cR_k$ and $\cU_k$. The first admits a modular-product ansatz within the
data-informed families of Section~\ref{sec:Rmodular}. No representation for
the second appears within the finite library and four-entry cap of
Section~\ref{sec:U}.

This difference points toward the additional local data entering the
Bethe-vacua formula for the index. Besides the critical value of $\Wt$, the
index depends on the Jacobian $\det\mathbb B$, the effective dilaton, and the
remaining one-loop factors. The extra generator may encode arithmetic
information carried by these fluctuation factors and absent from the on-shell
twisted superpotential. A direct decomposition of the reconstructed
coefficients into contributions from the individual factors in
Eq.~\eqref{eq:BAformula} would make this interpretation testable.

The rank-independent term of the index provides a complementary connection
between the two observables and the sphere partition function. The
rank-dependent parent~\eqref{eq:Zparent} agrees with the perturbatively exact
large-$N$ index of Refs.~\cite{Bobev:2022jte,Bobev:2022eus}. At the universal
twist, Ref.~\cite{Bobev:2022eus} determined the large-$k$ expansion of the
remaining rank-independent function through five inverse-power coefficients
and a numerically fitted constant. The one-kernel
representation~\eqref{eq:f0kernel} reproduces those five coefficients and
identifies the fitted constant as
\be
f_0
=
-8\zeta'(-1)
-\frac{23}{6}\log2
-\frac{2}{3}\log\pi\,.
\ee
Its numerical value differs from the fitted value quoted in
Ref.~\cite{Bobev:2022eus} by $5.5\times10^{-17}$.

More structurally, the arithmetic representation~\eqref{eq:f0arith} places
the sphere constants $A(k)$ and $A(k/2)$ and the twisted-superpotential
constant $\CW(k)$ in the same rank-independent index contribution. This
combination was reconstructed from the index plateaux.
It suggests that the effective dilaton, Jacobian, and
one-loop factors assemble the constant terms of the sphere and twisted
observables in a constrained way whose analytic origin remains to be
understood.

A second common feature is the shifted nome $Q=-q$. It exposes the
weight-four Eisenstein structure of the twisted-superpotential coefficients
and also organizes the modular-product families for $\cR_k$. In this
coordinate, the $k=1$ and $k=2$ twisted-superpotential sequences coincide and
are associated with level two, while the $k=4$ sequence has a distinct
level-four form. The equality at $k=1,2$ may reflect the $\cN=8$
enhancement, but the separate closure at $k=4$ shows that enhanced
supersymmetry is not by itself sufficient to explain the observed modular
organization. These special-level structures invite comparison with the
nonperturbative and spectral descriptions of the ABJM sphere partition
function.

\subsection{Relation to the sphere transseries and spectral modularity}

The modified grand potential provides the most direct setting for comparing
the exponential structures. In the Fermi-gas description of the sphere
partition function,
\be
J_k(\mu)
=
J_k^{\rm pert}(\mu)
+
J_k^{\rm np}(\mu)\,,
\qquad
J_k^{\rm pert}(\mu)
=
\frac{2\mu^3}{3\pi^2k}
+
\left(
\frac{k}{24}+\frac{1}{3k}
\right)\mu
+
A(k)\,.
\ee
The generic nonperturbative sector has the schematic form
\be
J_k^{\rm np}(\mu)
=
\sum_{\substack{n_{\rm ws},n_{\rm M2}\geq0\\
(n_{\rm ws},n_{\rm M2})\neq(0,0)}}
f_{n_{\rm ws},n_{\rm M2}}^{(k)}(\mu)
\exp\!\left[
-\left(
\frac{4n_{\rm ws}}{k}+2n_{\rm M2}
\right)\mu
\right].
\label{eq:S3-action-lattice}
\ee
The terms with $n_{\rm M2}=0$ are worldsheet instantons, those with
$n_{\rm ws}=0$ are membrane instantons, and terms with both integers nonzero
are mixed bound states. In the HMO organization, the bound-state contributions
are absorbed into the worldsheet sector through the effective chemical
potential $\mu_{\rm eff}$. Contributions that are separately singular at
integer $k$ then combine into finite coefficients
\cite{Hatsuda:2012dt,Hatsuda:2013gj,Hatsuda:2013oxa}.

At the three levels relevant to the present reconstruction, the resulting
nonperturbative grand potential can be organized in the single exponential
variable
\be
x_k=\e^{-4\mu_{\rm eff}/k}\,,
\qquad
k\in\{1,2,4\}.
\ee
Each coefficient in this expansion encodes the worldsheet, membrane, and
bound-state data contributing at the corresponding exponential order. The
sphere grand potential and the twisted superpotential may therefore both be
viewed as one-variable exponential expansions at these levels. Their
structural difference lies in the coefficient systems carried by the common
one-dimensional organization.

The sphere series is naturally grand canonical, whereas the twisted
superpotential studied here is a fixed-rank quantity. Their exponential
actions can nevertheless be compared through the perturbative saddle
underlying the Airy representation. In terms of the shifted 't~Hooft coupling
$\lamh=N/k-1/24$, the worldsheet action at the sphere saddle and the action
reconstructed from the twisted-superpotential data are
\be
t_{\sthree}
=
2\pi
\sqrt{
2\left(
\lamh-\frac{1}{3k^2}
\right)
}\,,
\qquad
t_{\mathcal W}
=
2\pi
\sqrt{
2\left(
\lamh+\frac{2}{3k^2}
\right)
}\,.
\label{eq:action-comparison}
\ee
They obey
\be
t_{\mathcal W}^{\,2}-t_{\sthree}^{\,2}
=
\frac{8\pi^2}{k^2}\,.
\ee
Both actions approach $2\pi\sqrt{2\lamh}$ at large $\lamh$, so the
twisted-superpotential instanton variable has the same leading action as the
sphere worldsheet sector. Their finite-rank shifts are different. The
exponentially small difference between $\mu_{\rm eff}$ and $\mu$ does not
modify this leading saddle comparison.

The agreement of the leading actions does not by itself identify the
coefficient structures. A coefficient at fixed order in the sphere grand
potential is obtained after combining worldsheet, membrane, and bound-state
contributions on the common exponential lattice. The worldsheet part is
governed by the ordinary topological string on local
$\mathbb P^1\times\mathbb P^1$, while the membrane part is governed by the
refined topological string in the Nekrasov--Shatashvili limit
\cite{Hatsuda:2013oxa}. These combined coefficients are naturally expressed
in terms of $\mu_{\rm eff}$. Before transforming to fixed rank, they are
re-expanded in the bare chemical potential $\mu$. Polynomial factors in
$\mu$ then become derivatives with respect to $N$ acting on Airy functions
whose arguments contain both the usual perturbative shift and the additional
shift generated by the corresponding instanton exponential
\cite[Eq.~(6.2)]{Codesido:2014oua}. The canonical expansion therefore does
not inherit the grand-potential coefficients term by term.

The twisted-superpotential reconstruction has a different coefficient
algebra. Its proposed continuation is
\be
\mathcal W_{\rm np}(N,k)
=
\frac{k^2}{2\pi}
\sum_{m\geq1}
c_m^{(k)}
\left(
t_{\mathcal W}+\frac1m
\right)
\e^{-m t_{\mathcal W}}\,,
\ee
where the $c_m^{(k)}$ are rational divisor sums. At the reconstructed
special levels, these coefficients are organized by weight-four Eisenstein
forms and their Eichler primitives. Both the sphere grand potential and the
twisted superpotential therefore admit one-dimensional exponential
organizations, but the coefficients encode different data. On the sphere
side they arise from the pole-cancelled combination of ordinary and refined
topological-string sectors. On the twisted-superpotential side they form a
single rational divisor sequence with the universal factor
$t_{\mathcal W}+1/m$.

This sharper comparison leads to a concrete open question. It remains to
determine whether the divisor sequence and its universal $t_{\mathcal W}+1/m$
dependence can be obtained from a distinguished projection or reorganization
of the ordinary and Nekrasov--Shatashvili topological-string data appropriate
to the selected Bethe observable. An alternative possibility is that the two
problems share the leading worldsheet action and the integer-level
exponential lattice while their coefficient structures have independent
origins.

The special levels provide a second comparison through spectral modularity.
For ABJM theory at $k=1,2$, the higher-genus worldsheet contributions simplify
and the spectral construction can be expressed in terms of ordinary Jacobi
theta functions associated with the mirror curve
\cite{Kallen:2013qla,Grassi:2014zfa,Codesido:2014oua}. At $k=4$, the
all-genus worldsheet contribution remains, but the generalized theta function
again reduces to a Jacobi theta function after an elliptic resummation. That
resummation was conjectured and checked order by order in
Ref.~\cite{Grassi:2014uua}. Within the same construction, the $k=4$ quantum
volume is related to that of the maximally supersymmetric
$(k,M)=(2,1)$ ABJ theory. The $k=4$ example is important because it shows that
special-level modular simplification is not confined to the
$\cN=8$ theories.

The modular structure found here acts on a different object. In the shifted
nome $Q=-q$, the proposed continuation of the common $k=1,2$
twisted-superpotential sequence is governed by
\be
E_4(Q)-4E_4(Q^2)
\in M_4\!\left(\Gamma_0(2)\right),
\ee
while the $k=4$ sequence is governed by
\be
E_4(Q)-16E_4(Q^4)
\in M_4\!\left(\Gamma_0(4)\right).
\ee
These forms organize the instanton coefficients of an on-shell Bethe
observable. The recurrence of the levels
$k=1,2,4$ is therefore a genuine point of contact with the sphere spectral
problem, while the modular quantities and the observables they control remain
different.

The index supplies a further variation. Within the data-informed families of
Section~\ref{sec:Rmodular}, $\cR_k$ admits a modular-product ansatz. Assuming
its all-order continuation, the modular orbit of its nearest zero determines
the logarithmic singularity, the radius of convergence, and the leading
coefficient growth in Proposition~\ref{prop:radius}. In the Fermi-gas spectral
problem, zeros of the spectral determinant determine energy levels. In the
present index problem, the relevant zero instead governs the analytic
structure of a coefficient-generating function. The analogy concerns the
organizing role of modular zeros, not an identification of the two spectral
objects.

\subsection{Outlook}

The first priority is an analytic derivation of the divisor
law~\eqref{eq:cm} and of the half-period shift $Q=-q$ from the Bethe equations.
Such a derivation would determine whether the Eisenstein coefficients arise
from a reorganization of the ordinary and refined topological-string data
entering the sphere grand potential. It would also clarify why the $k=1$ and
$k=2$ sequences coincide and why a separate level-four form appears at
$k=4$. Extending the deep reconstruction to $k=3,5$ and other levels would
help distinguish structures that are generic at integer $k$ from the stronger
modular closures observed at the three levels studied here.

For the index, the immediate problem is to understand the second generator
$\cU_k(q)$. Deeper finite-rank data would enlarge the reconstructed
coefficient window, but a derivation from the Jacobian, effective dilaton, and
one-loop factors in Eq.~\eqref{eq:BAformula} would be more decisive. It could
explain why $\cR_k$ admits the modular-product ansatz found here and reveal
the arithmetic structure governing $\cU_k$. An analytic derivation of
Eq.~\eqref{eq:f0arith} would likewise clarify why $A(k)$, $A(k/2)$, and
$\CW(k)$ enter the same rank-independent index term.

A further direction is to move beyond the selected orbit. Enumerating Bethe
vacua at small rank would determine whether other admissible solutions
contribute at comparable size and whether their finite-rank coefficients
display related arithmetic structures. The same analysis can be extended to
other Seifert manifolds~\cite{Closset:2018ghr} and to other
three-dimensional $\cN=2$ quiver gauge theories.

The perturbative parents, rank-independent terms, and type-IIA coefficients
provide field-theory benchmarks for higher-derivative
$\mathrm{AdS}_4$ holography
\cite{Bobev:2020egg,Bobev:2021oku}. The exponentially small sectors probe a
different part of the bulk expansion. Their actions are compatible with
brane-instanton scales, while their arithmetic coefficients remain without a
bulk derivation. Establishing such an interpretation requires an analytic
dictionary between the selected finite-$N$ Bethe solution and Euclidean
fundamental strings, wrapped branes, or more general nonperturbative saddles
in the dual geometry.

\section*{Acknowledgments}
It is a pleasure to thank Alberto Zaffaroni for valuable discussions and for his careful reading of and comments on the manuscript. I am also grateful to Parastoo Salah for many helpful discussions and insightful comments on the manuscript.
This research was supported by UK Research and Innovation (UKRI) under the UK government's Horizon Europe funding guarantee (Grant No.~EP/Y027604/1) and by the Science and Technology Facilities Council (STFC) under Grant No.~ST/X000494/1.

\textbf{Statement on use of AI.}
The ideas, arguments, and writing in this paper came from the author. An OpenAI model provided useful assistance with code development, computational checks, and copy editing.

\section*{Data and code availability}
The data and code supporting the findings of this manuscript are not publicly available and may be obtained from the author upon reasonable request.
\appendix

\section{Proof of the constant-map theorem}
\label{app:proof}

This appendix proves Theorem~\ref{thm:C}, supplying the argument deferred
in~\cite{Hosseini:2026dkj}. The proof proceeds in two steps. We first reduce the
integral representation~\eqref{eq:CW} exactly to alternating Euler sums at
rational arguments. We then evaluate those sums by root-of-unity methods,
obtaining the finite Clausen expressions of
Eqs.~\eqref{eq:clausen-odd}--\eqref{eq:clausen-even}. The required Euler-sum
identities are stated separately as Theorem~\ref{thm:Euler}.

\paragraph{Roadmap.}
Section~\ref{sec:prelim} establishes the convergence and interchange results
used throughout. Section~\ref{sec:integral-reduction} reduces the defining integral to
the sums $\cE_{1,k}$ and $\cE_{2,k}$.
Sections~\ref{sec:roots}--\ref{sec:E2} construct the relevant
root-of-unity kernels and prove Theorem~\ref{thm:Euler}, and
Section~\ref{sec:assembly} combines the two parity cases to prove
Theorem~\ref{thm:C}. Branch conventions are fixed below and kept unchanged
throughout the proof.

For positive integers $a,p$, define
\be
\cE_{a,p}
=
\sum_{m=1}^{\infty}
\frac{(-1)^{m+1}}{m^{2}}\,
\Hgen_{am/p}\,.
\label{eq:E-def}
\ee
For real $k>0$, let
\be
\mathfrak I_k
=
\int_0^{\infty}
\log(2\cosh x)\,
\log\bigl(1-\e^{-kx}\bigr)\,
\rd x\,,
\label{eq:Ik-def}
\ee
and
\be
\Cint(k)
=
-\frac{k^{2}\zeta(3)}{16\pi}
+\frac{2k}{\pi}\mathfrak I_k\,.
\label{eq:Cint-def}
\ee
The logarithms in~\eqref{eq:Ik-def} are real on the integration contour, and
the integral is absolutely convergent; this is established in
Proposition~\ref{prop:Euler-reduction}. The comparison with the finite Clausen
forms will require $k\in\mathbb Z_{>0}$.

For $z\geq0$, we use the generalized harmonic number in the equivalent forms
\be
\Hgen_z
=
\int_0^1
\frac{1-x^{z}}{1-x}\,
\rd x
\label{eq:H-def}
\ee
and
\be
\Hgen_z
=
\sum_{n=1}^{\infty}
\left(
\frac1n-\frac1{n+z}
\right).
\label{eq:H-series}
\ee
Their equivalence is proved in Lemma~\ref{lem:H}. In conventional notation,
$\Hgen_z=\psi(1+z)+\gamma$, where $\psi$ is the digamma function and $\gamma$
is the Euler--Mascheroni constant. No analytic continuation in $z$ is needed
below.

Euler sums of this general type can be treated using contour integrals,
polygamma or Hurwitz-zeta functions, and the Lerch transcendent
\cite{FlajoletSalvy1998,sofo2015quadratic}. Here we instead keep the
root-of-unity structure explicit, since the required answer is a finite
combination of Clausen values. We write
\be
\Cltwo(\theta)
=
\im\Li_2(\e^{\ii\theta})
=
\sum_{n\geq1}\frac{\sin(n\theta)}{n^{2}}\,,
\qquad
\cC_3(\theta)
=
\sum_{n\geq1}\frac{\cos(n\theta)}{n^{3}}\,,
\qquad
G=\Cltwo\!\left(\frac{\pi}{2}\right),
\label{eq:Cl-C3-def}
\ee
for real $\theta$. Both Fourier series converge absolutely and uniformly on
$\mathbb R$.

Finally, we use the principal logarithm
\be
\mathrm{Log}\,z
=
\log|z|+\ii\arg z\,,
\qquad
-\pi<\arg z<\pi\,,
\ee
with branch cut $(-\infty,0]$.
It enters only in Section~\ref{sec:roots}, where every
argument lies in the open right half-plane or is reached by an integrable
endpoint limit.

\begin{theorem}[Rational-argument Euler sums]
\label{thm:Euler}
For every positive integer $p$,
\be
\cE_{1,p}
=
\left(
\frac{3p}{8}
-\frac{5+7(-1)^{p}}{8p^{2}}
\right)\zeta(3)
+
\frac{\pi}{2p}
\sum_{j=1}^{p-1}
\frac{3-(-1)^{p+j}}{2}\,j\,
\Cltwo\!\left(\frac{2\pi j}{p}\right).
\label{eq:E1-theorem}
\ee
For every odd positive integer $p$,
\bea
\cE_{2,p}
={}&
\left(
\frac{3p}{16}
-\frac{13}{8p^{2}}
\right)\zeta(3)
-\frac{\pi}{p}(-1)^{(p+1)/2}G
\\
&+
\frac{3\pi}{8p}
\sum_{j=1}^{p-1}
\bigl(1-(-1)^{j}\bigr)j\,
\Cltwo\!\left(\frac{2\pi j}{p}\right)
+
\frac{\pi}{4p}
\sum_{j=1}^{p-1}
(-1)^{j}j\,
\Cltwo\!\left(\frac{4\pi j}{p}\right).
\label{eq:E2-theorem}
\eea
An empty sum is zero.
\end{theorem}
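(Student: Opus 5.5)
The plan is to convert each sum $\cE_{a,p}$ into a finite combination of Fourier series evaluated at rational angles, by means of the integral representation \eqref{eq:H-def} and a root-of-unity filter. Inserting $\Hgen_{am/p}=\int_0^1(1-x^{am/p})/(1-x)\,\rd x$ and substituting $x=y^{p}$ gives
\begin{equation*}
\cE_{a,p}=p\int_0^1\frac{y^{p-1}}{1-y^{p}}\Bigl[\eta(2)-\sum_{m\ge1}\frac{(-1)^{m+1}}{m^2}y^{am}\Bigr]\rd y
=p\int_0^1\frac{y^{p-1}\bigl[\eta(2)+\Li_2(-y^{a})\bigr]}{1-y^{p}}\,\rd y,
\end{equation*}
where $\eta(2)=\pi^2/12$. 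The interchange of sum and integral is justified by absolute convergence, because the bracket vanishes at $y=1$ and cancels the simple pole there. Next I decompose $y^{p-1}/(1-y^{p})=-\frac1p\sum_{j=0}^{p-1}\frac{1}{y-\varpi_p^{\,j}}$, with $\varpi_p=\e^{2\pi\ii/p}$. This reduces the problem to the elementary integrals
\begin{equation*}
J_a(\omega)=\int_0^1\frac{\eta(2)+\Li_2(-y^a)}{\omega-y}\,\rd y,\qquad \omega^{p}=1.
\end{equation*}
For $\omega\neq1$ each $J_a(\omega)$ converges on its own. The $\omega=1$ term has to stay combined with the others until the end, or be regularized by subtracting $\eta(2)+\Li_2(-1)=0$ exactly.

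To evaluate $J_a(\omega)$, I would integrate by parts using $\frac{\rd}{\rd y}\Li_2(-y^a)=-\frac{a}{y}\log(1+y^a)$. This gives
\begin{equation*}
J_a(\omega)=-\log(1-1/\omega)\cdot 0+\ldots-a\int_0^1\frac{\log(1-y/\omega)\log(1+y^a)}{y}\,\rd y.
\end{equation*}
The boundary terms vanish because of the factor $\eta(2)+\Li_2(-y^a)$ at $y=1$ and because of $\log(1-y/\omega)$ at $y=0$. I then factor $1+y^{a}$ over the roots $-\xi$ with $\xi^a=-1$: this is $1+y$ for $a=1$, and $(1-\ii y)(1+\ii y)$ for $a=2$. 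Each integral then takes the standard form
\begin{equation*}
\int_0^1\frac{\log(1-\alpha y)\log(1-\beta y)}{y}\,\rd y
\end{equation*}
with $|\alpha|=|\beta|=1$. This is a known weight-three combination of $\Li_3$ values at $\alpha$, $\beta$, $\alpha/\beta$ and products, together with $\Li_2\cdot\log$ terms. Since $\alpha$, $\beta$ are roots of unity of order dividing $2p$ or $4p$, the real parts produce $\cC_3$ values and the imaginary parts produce $\Cltwo$ values weighted by arguments linear in $j$. The weights $j\,\Cltwo(2\pi j/p)$ in \eqref{eq:E1-theorem} and \eqref{eq:E2-theorem} come from these $\arg(\cdot)\cdot\Cltwo$ cross terms.

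The summation over $j$ is then done with the distribution relations
\begin{equation*}
\sum_{j=0}^{p-1}\Li_s(\varpi_p^{\,j}z)=p^{1-s}\Li_s(z^p).
\end{equation*}
These collapse all $\cC_3$ sums into rational multiples of $\zeta(3)$, which yields the coefficients $3p/8-(5+7(-1)^p)/(8p^2)$ and $3p/16-13/(8p^2)$. The parity dependence comes from whether $-1$ is a $p$-th root of unity. For $a=2$ with $p$ odd, the points $\pm\ii\varpi_p^{\,j}$ are $4p$-th roots of unity. Their $\Cltwo$ values reduce, by the duplication formula $\Cltwo(2\theta)=2\Cltwo(\theta)+2\Cltwo(\theta+\pi)$ and the reflection symmetry, to $G$, $\Cltwo(2\pi j/p)$ and $\Cltwo(4\pi j/p)$. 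The sign $(-1)^{(p+1)/2}$ enters as $\ii^{p}$.

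The main obstacle is the branch bookkeeping in the dilogarithm–logarithm products. The arguments $\arg(1-\varpi_p^{\,j})=\pi j/p-\pi/2$ must be taken on the principal branch for $1\le j\le p-1$, and the resulting linear-in-$j$ terms must be assembled correctly. To control this, I would first verify the formulas for $p=1,2,3$, where $\cE_{1,1}$ and $\cE_{1,2}$ are classical, and fix signs against those cases. I would also cross-check numerically with the values given in the text. As a fallback, I would differentiate with respect to a continuous parameter, replacing $y^a$ by $ty^a$, to avoid evaluating the weight-three integrals directly.
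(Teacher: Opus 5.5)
\textbf{There is a genuine gap: the closed form of the kernel is never derived.} Your reduction is correct and is essentially the paper's. Partial fractions over the roots $\varpi_p^{\,j}$, followed by one integration by parts, give $\cE_{a,p}=-a\sum_{j}\int_0^1\log(1-\varpi_p^{\,j}y)\log(1+y^a)\,\rd y/y$. After taking real parts this is \eqref{eq:E-root}. The paper integrates by parts first and factors $\log(1-u^p)$ afterwards. Note also that $J_a(1)$ converges by itself, since its numerator vanishes at $y=1$.

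The gap is the next step, which carries the whole content of the theorem: a closed form for the kernel $\mathcal Q_a(\theta)=\re\int_0^1\mathrm{Log}(1-\e^{\ii\theta}y)\log(1+y^a)\,\rd y/y$. You defer it to a ``known'' combination of $\Li_3(\alpha)$, $\Li_3(\beta)$, $\Li_3(\alpha/\beta)$ and $\Li_2\cdot\log$ products. The complex integral is not of that form:
\begin{itemize}
\item At $\alpha=\beta$ it equals $2S_{1,2}(\alpha)$, whose standard evaluation contains $\Li_3(1-\alpha)$.
\item For $\alpha\neq\beta$, the $\alpha$-derivative already produces $\Li_2$ at $\beta/(\beta-\alpha)$ and $\beta(1-\alpha)/(\beta-\alpha)$, so $\Li_3$ at non-unimodular arguments appears.
\end{itemize}
What does collapse to $\cC_3$ and $\Cltwo$ values is the real part. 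Showing this, with the correct branches, the linear weights $\theta-\pi$, and the constants $-\tfrac38\zeta(3)$ and $-\tfrac3{32}\zeta(3)$, is exactly Propositions~\ref{prop:K1} and~\ref{prop:K2}. Fixing signs against $p=1,2,3$ and checking numerically does not supply this. Your fallback of differentiating in $t$ inside $\Li_2(-ty^a)$ only moves the weight-three integration to the $t$ variable.

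\textbf{The missing idea is to differentiate in the angle.} Two $\theta$-derivatives of $\mathcal J_a(\theta)$, followed by one integration by parts in $u$, give an elementary function. For example, $\mathcal Q_1''(\theta)=\tfrac12\log\bigl(2\sin\tfrac\theta2\bigr)+\tfrac{\theta-\pi}{4}\tan\tfrac\theta2$, and $\mathcal Q_2''$ is similar, with a removable singularity at $\theta=\pi/2$. A candidate built from $\cC_3$ and $\Cltwo$ has the same second derivative by Lemma~\ref{lem:Fourier}. The affine ambiguity is then removed by two conditions:
\begin{itemize}
\item $\mathcal Q_a'(\pi)=0$, because $\mathcal J_a'(\pi)$ is purely imaginary;
\item $\int_0^\pi\mathcal Q_a(\theta)\,\rd\theta=0$, which follows from the cosine series \eqref{eq:kernel-series}.
\end{itemize}

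\textbf{The final summation also needs more than the distribution relation.} Once the kernel is known, your distribution relation handles the $\cC_3$ terms as you say. The linear-in-$j$ Clausen terms do not collapse that way. Converting $\sum_j(2j-p)\bigl[\Cltwo(\tfrac{2\pi j}{p})-\tfrac12\Cltwo(\tfrac{4\pi j}{p})\bigr]$ and its $a=2$ analogue into the stated weights still requires parity-dependent reindexings. The target weights are $\tfrac{3-(-1)^{p+j}}{2}\,j$ for $\cE_{1,p}$, and $(1-(-1)^j)j$ and $(-1)^jj$ for $\cE_{2,p}$. The reindexings needed are $r\mapsto p-r$ and $r\mapsto 2r \bmod p$, as carried out in \eqref{eq:E1-finite} and in the $\mathcal O_1,\mathcal O_2$ argument.
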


Section~\ref{sec:validation} provides an independent numerical audit of
Theorem~\ref{thm:Euler}. Direct summation of~\eqref{eq:E-def} agrees with
\eqref{eq:E1-theorem} for $p\leq12$ and with~\eqref{eq:E2-theorem} for odd
$p\leq11$. At $60$-digit working precision, the largest discrepancy is
$7.8\times10^{-61}$.

\subsection{Preliminary lemmas}
\label{sec:prelim}

\begin{lemma}[Alternating zeta values]
\label{lem:eta}
\be
 \zeta(2)=\frac{\pi^2}{6}\,,\qquad
 \sum_{n=1}^\infty\frac{(-1)^{n+1}}{n^2}=\frac{\pi^2}{12}\,,\qquad
 \sum_{n=1}^\infty\frac{(-1)^{n+1}}{n^3}=\frac34\zeta(3)\,.
 \label{eq:eta-values}
\ee
\end{lemma}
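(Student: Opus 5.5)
We need to prove Lemma~\ref{lem:eta}: $\zeta(2)=\pi^2/6$, the alternating sum of $1/n^2$ equals $\pi^2/12$, and the alternating sum of $1/n^3$ equals $\tfrac34\zeta(3)$.

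The plan is to take the first identity as the classical Basel evaluation and derive the other two from it by separating even and odd terms. For $\zeta(2)=\pi^2/6$ we do not reprove it from scratch; it is standard. If a self-contained argument is wanted, the cleanest route within the paper's conventions is the Bernoulli evaluation
\[
\zeta(2n)=\frac{(2\pi)^{2n}|\mathrm B_{2n}|}{2(2n)!}
\]
quoted in Section~\ref{sec:Wgenus}, at $n=1$ with $\mathrm B_2=1/6$. Alternatively, one can use Parseval's identity for the Fourier series of $x$ on $(-\pi,\pi)$.

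For the alternating sums we use the even/odd splitting. For real $s>1$ all series involved converge absolutely, so rearrangement is legitimate. Writing $\eta(s)=\sum_{n\geq1}(-1)^{n+1}n^{-s}$ (a local symbol only, not the Dedekind $\eta$ of Eq.~\eqref{eq:eta-def}, so the written proof will avoid that letter), we have
\[
\sum_{n\geq1}\frac{(-1)^{n+1}}{n^s}=\sum_{n\geq1}\frac{1}{n^s}-2\sum_{n\geq1}\frac{1}{(2n)^s}=(1-2^{1-s})\zeta(s).
\]
Substituting $s=2$ gives $\tfrac12\zeta(2)=\pi^2/12$. Substituting $s=3$ gives $\tfrac34\zeta(3)$.

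No step is a genuine obstacle. The only point requiring care is to justify the splitting by absolute convergence, that is, to split before cancelling rather than manipulating a conditionally convergent series. For $s=2,3$ this is immediate. The Basel value itself is simply cited.
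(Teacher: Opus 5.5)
Your proposal is correct and follows the paper's proof: the paper obtains $\zeta(2)=\pi^2/6$ from Parseval's identity for the Fourier series of $x$ on $(-\pi,\pi)$, which is the alternative you mention. It then derives the alternating values by the same absolutely convergent even/odd splitting, giving $(1-2^{1-s})\zeta(s)$ at $s=2,3$. The only difference is that your main route cites the Basel value, whereas the paper proves it so that the appendix is self-contained.
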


\begin{proof}
Apply Parseval's identity to the $2\pi$-periodic extension of $f(x)=x$ on $(-\pi,\pi)$.  Its cosine coefficients vanish by oddness.  Its sine coefficient is
\bea
 b_n&=\frac2\pi\int_0^\pi x\sin(nx)\,\rd x\\
 &=\frac2\pi\left(\left[-\frac{x\cos(nx)}n\right]_0^\pi
 +\frac1n\int_0^\pi\cos(nx)\,\rd x\right)
 =\frac{2(-1)^{n+1}}n\,.
\eea
Parseval gives
\begin{equation*}
 \frac1\pi\int_{-\pi}^{\pi}x^2\,\rd x
 =\sum_{n=1}^\infty b_n^2\,.
\end{equation*}
The left side is $2\pi^2/3$, while the right side is $4\zeta(2)$.  Therefore $\zeta(2)=\pi^2/6$.

Absolute convergence permits separation into even and odd indices.  For $s=2,3$,
\bea
 \sum_{n=1}^\infty\frac{(-1)^{n+1}}{n^s}
 &=\zeta(s)-2\sum_{n=1}^\infty\frac1{(2n)^s}
 =\bigl(1-2^{1-s}\bigr)\zeta(s)\,.
\eea
Substitution of $s=2$ and $s=3$ proves \eqref{eq:eta-values}.
\end{proof}

\begin{lemma}[Generalized harmonic numbers]
\label{lem:H}
For every real $z\ge0$, the integral \eqref{eq:H-def} converges and equals \eqref{eq:H-series}.  For every integer $N\ge1$, $\Hgen_N=H_N=\sum_{r=1}^N1/r$.  For fixed positive integers $a,p$,
\begin{equation*}
 \Hgen_{am/p}=O(\log m)\,,\qquad m\to\infty\,,
\end{equation*}
and the series \eqref{eq:E-def} converges absolutely.
\end{lemma}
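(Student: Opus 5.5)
The plan has three parts, proved in order: convergence of the integral, equality of the integral with the series, and the growth bound on $\Hgen_{am/p}$.

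\emph{Convergence of \eqref{eq:H-def}.} For $z\ge0$ the integrand $(1-x^z)/(1-x)$ is continuous on $[0,1)$. It is bounded near $x=1$, since by the mean value theorem
\[
0\le \frac{1-x^z}{1-x}\le \max(z,1).
\]
Near $x=0$ it is bounded by $2$, with the convention $x^0=1$ so that $\Hgen_0=0$. Hence the integral converges absolutely.

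\emph{Equality with \eqref{eq:H-series}.} For $0\le x<1$ I expand
\[
\frac{1-x^z}{1-x}=\sum_{n\ge0}\bigl(x^n-x^{n+z}\bigr).
\]
Every term is nonnegative on $[0,1]$ because $z\ge0$. Tonelli's theorem therefore permits termwise integration, and each term integrates to
\[
\frac{1}{n+1}-\frac{1}{n+1+z}.
\]
Reindexing $n\to n-1$ gives \eqref{eq:H-series}. The series converges because its terms equal $z/(n(n+z))=O(n^{-2})$.

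\emph{Integer arguments.} For $z=N$ a positive integer, \eqref{eq:H-series} telescopes, since
\[
\sum_{n\ge1}\left(\frac1n-\frac{1}{n+N}\right)=\sum_{n=1}^{N}\frac1n.
\]
The tail contributions cancel in the limit. Alternatively, $(1-x^N)/(1-x)=\sum_{r=0}^{N-1}x^r$ in \eqref{eq:H-def} gives the same result directly.

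\emph{Logarithmic growth.} $\Hgen_z$ is nondecreasing in $z$, because each term in \eqref{eq:H-series} is increasing in $z$. For $z\ge1$, comparing with integer arguments gives
\[
0\le \Hgen_z\le H_{\lceil z\rceil}\le 1+\log\lceil z\rceil.
\]
The last inequality follows from comparing $\sum 1/r$ with $\int \rd x/x$. With $z=am/p$ this yields $\Hgen_{am/p}\le 1+\log(am/p+1)=O(\log m)$.

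\emph{Absolute convergence of \eqref{eq:E-def}.} The terms are bounded by $C\log(m+1)/m^2$, which is summable. Hence \eqref{eq:E-def} converges absolutely.

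No step is a genuine obstacle. The only point requiring care is justifying the interchange of sum and integral, and nonnegativity of the terms for $z\ge0$ makes Tonelli's theorem sufficient, so no dominated-convergence bound is needed.
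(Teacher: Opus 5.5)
Your proof is correct and follows essentially the same route as the paper: Tonelli's theorem applied to the nonnegative terms $x^n-x^{n+z}$, the finite geometric sum for integer $N$, and the comparison $\Hgen_{am/p}\le H_{\lceil am/p\rceil}\le 1+\log\lceil am/p\rceil$. The differences are cosmetic. You bound the integrand separately to get convergence, whereas the paper reads it off directly from Tonelli and the convergent series. You also derive monotonicity in $z$ from the series terms, whereas the paper uses $x^{am/p}\ge x^{\lceil am/p\rceil}$ inside the integral.
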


\begin{proof}
For $0\le x<1$,
\begin{equation*}
 \frac1{1-x}=\sum_{n=0}^\infty x^n\,.
\end{equation*}
When $z\ge0$, every function $(1-x^z)x^n$ is nonnegative.  Tonelli's theorem gives
\bea
 \int_0^1\frac{1-x^z}{1-x}\,\rd x
 &=\sum_{n=0}^\infty\int_0^1(x^n-x^{n+z})\,\rd x\\
 &=\sum_{n=0}^\infty\left(\frac1{n+1}-\frac1{n+z+1}\right)
 =\sum_{n=1}^\infty\left(\frac1n-\frac1{n+z}\right).
\eea
For $z>0$, the summand equals $z/[n(n+z)]\le z/n^2$, so the series converges.  For $z=0$, every summand vanishes.

If $z=N\in\mathbb Z_{>0}$, then
\begin{equation*}
 \frac{1-x^N}{1-x}=1+x+\cdots+x^{N-1}\,.
\end{equation*}
Integration of this finite sum gives $\Hgen_N=\sum_{r=1}^N1/r$.

Let $N_m=\lceil am/p\rceil$.  Since $am/p\le N_m$ and $0<x<1$, one has $x^{am/p}\ge x^{N_m}$, and therefore $\Hgen_{am/p}\le H_{N_m}$.  For $N\ge1$,
\bea
 H_N&=1+\sum_{r=2}^N\frac1r
 \le1+\int_1^N\frac{\rd t}{t}=1+\log N\,.
\eea
Thus $\Hgen_{am/p}\le K^{(0)}_{a,p}+K^{(1)}_{a,p}\log m$ for constants $K^{(0)}_{a,p},K^{(1)}_{a,p}$ depending only on $a,p$.  Finally,
\begin{equation*}
 \sum_{m=1}^\infty\frac{\Hgen_{am/p}}{m^2}
 \le K^{(0)}_{a,p}\zeta(2)+K^{(1)}_{a,p}\sum_{m=1}^\infty\frac{\log m}{m^2}<\infty\,,
\end{equation*}
because the final series converges by the integral test.
\end{proof}

\begin{lemma}[Logarithm series]
\label{lem:log}
For $|z|<1$,
\be
 -\mathrm{Log}(1-z)=\sum_{n=1}^\infty\frac{z^n}{n}\,.
 \label{eq:log-series}
\ee
The convergence is absolute and uniform on every closed disk $|z|\le r<1$.
\end{lemma}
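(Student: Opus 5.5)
The plan is to prove the series identity first for real $z$ with $|z|<1$, using the geometric series, and then to promote it to complex $z$ by a comparison/analytic-continuation argument. For real $0\le x<1$, the geometric series gives
\begin{equation*}
\frac{1}{1-t}=\sum_{n\ge0}t^n
\end{equation*}
uniformly on $[0,x]$, and the analogous statement holds on $[x,0]$ for $-1<x\le 0$. Integrating term by term from $0$ to $x$ then yields $-\log(1-x)=\sum_{n\geq1}x^n/n$. On this real interval the principal logarithm $\mathrm{Log}$ coincides with the real logarithm, because $1-x>0$.

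For complex $z$, both sides will be shown to be holomorphic on the unit disc $|z|<1$ and to agree on $(-1,1)$; the identity theorem then gives \eqref{eq:log-series}. The right-hand side is a power series with radius of convergence one, since $(1/n)^{1/n}\to1$. For the left-hand side, note that for $|z|<1$ one has $\re(1-z)>0$. The argument $1-z$ therefore stays in the open right half-plane and never meets the branch cut $(-\infty,0]$, so $\mathrm{Log}(1-z)$ is holomorphic there. As an alternative that avoids the identity theorem, one can check directly that both sides vanish at $z=0$ and have the same derivative $1/(1-z)=\sum_{n\ge0} z^n$. Their difference is then a holomorphic function with zero derivative on the connected disc, hence constant, and the constant is zero.

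Absolute and uniform convergence on a closed disc $|z|\le r<1$ follows from the Weierstrass M-test with majorant $\sum_{n\ge1} r^n/n=-\log(1-r)<\infty$.

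There is no real obstacle here. The only point needing care is the branch: one must confirm that $1-z$ avoids the cut $(-\infty,0]$, so that $\mathrm{Log}(1-z)$ is the holomorphic branch vanishing at $z=0$, and this is exactly what $\re(1-z)>0$ on the unit disc guarantees.
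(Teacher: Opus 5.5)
Your proof is correct, though your main route differs from the paper's. The paper works directly with complex $z$. It writes $-\mathrm{Log}(1-z)=\int_0^1 z/(1-tz)\,\rd t$, expands the integrand as the geometric series $\sum_n z^{n+1}t^n$, which converges uniformly for $|z|\le r$ and $t\in[0,1]$, and integrates term by term. The only branch check it needs is $\re(1-tz)\ge 1-t|z|\ge 1-r>0$ along the whole segment. This makes $t\mapsto-\mathrm{Log}(1-tz)$ a legitimate primitive of the integrand.

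Your primary argument instead proves the real case and transfers it to the disc using holomorphy of both sides and the identity theorem. That is standard but uses more complex-analytic machinery than the paper's one-line radial computation. Your derivative-comparison alternative is essentially the paper's argument in differential rather than integral form: both rest on $\frac{\rd}{\rd z}\bigl[-\mathrm{Log}(1-z)\bigr]=1/(1-z)$ and the geometric series, applied along the segment $[0,z]$. With that alternative, the real-variable step becomes unnecessary.

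Your version is more explicit than the paper's on the final convergence claim. The paper applies the Weierstrass test to the geometric series and leaves the absolute and uniform convergence of $\sum_n z^n/n$ on $|z|\le r$ implicit. Your M-test with majorant $\sum_n r^n/n$ proves it directly.
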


\begin{proof}
For $|z|\le r<1$ and $0\le t\le1$, the geometric series
\begin{equation*}
 \frac{z}{1-tz}=\sum_{n=0}^\infty z^{n+1}t^n
\end{equation*}
is uniformly convergent by the Weierstrass test.  Moreover,
$\re(1-tz)\ge1-t|z|\ge1-r>0$, so the segment $1-tz$ remains in the
open right half-plane and does not meet the principal branch cut.  Hence
\bea
 -\mathrm{Log}(1-z)&=\int_0^1\frac{z}{1-tz}\,\rd t
 =\sum_{n=0}^\infty z^{n+1}\int_0^1t^n\,\rd t
 =\sum_{n=1}^\infty\frac{z^n}{n}\,.
\eea
\end{proof}

\begin{lemma}[Fourier identities]
\label{lem:Fourier}
For $\theta\notin2\pi\mathbb Z$,
\bea
 \Cltwo'(\theta)&=-\log\left(2\left|\sin\frac\theta2\right|\right),\\
 \cC_3'(\theta)&=-\Cltwo(\theta)\,,\\
 \cC_3''(\theta)&=\log\left(2\left|\sin\frac\theta2\right|\right).
 \label{eq:Fourier-derivatives}
\eea
For every real $x$,
\be
 \Cltwo(\pi-x)=\Cltwo(x)-\frac12\Cltwo(2x)\,.
 \label{eq:Cl-dup}
\ee
Moreover,
\be
 \cC_3(0)=\zeta(3)\,,\qquad
 \cC_3(\pi)=-\frac34\zeta(3)\,,\qquad
 \int_0^\pi x\Cltwo(x)\,\rd x=\frac{3\pi}{4}\zeta(3)\,.
 \label{eq:weighted-Cl}
\ee
\end{lemma}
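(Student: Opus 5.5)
The plan is to derive everything from the Fourier series of $\Cltwo$ and $\cC_3$ in \eqref{eq:Cl-C3-def}, differentiating term by term only where justified. For the derivative identities, I would begin with $\cC_3$. Its Fourier series converges absolutely and uniformly, and the series of derivatives, $-\sum\sin(n\theta)/n^2$, is also uniformly convergent. Term-by-term differentiation is therefore legitimate and gives $\cC_3'=-\Cltwo$ on all of $\mathbb R$.

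For $\Cltwo'$ the differentiated series $\sum\cos(n\theta)/n$ converges only conditionally, so I would not differentiate it directly. Instead, for $\theta\in(0,2\pi)$ I would use Lemma~\ref{lem:log} at $z=r\e^{\ii\theta}$ with $r<1$ and take real parts:
\begin{equation*}
-\sum_{n\geq1}\frac{r^n\cos n\theta}{n}=\log\left|1-r\e^{\ii\theta}\right|.
\end{equation*}
Letting $r\to1^-$ (Abel's theorem, using convergence of $\sum\cos(n\theta)/n$ for $\theta\notin2\pi\mathbb Z$) gives $-\sum\cos(n\theta)/n=\log|1-\e^{\ii\theta}|=\log(2|\sin(\theta/2)|)$.

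To turn this into a statement about $\Cltwo'$, I would integrate rather than differentiate. I would show $\Cltwo(\theta)=-\int_0^\theta\log(2\sin(u/2))\,\rd u$ by integrating the Abel-regularized series term by term for fixed $r<1$, which is uniform. Then I would let $r\to1$ using dominated convergence: the functions $\log|1-r\e^{\ii u}|$ are dominated by an integrable logarithmic singularity, since $|1-r\e^{\ii u}|\geq|\sin u|$ for $u\in(0,\pi/2]$ and is bounded below on compact sets away from $0$. The fundamental theorem of calculus then gives the $\Cltwo'$ formula away from $2\pi\mathbb Z$, extended by periodicity and oddness. The formula for $\cC_3''$ follows by combining the first two identities.

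For \eqref{eq:Cl-dup}, I would argue at the level of Fourier coefficients. The coefficient of $\sin(nx)$ in $\Cltwo(\pi-x)$ is $(-1)^{n+1}/n^2$. On the right-hand side, $\Cltwo(x)$ contributes $1/n^2$, and $-\tfrac12\Cltwo(2x)=-\tfrac12\sum\sin(2mx)/m^2$ contributes $-2/n^2$ exactly when $n$ is even. The two sides agree term by term, and absolute convergence of all three series makes the rearrangement valid.

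The special values follow from Lemma~\ref{lem:eta}: $\cC_3(0)=\zeta(3)$ and $\cC_3(\pi)=\sum(-1)^n/n^3=-\tfrac34\zeta(3)$. For the weighted integral, I would use uniform convergence to integrate term by term:
\begin{equation*}
\int_0^\pi x\sin(nx)\,\rd x=\frac{\pi(-1)^{n+1}}{n}.
\end{equation*}
This is the same computation as $b_n$ in Lemma~\ref{lem:eta}. Summing gives $\pi\sum(-1)^{n+1}/n^3=\tfrac{3\pi}{4}\zeta(3)$. As a cross-check, integrating by parts with $\cC_3'=-\Cltwo$ gives $-\pi\cC_3(\pi)+\int_0^\pi\cC_3=\tfrac34\pi\zeta(3)+0$, since every $\cos(nx)$ integrates to zero over $[0,\pi]$.

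The main obstacle is the $\Cltwo'$ identity. The conditionally convergent cosine series has to be handled via the Abel limit together with a dominated-convergence bound near the logarithmic singularity; every other step is routine term-by-term manipulation of absolutely convergent series.
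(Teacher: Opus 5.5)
Your proposal is correct and matches the paper's proof in substance. For $\Cltwo'$ you use the same Abel regularization $\sum_n r^n\sin(n\theta)/n^2$, whose derivative is $-\log|1-r\e^{\ii\theta}|$ by Lemma~\ref{lem:log}, and $\cC_3'=-\Cltwo$, the duplication formula~\eqref{eq:Cl-dup}, the special values, and the weighted moment are handled by the same termwise arguments.

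The only variation is where you anchor the integration. You integrate from $u=0$, so you need dominated convergence against $\log|\sin u|$, and in return you obtain the classical representation $\Cltwo(\theta)=-\int_0^\theta\log(2\sin\tfrac u2)\,\rd u$. The paper instead integrates between two points of a compact interval avoiding $2\pi\mathbb Z$, where $\log|1-r\e^{\ii\theta}|$ converges uniformly, so no domination is needed. In either version, your appeal to Abel's theorem for the conditionally convergent cosine series is unnecessary.
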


\begin{proof}
For $0<r<1$, define
\begin{equation*}
 \mathscr C_r(\theta)=\sum_{n=1}^\infty\frac{r^n\sin(n\theta)}{n^2}\,.
\end{equation*}
The function and derivative series converge uniformly, so
\bea
 \mathscr C_r'(\theta)&=\sum_{n=1}^\infty\frac{r^n\cos(n\theta)}n
 =-\re\mathrm{Log}(1-r\e^{\ii\theta})
 =-\log|1-r\e^{\ii\theta}|\,.
\eea
Let $K_\theta$ be a compact interval disjoint from $2\pi\mathbb Z$.  The quantity $|1-\e^{\ii\theta}|$ has a positive minimum on $K_\theta$.  Thus $\mathscr C_r'$ converges uniformly on $K_\theta$ to
\begin{equation*}
 -\log|1-\e^{\ii\theta}|=-\log\left(2\left|\sin\frac\theta2\right|\right).
\end{equation*}
Also $\mathscr C_r\to\Cltwo$ uniformly on $\mathbb R$, because
\begin{equation*}
 \sup_\theta|\mathscr C_r(\theta)-\Cltwo(\theta)|
 \le\sum_{n=1}^\infty\frac{1-r^n}{n^2}\longrightarrow0
\end{equation*}
by dominated convergence for series.  Integrating $\mathscr C_r'$ between two points of $K_\theta$ and taking $r\uparrow1$ proves the first line of \eqref{eq:Fourier-derivatives}.  The derivative series of $\cC_3$ is absolutely and uniformly convergent, so $\cC_3'=-\Cltwo$.  Differentiating on $K_\theta$ gives the third identity.

For \eqref{eq:Cl-dup}, absolute convergence permits coefficient comparison:
\begin{equation*}
 \Cltwo(\pi-x)=\sum_{n=1}^\infty\frac{(-1)^{n+1}\sin(nx)}{n^2}\,.
\end{equation*}
In $\Cltwo(x)-\tfrac12\Cltwo(2x)$, the coefficient of $\sin(nx)$ is $1/n^2$ for odd $n$ and $-1/n^2$ for even $n$.  This equals $(-1)^{n+1}/n^2$.

The endpoint values of $\cC_3$ are direct consequences of Lemma~\ref{lem:eta}.  Finally,
\begin{equation*}
 \sum_{n=1}^\infty\frac1{n^2}\int_0^\pi|x\sin(nx)|\,\rd x
 \le\pi^2\zeta(2)<\infty\,.
\end{equation*}
Fubini's theorem therefore permits termwise integration.  Integration by parts gives
\begin{equation*}
 \int_0^\pi x\sin(nx)\,\rd x=-\frac{\pi(-1)^n}{n}\,.
\end{equation*}
Therefore
\begin{equation*}
 \int_0^\pi x\Cltwo(x)\,\rd x
 =-\pi\sum_{n=1}^\infty\frac{(-1)^n}{n^3}
 =\frac{3\pi}{4}\zeta(3)\,.
\end{equation*}
\end{proof}

\begin{lemma}[Root-of-unity filter]
\label{lem:filter}
Let $p\ge1$ and $\varpi_p=\e^{2\pi\ii/p}$.  For every integer $n$,
\be
 \sum_{r=0}^{p-1}\varpi_p^{nr}=
 \begin{cases}p,&p\mid n,\\0,&p\nmid n.\end{cases}
 \label{eq:filter}
\ee
\end{lemma}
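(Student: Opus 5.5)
The statement to prove is Lemma~\ref{lem:filter}, the root-of-unity filter. The plan is a direct geometric-series argument in the single complex number $w=\varpi_p^{n}$. The only inputs are that $\varpi_p=\e^{2\pi\ii/p}$ satisfies $\varpi_p^{p}=1$, and that $\varpi_p^{n}=1$ exactly when $p\mid n$. The latter follows because $\e^{2\pi\ii n/p}=1$ if and only if $n/p\in\mathbb Z$, by periodicity and injectivity of $\theta\mapsto\e^{\ii\theta}$ on $[0,2\pi)$. Negative $n$ need no separate treatment, since $w=\varpi_p^n$ is defined for every integer $n$.

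\emph{Case $p\mid n$.} Here $w=1$, so every summand $w^{r}$ equals $1$ and the sum over $r=0,\ldots,p-1$ is $p$.

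\emph{Case $p\nmid n$.} Here $w\neq1$, and I would use the finite geometric-sum identity
\begin{equation*}
\sum_{r=0}^{p-1}w^{r}=\frac{1-w^{p}}{1-w}\,.
\end{equation*}
This identity follows by multiplying the sum by $1-w$ and telescoping. Since $w^{p}=\varpi_p^{np}=(\varpi_p^{p})^{n}=1$, the numerator vanishes while the denominator does not, so the sum is $0$.

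There is no genuine obstacle. The only point needing care is the characterization of when $\varpi_p^{n}=1$, which is immediate from $\e^{2\pi\ii n/p}=1\iff n/p\in\mathbb Z$. The case $p=1$ is covered trivially, since $1$ divides every $n$ and the sum is the single term $1=p$.
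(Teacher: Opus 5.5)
Your proposal is correct and follows the paper's proof: both split on whether $p\mid n$, observe that every term equals $1$ in the first case, and evaluate the finite geometric sum $(1-w^{p})/(1-w)=0$ with $w=\varpi_p^{n}\neq1$ in the second. Your explicit justification that $\varpi_p^{n}=1$ exactly when $p\mid n$ spells out a step the paper leaves implicit.
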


\begin{proof}
If $p\mid n$, every term is one.  If $p\nmid n$, then $\varpi_p^n\ne1$ and the finite geometric sum is
\begin{equation*}
 \frac{1-(\varpi_p^n)^p}{1-\varpi_p^n}=0\,.
\end{equation*}
\end{proof}

\subsection{Integral reduction}
\label{sec:integral-reduction}

\begin{proposition}[Convergence and the Euler representation]
\label{prop:Euler-reduction}
The integral \eqref{eq:Ik-def} converges absolutely for every real $k>0$, and
\be
 \Cint(k)=-\frac{k^2\zeta(3)}{16\pi}
 -\frac{2\zeta(3)}{\pi k}
 -\frac{k}{\pi}\sum_{m=1}^\infty\frac{(-1)^{m+1}}{m^2}\Hgen_{2m/k}\,.
 \label{eq:C-Euler}
\ee
\end{proposition}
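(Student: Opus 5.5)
The plan is to expand both logarithms in the integrand of \eqref{eq:Ik-def} into exponential series and integrate term by term. Since $\Cint(k)=-\frac{k^2\zeta(3)}{16\pi}+\frac{2k}{\pi}\mathfrak I_k$, the identity \eqref{eq:C-Euler} is equivalent to
\begin{equation*}
\mathfrak I_k=-\frac{\zeta(3)}{k^2}-\frac12\,\cE,
\qquad
\cE\equiv\sum_{m\geq1}\frac{(-1)^{m+1}}{m^2}\Hgen_{2m/k},
\end{equation*}
and this is what I will establish for real $k>0$.

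\textbf{Step 1 (absolute convergence).} Near $x=0$, $\log(2\cosh x)$ is bounded and $|\log(1-\e^{-kx})|\sim|\log(kx)|$, which is integrable. For large $x$, $\log(2\cosh x)\le x+\log 2$ and $|\log(1-\e^{-kx})|\le C\e^{-kx}$. The integrand is therefore absolutely integrable on $(0,\infty)$.

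\textbf{Step 2 (split the kernel).} For $x>0$, write $\log(2\cosh x)=x+\log(1+\e^{-2x})$. By Lemma~\ref{lem:log},
\begin{equation*}
\log(1+\e^{-2x})=\sum_{m\geq1}\frac{(-1)^{m+1}}{m}\e^{-2mx},
\qquad
\log(1-\e^{-kx})=-\sum_{r\geq1}\frac{\e^{-krx}}{r}.
\end{equation*}
The linear piece is handled by Tonelli, since all terms have a fixed sign:
\begin{equation*}
\int_0^\infty x\log(1-\e^{-kx})\,\rd x=-\sum_{r\geq1}\frac{1}{r}\cdot\frac{1}{k^2r^2}=-\frac{\zeta(3)}{k^2}.
\end{equation*}

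\textbf{Step 3 (the double series).} The remaining piece is
\begin{equation*}
-\sum_{m,r\geq1}\frac{(-1)^{m+1}}{mr}\int_0^\infty \e^{-(2m+kr)x}\,\rd x
=-\sum_{m,r\geq1}\frac{(-1)^{m+1}}{mr(2m+kr)}.
\end{equation*}
The inner $r$-sum telescopes into a harmonic number:
\begin{equation*}
\sum_{r\geq1}\frac{1}{r(2m+kr)}=\frac{1}{2m}\sum_{r\geq1}\left(\frac1r-\frac{1}{r+2m/k}\right)=\frac{\Hgen_{2m/k}}{2m},
\end{equation*}
using the series form \eqref{eq:H-series} from Lemma~\ref{lem:H}. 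The $m$-sum then gives exactly $-\frac12\cE$, and adding Step 2 yields \eqref{eq:C-Euler}.

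\textbf{Main obstacle.} The delicate point is justifying the interchange of the double sum with the integral in Step 3, because of the alternating signs. I will bound the absolute series: all exponentials are nonnegative, so
\begin{equation*}
\sum_{m,r\geq1}\frac{1}{mr}\int_0^\infty \e^{-(2m+kr)x}\,\rd x=\sum_{m\geq1}\frac{\Hgen_{2m/k}}{2m^2}.
\end{equation*}
This is finite because $\Hgen_z=O(\log z)$. For real $k$, I will use the bound $\Hgen_z\le 1+\log\lceil z\rceil$ from the proof of Lemma~\ref{lem:H}, which holds for any real $z\geq0$, not only rational arguments. Fubini--Tonelli then legitimizes the termwise integration and the rearrangement into iterated sums, with $r$ summed first.
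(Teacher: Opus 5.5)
Your proposal is correct and follows the paper's proof essentially step by step: the same split $\log(2\cosh x)=x+\log(1+\e^{-2x})$ (the paper writes it as $-\log z+\log(1+z^2)$ after substituting $z=\e^{-x}$), Tonelli for the linear piece giving $-\zeta(3)/k^2$, and Fubini plus the partial fraction $\frac{1}{r(2m+kr)}=\frac{1}{2m}\bigl(\frac1r-\frac{1}{r+2m/k}\bigr)$ for the double series. The only difference is how you dominate the double series. The paper uses $2m+kn\ge 2k_{\min}\sqrt{mn}$ to compare with $\zeta(3/2)^2$, whereas you sum the positive series exactly to $\sum_m \Hgen_{2m/k}/(2m^2)$ and apply the bound $\Hgen_z\le 1+\log\lceil z\rceil$ from Lemma~\ref{lem:H}; that bound does extend verbatim to real $z>0$, so both justifications are valid.
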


\begin{proof}
As $x\to0^+$,
\begin{equation*}
 \log(2\cosh x)=\log2+O(x^2)\,,\qquad
 \log(1-\e^{-kx})=\log(kx)+O(x)\,.
\end{equation*}
The product is $O(1+|\log x|)$, which is integrable near zero.  As $x\to\infty$,
\begin{equation*}
 \log(2\cosh x)=x+O(\e^{-2x})\,,\qquad
 \log(1-\e^{-kx})=-\e^{-kx}+O(\e^{-2kx})\,,
\end{equation*}
so the product is $O(x\e^{-kx})$.  This proves absolute convergence.

Set $z=\e^{-x}$.  Then $x=-\log z$, $\rd x=-\rd z/z$, and
\begin{equation*}
 \log(2\cosh x)=-\log z+\log(1+z^2)\,.
\end{equation*}
Reversing the limits gives
\be
 \mathfrak I_k=\int_0^1\frac{[-\log z+\log(1+z^2)]\log(1-z^k)}{z}\,\rd z\,.
 \label{eq:I-z}
\ee
For $0\le z<1$,
\begin{equation*}
 \log(1-z^k)=-\sum_{n=1}^\infty\frac{z^{kn}}n\,,
 \qquad
 \log(1+z^2)=\sum_{m=1}^\infty\frac{(-1)^{m+1}z^{2m}}m\,.
\end{equation*}
The first interchange is justified by
\bea
 \sum_{n=1}^\infty\frac1n\int_0^1z^{kn-1}|\log z|\,\rd z
 &=\sum_{n=1}^\infty\frac1{n(kn)^2}
 =\frac{\zeta(3)}{k^2}<\infty\,.
\eea
Here
\begin{equation*}
 \int_0^1z^{\alpha-1}\log z\,\rd z=-\frac1{\alpha^2}\,,\qquad \alpha>0\,,
\end{equation*}
which is obtained by integration by parts; the boundary $z^\alpha\log z$ vanishes at zero.  Thus the $-\log z$ part of \eqref{eq:I-z} equals $-\zeta(3)/k^2$.

For the product of logarithm series, set $k_{\min}=\min(1,k)>0$.  Since
$2m+kn\ge k_{\min}(m+n)\ge2k_{\min}\sqrt{mn}$,
\bea
 \sum_{m,n\ge1}\frac1{mn}\int_0^1z^{2m+kn-1}\,\rd z
 &=\sum_{m,n\ge1}\frac1{mn(2m+kn)}
 \le\frac1{2k_{\min}}\sum_{m,n\ge1}\frac1{m^{3/2}n^{3/2}}<\infty\,.
\eea
Fubini's theorem gives
\be
 \mathfrak I_k=-\frac{\zeta(3)}{k^2}-\mathfrak S_k\,,\qquad
 \mathfrak S_k=\sum_{m,n\ge1}\frac{(-1)^{m+1}}{mn(2m+kn)}\,.
 \label{eq:Sk}
\ee
The exact partial fraction
\be
 \frac1{n(2m+kn)}
 =\frac1{2m}\left(\frac1n-\frac1{n+2m/k}\right)
\ee
and absolute convergence permit summation in $n$ first.  Equation~\eqref{eq:H-series} gives
\begin{equation*}
 \mathfrak S_k=\frac12\sum_{m=1}^\infty\frac{(-1)^{m+1}}{m^2}\Hgen_{2m/k}\,.
\end{equation*}
Substitution into \eqref{eq:Cint-def} proves \eqref{eq:C-Euler}.
\end{proof}

From this point through the parity decomposition, let $k\in\mathbb Z_{>0}$.  Set
\be
 d=\gcd(k,2)\,,\qquad p=\frac{k}{d}\,,\qquad a=\frac2d\,.
 \label{eq:ap}
\ee
Then $2m/k=am/p$, $a=2$ and $p=k$ when $k$ is odd, while $a=1$ and $p=k/2$ when $k$ is even.

\begin{proposition}[Double-logarithm representation]
\label{prop:double-log}
For positive integers $a,p$,
\be
 \cE_{a,p}=-a\int_0^1\frac{\log(1-u^p)\log(1+u^a)}{u}\,\rd u\,.
 \label{eq:double-log}
\ee
The integral is absolutely convergent.
\end{proposition}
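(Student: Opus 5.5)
Expand both factors of the integrand in series and identify the resulting double sum with $\cE_{a,p}$, as in the proof of Proposition~\ref{prop:Euler-reduction}. For $0\le u<1$, Lemma~\ref{lem:log} gives
\begin{equation*}
 \log(1-u^p)=-\sum_{n\ge1}\frac{u^{pn}}{n}\,,\qquad
 \log(1+u^a)=\sum_{m\ge1}\frac{(-1)^{m+1}u^{am}}{m}\,.
\end{equation*}
The product of absolute values of the terms, divided by $u$ and integrated over $[0,1]$, is
\begin{equation*}
 \sum_{m,n\ge1}\frac{1}{mn(am+pn)}\,.
\end{equation*}
Using $am+pn\ge\min(a,p)(m+n)\ge2\min(a,p)\sqrt{mn}$, this is bounded by a constant times $\zeta(3/2)^2$. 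So Tonelli's theorem shows that the integral in \eqref{eq:double-log} converges absolutely. Fubini's theorem then gives
\begin{equation*}
 -a\int_0^1\frac{\log(1-u^p)\log(1+u^a)}{u}\,\rd u
 =a\sum_{m,n\ge1}\frac{(-1)^{m+1}}{mn(am+pn)}\,.
\end{equation*}

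Next, apply the exact partial fraction
\begin{equation*}
 \frac{1}{n(am+pn)}=\frac{1}{am}\left(\frac1n-\frac{1}{n+am/p}\right).
\end{equation*}
Because the double series converges absolutely, the sum over $n$ may be done first at fixed $m$. By the series form \eqref{eq:H-series} of Lemma~\ref{lem:H}, it produces $\Hgen_{am/p}/(am)$. The factor $a$ in front cancels the $1/a$, which leaves
\begin{equation*}
 \sum_{m\ge1}\frac{(-1)^{m+1}}{m^2}\Hgen_{am/p}=\cE_{a,p}\,,
\end{equation*}
and this is \eqref{eq:double-log}. Lemma~\ref{lem:H} already guarantees that this series converges absolutely, which is consistent with the above.

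There is no real obstacle. The only point needing care is the domination bound that justifies Fubini, and the same AM--GM estimate used for $\mathfrak S_k$ handles it directly.
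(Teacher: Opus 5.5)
Your argument is correct, but it does not follow the paper's route.

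\textbf{The paper's route.} The paper starts from $\cE_{a,p}$ and proceeds in four steps:
\begin{enumerate}
\item It inserts the integral form \eqref{eq:H-def} of $\Hgen_{am/p}$ into $\cE_{a,p}$.
\item It interchanges sum and integral by Fubini, using the bound $\sum_m \Hgen_{am/p}/m^2<\infty$ from Lemma~\ref{lem:H}. This gives $\int_0^1[\pi^2/12+\Li_2(-x^{a/p})]/(1-x)\,\rd x$.
\item It integrates by parts against $\rd[-\log(1-x)]$. This needs the derivative of $\Li_2(-x^{a/p})$ and a separate check that the boundary terms vanish at $x=0$ and $x=1$.
\item It substitutes $x=u^p$.
\end{enumerate}

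\textbf{Your route.} You verify the identity in the opposite direction. Expanding both logarithms turns the integral into the absolutely convergent double sum $a\sum_{m,n}(-1)^{m+1}/[mn(am+pn)]$. The partial fraction in $n$ then returns $\Hgen_{am/p}$ through the series form \eqref{eq:H-series}. Your partial fraction, the domination bound $\tfrac{1}{2\min(a,p)}\zeta(3/2)^2$, and the cancellation of the factor $a$ all check out.

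\textbf{Comparison.} Your version is shorter and needs only the AM--GM domination bound already used for $\mathfrak S_k$. It avoids both the dilogarithm derivative and the endpoint analysis of the integration by parts. It also makes a structural point visible. With $a=2/d$ and $p=k/d$, one has $am+pn=(2m+kn)/d$, so your double sum is exactly $2\mathfrak S_k$ from Proposition~\ref{prop:Euler-reduction}. A change of variables in \eqref{eq:I-z} would therefore lead straight to the root-of-unity stage without passing through $\cE_{a,p}$.

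The paper's argument, by contrast, is constructive: it derives the double-logarithm kernel from the definition of $\cE_{a,p}$ instead of checking a known target. It also yields the intermediate dilogarithm representation along the way.
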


\begin{proof}
Insert \eqref{eq:H-def} into \eqref{eq:E-def}.  Lemma~\ref{lem:H} gives
\begin{equation*}
 \sum_{m=1}^\infty\frac1{m^2}\int_0^1\frac{|1-x^{am/p}|}{1-x}\,\rd x
 =\sum_{m=1}^\infty\frac{\Hgen_{am/p}}{m^2}<\infty\,.
\end{equation*}
Fubini's theorem permits interchange:
\bea
 \cE_{a,p}
 &=\int_0^1\frac1{1-x}\sum_{m=1}^\infty
 \frac{(-1)^{m+1}}{m^2}(1-x^{am/p})\,\rd x
 =\int_0^1\frac{\pi^2/12+\Li_2(-x^{a/p})}{1-x}\,\rd x\,.
 \label{eq:dilog-integral}
\eea
Lemma~\ref{lem:eta} gives the $\pi^2/12$ term.  Absolute convergence of the dilogarithm series gives the second term.

Let
\begin{equation*}
 \Psi_{a,p}(x)=\frac{\pi^2}{12}+\Li_2(-x^{a/p})\,.
\end{equation*}
Since $\Li_2(-1)=-\pi^2/12$, $\Psi_{a,p}(1)=0$.  On $0<x<1$, termwise differentiation on compact subintervals gives
\be
 \Psi_{a,p}'(x)=-\frac ap\frac{\log(1+x^{a/p})}{x}\,.
 \label{eq:Fprime}
\ee
The right side extends continuously to $x=1$.  The mean-value theorem therefore gives $\Psi_{a,p}(x)=O(1-x)$ near one, so $\Psi_{a,p}(x)\log(1-x)\to0$.  At zero, $\Psi_{a,p}$ is bounded and $\log(1-x)\to0$.

Integrate \eqref{eq:dilog-integral} by parts on $[\varepsilon,1-\varepsilon]$, using $\rd x/(1-x)=\rd[-\log(1-x)]$.  Both boundary terms vanish in the limit.  The remaining integral is absolutely convergent: near zero, $\log(1+x^{a/p})=O(x^{a/p})$ and $\log(1-x)=O(x)$. Near one, $\Psi_{a,p}'$ is bounded and $|\log(1-x)|$ is integrable.  Thus
\begin{equation*}
 \cE_{a,p}=-\frac ap\int_0^1\frac{\log(1-x)\log(1+x^{a/p})}{x}\,\rd x\,.
\end{equation*}
Set $x=u^p$.  Since $\rd x/x=p\,\rd u/u$, equation~\eqref{eq:double-log} results.  Its endpoint behavior is $O(u^{p+a-1})$ at zero and $O(|\log(1-u)|)$ at one, proving absolute convergence.
\end{proof}

The double-logarithm representation of Proposition~\ref{prop:double-log}
is the starting point for the root-of-unity evaluation. It replaces the
harmonic numbers by a single integral on which the finite filter acts directly.

\subsection{Roots of unity and kernel normalization}
\label{sec:roots}

Let $\varpi_p=\e^{2\pi\ii/p}$.  For $0\le u<1$,
\begin{equation*}
 1-u^p=\prod_{r=0}^{p-1}(1-\varpi_p^r u)\,.
\end{equation*}
Moreover,
\begin{equation*}
 \re(1-\varpi_p^r u)=1-u\cos(2\pi r/p)\ge1-u>0\,.
\end{equation*}
Every factor lies in the open right half-plane.  The function
\begin{equation*}
 \mathfrak D_p(u)=\sum_{r=0}^{p-1}\mathrm{Log}(1-\varpi_p^r u)-\log(1-u^p)
\end{equation*}
is continuous, satisfies $\e^{\mathfrak D_p(u)}=1$, and has $\mathfrak D_p(0)=0$.  Its values lie in the discrete set $2\pi\ii\mathbb Z$, so continuity gives
\be
 \log(1-u^p)=\sum_{r=0}^{p-1}\mathrm{Log}(1-\varpi_p^r u)\,.
 \label{eq:root-factor}
\ee

For $a=1,2$, define
\bea
 \mathcal J_a(\theta)&=\int_0^1\frac{\mathrm{Log}(1-\e^{\ii\theta}u)\log(1+u^a)}{u}\,\rd u\,,\\
 \mathcal Q_a(\theta)&=\re \mathcal J_a(\theta)\,.
 \label{eq:kernel-def}
\eea
The integral converges for every real $\theta$.  Near $u=0$, its integrand is $O(u^a)$.  When $\theta\notin2\pi\mathbb Z$, the first logarithm is bounded near $u=1$; when $\theta\in2\pi\mathbb Z$, it is $\log(1-u)$, which is integrable.

Expanding both logarithms and using
\begin{equation*}
 \sum_{n,m\ge1}\frac1{nm(n+am)}
 \le\frac12\zeta\!\left(\frac32\right)^2<\infty
\end{equation*}
gives, by Fubini's theorem,
\be
 \mathcal Q_a(\theta)=-\sum_{n,m\ge1}
 \frac{(-1)^{m+1}\cos(n\theta)}{nm(n+am)}\,.
 \label{eq:kernel-series}
\ee
Hence
\be
 \mathcal Q_a(2\pi-\theta)=\mathcal Q_a(\theta)\,,
 \qquad
 \int_0^\pi\mathcal Q_a(\theta)\,\rd\theta=0\,.
 \label{eq:kernel-symmetry}
\ee
The mean vanishes because uniform absolute convergence permits termwise integration over $[0,2\pi]$, every cosine mode integrates to zero, and the reflection symmetry then halves the interval.

For $0<u_0<1$, insert \eqref{eq:root-factor} into
\eqref{eq:double-log} and integrate over $0\le u\le u_0$.  The sum over
$r$ is finite.  As $u_0\uparrow1$, the $r=0$ term is bounded near the
endpoint by a constant multiple of $|\log(1-u)|$, while every term with
$r\ne0$ is bounded there.  The factor $\log(1+u^a)/u$ is bounded near
$u=1$, and the behavior at $u=0$ was already recorded above.  Dominated
convergence therefore gives
\be
 \cE_{a,p}=-a\sum_{r=0}^{p-1}\mathcal Q_a\!\left(\frac{2\pi r}{p}\right).
 \label{eq:E-root}
\ee

\subsection{The kernel for \texorpdfstring{$a=1$}{a=1}}
\label{sec:K1}

\begin{proposition}
\label{prop:K1}
For $0\le\theta\le\pi$,
\be
 \mathcal Q_1(\theta)=\frac12\cC_3(\theta)+\cC_3(\pi-\theta)
 -\frac{\theta-\pi}{2}\Cltwo(\pi-\theta)-\frac38\zeta(3)\,.
 \label{eq:K1}
\ee
\end{proposition}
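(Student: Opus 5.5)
The plan is to prove \eqref{eq:K1} by showing that both sides have the same second derivative on $(0,\pi)$. The two resulting integration constants are then fixed by the vanishing mean in \eqref{eq:kernel-symmetry} and by one further normalization.

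\textbf{Step 1: second derivative of the left side.} Rather than differentiating the double series \eqref{eq:kernel-series} termwise, which would destroy absolute convergence, I will differentiate the integral \eqref{eq:kernel-def} directly. For $\theta\in(0,\pi)$, the identity $\partial_\theta\mathrm{Log}(1-\e^{\ii\theta}u)=-\ii\e^{\ii\theta}u/(1-\e^{\ii\theta}u)$ cancels the factor $1/u$. This gives
\begin{equation*}
\mathcal Q_1'(\theta)=\re\left[-\ii\e^{\ii\theta}\int_0^1\frac{\log(1+u)}{1-\e^{\ii\theta}u}\,\rd u\right],
\end{equation*}
and differentiation under the integral is justified because $|1-\e^{\ii\theta}u|$ is bounded below on compact subsets of $(0,\pi)$. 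The remaining integral is a standard dilogarithm. Substituting $v=1-\e^{\ii\theta}u$ and splitting $\log(1+u)$ appropriately produces $\Li_2$ at the arguments $\e^{\ii\theta}$, $-\e^{\ii\theta}$ and $(1+\e^{\ii\theta})/2$, together with products of logarithms. One further $\theta$-derivative then reduces $\mathcal Q_1''$ to elementary functions: $\log(2\sin\frac\theta2)$, $\log(2\cos\frac\theta2)$, and terms involving $\theta$, $\pi-\theta$ and the arguments of $1\pm\e^{\ii\theta}$.

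\textbf{Step 1, alternative route.} As a check, or as a replacement, I will use the Fourier route. The partial fraction $\frac{1}{m(n+m)}=\frac1n\bigl(\frac1m-\frac1{n+m}\bigr)$ gives the cosine coefficient of $\mathcal Q_1$ as $-n^{-2}\bigl[\log2-\sum_{m\ge1}(-1)^{m+1}/(m+n)\bigr]$, which is $\log 2$ combined with alternating harmonic tails. The coefficients of $\mathcal Q_1''$ are then $O(\log n/n)$ or better, so they can be identified in the $L^2$ sense with the Fourier coefficients of the candidate right side.

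\textbf{Step 2: second derivative of the right side.} Lemma~\ref{lem:Fourier} gives
\begin{equation*}
\tfrac12\cC_3''(\theta)=\tfrac12\log\bigl(2\sin\tfrac\theta2\bigr),
\qquad
\frac{\rd^2}{\rd\theta^2}\cC_3(\pi-\theta)=\log\bigl(2\cos\tfrac\theta2\bigr).
\end{equation*}
For the middle term, using $\Cltwo'(\theta)=-\log(2|\sin\frac\theta2|)$, one finds
\begin{equation*}
\frac{\rd^2}{\rd\theta^2}\left[-\frac{\theta-\pi}{2}\Cltwo(\pi-\theta)\right]
=-\log\bigl(2\cos\tfrac\theta2\bigr)-\frac{\pi-\theta}{4}\tan\frac\theta2 .
\end{equation*}
I will check that this matches the expression for $\mathcal Q_1''$ from Step~1. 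That match is the step I expect to be the main obstacle, because of the branch bookkeeping in the dilogarithm reduction. Keeping everything inside the principal branch on the right half-plane, as in Section~\ref{sec:roots}, should control it. If the direct comparison becomes unwieldy, I will instead compare cosine Fourier coefficients of the two second derivatives, computing the coefficients of $(\pi-\theta)\tan\frac\theta2$ via integration by parts.

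\textbf{Step 3: fixing the two constants.} Equality of second derivatives shows that the difference of the two sides is $\alpha+\beta\theta$ on $[0,\pi]$, and both sides are continuous up to the endpoints.

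For the slope, I will use the reflection symmetry $\mathcal Q_1(2\pi-\theta)=\mathcal Q_1(\theta)$ together with smoothness at $\theta=\pi$, where the integrand is regular. These give $\mathcal Q_1'(\pi)=0$. On the right side, the derivative at $\pi$ is also $0$: it equals $-\frac12\Cltwo(\pi)-\Cltwo(0)$, and the derivative of the $\Cltwo$ product term vanishes there. Hence $\beta=0$.

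For the intercept, I will integrate over $[0,\pi]$ and use $\int_0^\pi\mathcal Q_1=0$ from \eqref{eq:kernel-symmetry}. On the right side, the $\cC_3$ terms integrate to zero because $\int\cC_3=\sum\sin(n\theta)/n^4$ vanishes at both endpoints. Substituting $x=\pi-\theta$ turns the product term into $\frac12\int_0^\pi x\Cltwo(x)\,\rd x=\frac{3\pi}{8}\zeta(3)$ by \eqref{eq:weighted-Cl}. This is cancelled exactly by $-\frac38\zeta(3)\pi$, so $\alpha=0$.

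As an independent check, I will evaluate both sides at $\theta=0$. There the series gives $\mathcal Q_1(0)=-\sum_{n,m\ge1}(-1)^{m+1}/(nm(n+m))$, which is computable in terms of $\zeta(3)$ by Lemma~\ref{lem:eta}, and it should agree with $\frac12\zeta(3)-\frac34\zeta(3)-\frac38\zeta(3)$ plus the $\Cltwo(\pi)=0$ term.
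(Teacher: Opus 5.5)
Your overall architecture is the same as the paper's: you match second derivatives on $(0,\pi)$, fix the slope with $\mathcal Q_1'(\pi)=0$, fix the intercept with the vanishing mean and \eqref{eq:weighted-Cl}, and extend to the endpoints by continuity. Steps 2 and 3 are correct. The gap is that the step carrying all the content, the explicit form of $\mathcal Q_1''$, is never computed; you flag it as the main obstacle and leave it open.

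It is not actually an obstacle, and no dilogarithms are needed. On $\delta\le\theta\le\pi-\delta$ the bound $|1-\e^{\ii\theta}u|\ge\sin\delta$ lets you differentiate $\mathcal J_1$ a second time under the integral. Set $\xi=\e^{\ii\theta}$ and integrate by parts in $u$ using $\partial_u(1-\xi u)^{-1}=\xi(1-\xi u)^{-2}$. Only a rational integrand remains:
\[
\mathcal J_1''(\theta)=\int_0^1\frac{\xi\log(1+u)}{(1-\xi u)^2}\,\rd u
=\frac{\log2}{1-\xi}-\int_0^1\frac{\rd u}{(1+u)(1-\xi u)}
=\frac{\log2}{1-\xi}-\frac{\log2-\mathrm{Log}(1-\xi)}{1+\xi}\,.
\]
Use $\re(1\mp\xi)^{-1}=\frac12$, $(1+\xi)^{-1}=\frac12-\frac{\ii}{2}\tan\frac\theta2$, and $\mathrm{Log}(1-\xi)=\log(2\sin\frac\theta2)+\ii\frac{\theta-\pi}{2}$; there is no branch issue, since $1-\xi$ lies in the right half-plane. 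The real part is then $\mathcal Q_1''=\frac12\log(2\sin\frac\theta2)+\frac{\theta-\pi}{4}\tan\frac\theta2$, which is exactly your Step 2 expression. In particular, no $\log(2\cos\frac\theta2)$ survives on either side. Your dilogarithm route for $\mathcal Q_1'$ would reach the same integral after one more differentiation, but it only adds branch bookkeeping.

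Your Fourier fallback, as stated, would fail. The termwise second-derivative coefficients are $\log2-\sum_{m\ge1}(-1)^{m+1}/(n+m)=\log2+O(1/n)$, not $O(\log n/n)$. The reason is that the even $2\pi$-periodic extension of $\mathcal Q_1$ has a corner at $\theta=0$, with $\mathcal Q_1'(0^\pm)=\pm\frac\pi2\log2$, so its distributional second derivative contains $\pi\log2\,\delta_0$. Comparing these coefficients with those of the pointwise second derivative of the right side would therefore produce a spurious offset of $\log2$. You would have to include the delta term, or more simply compare the cosine coefficients of $\mathcal Q_1$ and of the right side directly. Your $\theta=0$ check is fine as a check, but it needs $\sum_m(-1)^{m+1}H_m/m^2=\frac58\zeta(3)$, which does not follow from Lemma~\ref{lem:eta} alone.
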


\begin{proof}
Fix $0<\delta<\pi/2$ and $\delta\le\theta\le\pi-\delta$.  Put $\xi_\theta=\e^{\ii\theta}$.  For $0\le u\le1$,
\begin{equation*}
 |1-\xi_\theta u|^2=(u-\cos\theta)^2+\sin^2\theta\ge\sin^2\delta\,.
\end{equation*}
The first two $\theta$ derivatives of the integrand in $\mathcal J_1$ are
\begin{equation*}
 -\ii \xi_\theta\frac{\log(1+u)}{1-\xi_\theta u}\,,
 \qquad
 \xi_\theta\frac{\log(1+u)}{(1-\xi_\theta u)^2}\,.
\end{equation*}
Their absolute values are bounded by $\log2/\sin\delta$ and $\log2/\sin^2\delta$, respectively.  Dominated differentiation gives
\be
 \mathcal J_1''(\theta)=\int_0^1\frac{\xi_\theta\log(1+u)}{(1-\xi_\theta u)^2}\,\rd u\,.
 \label{eq:J1second-start}
\ee
Since $\rd(1-\xi_\theta u)^{-1}/\rd u=\xi_\theta(1-\xi_\theta u)^{-2}$, integration by parts gives
\be
 \mathcal J_1''(\theta)=\frac{\log2}{1-\xi_\theta}
 -\int_0^1\frac{\rd u}{(1+u)(1-\xi_\theta u)}
 =\frac{\log2}{1-\xi_\theta}
 -\frac{\log2-\mathrm{Log}(1-\xi_\theta)}{1+\xi_\theta}\,.
 \label{eq:J1second}
\ee
For $0<\theta<\pi$,
\bea
 \mathrm{Log}(1-\xi_\theta)&=\ell_\theta+\ii\varphi_\theta\,, \qquad
 \ell_\theta=\log\left(2\sin\frac\theta2\right)\,,\qquad
 \varphi_\theta=\frac{\theta-\pi}{2}\,,\\
 \frac1{1-\xi_\theta}&=\frac12+\frac\ii2\cot\frac\theta2\,,\qquad
 \frac1{1+\xi_\theta}=\frac12-\frac\ii2\tan\frac\theta2\,.
\eea
Taking the real part of \eqref{eq:J1second} gives
\be
 \mathcal Q_1''(\theta)=\frac12\ell_\theta+
 \frac{\theta-\pi}{4}\tan\frac\theta2\,.
 \label{eq:K1second}
\ee

Let $\mathscr F_1$ denote the right side of \eqref{eq:K1}.  Lemma~\ref{lem:Fourier} gives
\begin{equation*}
 \frac{\rd^2}{\rd\theta^2}\frac12\cC_3(\theta)=\frac12\ell_\theta\,,
 \qquad
 \frac{\rd^2}{\rd\theta^2}\cC_3(\pi-\theta)=\log\left(2\cos\frac\theta2\right).
\end{equation*}
If $\mathfrak c_1(\theta)=\Cltwo(\pi-\theta)$, then
\begin{equation*}
 \mathfrak c_1'(\theta)=\log\left(2\cos\frac\theta2\right),
 \qquad \mathfrak c_1''(\theta)=-\frac12\tan\frac\theta2\,.
\end{equation*}
Thus
\bea
 \frac{\rd^2}{\rd\theta^2}
 \left[-\frac{\theta-\pi}{2}\mathfrak c_1(\theta)\right]
 &=-\mathfrak c_1'(\theta)-\frac{\theta-\pi}{2}\mathfrak c_1''(\theta)
 =-\log\left(2\cos\frac\theta2\right)
 +\frac{\theta-\pi}{4}\tan\frac\theta2\,.
\eea
The cosine logarithms cancel, so $\mathscr F_1''=\mathcal Q_1''$.  Hence $\mathcal Q_1-\mathscr F_1$ is affine on $(0,\pi)$.

At $\theta=\pi$, dominated differentiation gives
\begin{equation*}
 \mathcal J_1'(\pi)=\ii\int_0^1\frac{\log(1+u)}{1+u}\,\rd u\,,
\end{equation*}
which is purely imaginary; hence $\mathcal Q_1'(\pi)=0$.  Direct differentiation of $\mathscr F_1$ gives
\begin{equation*}
 \mathscr F_1'(\theta)=-\frac12\Cltwo(\theta)+\frac12\Cltwo(\pi-\theta)
 -\frac{\theta-\pi}{2}\log\left(2\cos\frac\theta2\right).
\end{equation*}
Every term tends to zero as $\theta\to\pi^-$; the last limit uses $x\log x\to0$ as $x\to0^+$.  Thus $\mathscr F_1'(\pi)=0$, and $\mathcal Q_1-\mathscr F_1$ is constant.

Equation~\eqref{eq:kernel-symmetry} gives $\int_0^\pi\mathcal Q_1(\theta)\,\rd \theta=0$.  The two $\cC_3$ terms in $\mathscr F_1$ have zero integral.  By \eqref{eq:weighted-Cl},
\bea
 -\frac12\int_0^\pi(\theta-\pi)\Cltwo(\pi-\theta)\,\rd\theta
 &=\frac12\int_0^\pi x\Cltwo(x)\,\rd x
 =\frac{3\pi}{8}\zeta(3)\,.
\eea
The constant term integrates to $-3\pi\zeta(3)/8$.  Hence
$\int_0^\pi \mathscr F_1(\theta)\,\rd \theta=0$, and the constant difference is zero.  The uniformly
convergent series \eqref{eq:kernel-series} makes $\mathcal Q_1$ continuous on
$\mathbb R$, and the right side of \eqref{eq:K1} has finite endpoint
limits.  The identity therefore extends from $(0,\pi)$ to
$[0,\pi]$.
\end{proof}

\subsection{Evaluation of \texorpdfstring{$\cE_{1,p}$}{E(1,p)}}
\label{sec:E1}

For $r=0,\ldots,p-1$, set
\begin{equation*}
 \theta_r=\frac{2\pi r}{p}\,,\qquad
 \widehat\theta_r=\min(\theta_r,2\pi-\theta_r)\in[0,\pi]\,.
\end{equation*}
The symmetry of $\mathcal Q_1$ permits replacement of $\theta_r$ by $\widehat\theta_r$.
Since $\cos(n\widehat\theta_r)=\cos(n\theta_r)$, uniform convergence and
Lemma~\ref{lem:filter} give
\be
 \sum_{r=0}^{p-1}\cC_3(\widehat\theta_r)
 =\sum_{n=1}^\infty\frac1{n^3}
   \sum_{r=0}^{p-1}\cos(n\theta_r)
 =p\sum_{\ell=1}^\infty\frac1{(p\ell)^3}
 =\frac{\zeta(3)}{p^2}\,.
 \label{eq:C3-root-1}
\ee
Also $\cos(n(\pi-\widehat\theta_r))=(-1)^n\cos(n\theta_r)$, so
\be
 \sum_{r=0}^{p-1}\cC_3(\pi-\widehat\theta_r)
 =p\sum_{\ell=1}^\infty\frac{(-1)^{p\ell}}{(p\ell)^3}
 =\frac{1+7(-1)^p}{8p^2}\zeta(3)\,.
 \label{eq:C3-root-2}
\ee
The final expression equals $\zeta(3)/p^2$ for even $p$ and
$-3\zeta(3)/(4p^2)$ for odd $p$.

Let $j_{\max}=\lfloor(p-1)/2\rfloor$ and, for every integer $r$, put
\begin{equation*}
 \mathfrak c_r=\Cltwo\!\left(\frac{2\pi r}{p}\right).
\end{equation*}
Periodicity and oddness of $\Cltwo$ give
\be
 \mathfrak c_{r+p}=\mathfrak c_r\,,\qquad \mathfrak c_{p-r}=-\mathfrak c_r\,.
 \label{eq:c-symmetry}
\ee
The terms $r=j$ and $r=p-j$ in \eqref{eq:E-root} have the same
$\widehat\theta_r=2\pi j/p$.  The term $r=0$ contributes zero to the weighted
Clausen part because $\Cltwo(\pi)=0$; when $p$ is even, the unpaired
term $r=p/2$ also contributes zero because $\widehat\theta_r-\pi=0$.  Using
\eqref{eq:Cl-dup}, one obtains
\bea
 \cE_{1,p}={}&\left(\frac{3p}{8}-\frac{5+7(-1)^p}{8p^2}\right)\zeta(3)+T_{1,p}\,,\\
 T_{1,p}&=\frac\pi p\sum_{j=1}^{j_{\max}}(2j-p)
 \left(\mathfrak c_j-\frac12\mathfrak c_{2j}\right).
 \label{eq:E1-half}
\eea
We prove the finite identity
\be
 4\sum_{j=1}^{j_{\max}}(2j-p)\left(\mathfrak c_j-\frac12\mathfrak c_{2j}\right)
 =\sum_{r=1}^{p-1}[3-(-1)^{p+r}]r \mathfrak c_r\,.
 \label{eq:E1-finite}
\ee

First take $p=2P$.  Pair $r=j$ with $r=p-j$ on the right side of
\eqref{eq:E1-finite}.  These two indices have the same parity, and
\eqref{eq:c-symmetry} gives $\mathfrak c_{p-j}=-\mathfrak c_j$.  Their combined
contribution is
\begin{equation*}
 [3-(-1)^j]\,[j-(p-j)]\mathfrak c_j
 =[3-(-1)^j](2j-p)\mathfrak c_j\,.
\end{equation*}
The right side of \eqref{eq:E1-finite} is therefore
\be
 \mathcal R_{\rm ev}=\sum_{j=1}^{P-1}[3-(-1)^j](2j-p)\mathfrak c_j\,.
 \label{eq:R-even}
\ee
Set
\begin{equation*}
 \mathcal A_{\rm ev}=\sum_{j=1}^{P-1}(2j-p)\mathfrak c_j\,,
 \qquad
 \mathcal B_{\rm ev}=\sum_{j=1}^{P-1}(2j-p)\mathfrak c_{2j}\,.
\end{equation*}
In $\mathcal B_{\rm ev}$, put $s=2j$.  The index $s$ runs through the even residues
$2,4,\ldots,p-2$, and
\begin{equation*}
 \mathcal B_{\rm ev}=\sum_{\substack{2\le s\le p-2\\s\ {\rm even}}}(s-p)\mathfrak c_s\,.
\end{equation*}
Pair $s$ with $p-s$.  If the self-paired residue $s=p/2$ is even, its
term vanishes because $\mathfrak c_{p/2}=\Cltwo(\pi)=0$.  For every even
$s$ with $1\le s\le P-1$, the pair contributes
\be
 (s-p)\mathfrak c_s+[(p-s)-p]\mathfrak c_{p-s}
 =(s-p)\mathfrak c_s-s(-\mathfrak c_s)
 =(2s-p)\mathfrak c_s\,.
\ee
Consequently,
\begin{equation*}
 \mathcal B_{\rm ev}=\sum_{\substack{1\le s\le P-1\\s\ {\rm even}}}(2s-p)\mathfrak c_s\,,
\end{equation*}
and the indicator of even $s$ gives
\begin{equation*}
 2\mathcal B_{\rm ev}=\sum_{s=1}^{P-1}[1+(-1)^s](2s-p)\mathfrak c_s\,.
\end{equation*}
Thus
\be
 4\mathcal A_{\rm ev}-2\mathcal B_{\rm ev}
 =\sum_{s=1}^{P-1}[4-(1+(-1)^s)](2s-p)\mathfrak c_s
 =\sum_{s=1}^{P-1}[3-(-1)^s](2s-p)\mathfrak c_s=\mathcal R_{\rm ev}\,.
\ee
This proves \eqref{eq:E1-finite} for even $p$.

Now take $p=2P+1$.  Pairing $r=j$ and $r=p-j$ on the right side of
\eqref{eq:E1-finite} gives
\be
 [3+(-1)^j]j \mathfrak c_j+[3-(-1)^j](p-j)\mathfrak c_{p-j}
 =[3(2j-p)+p(-1)^j]\mathfrak c_j\,.
\ee
Hence
\be
 \mathcal R_{\rm odd}=\sum_{j=1}^P[3(2j-p)+p(-1)^j]\mathfrak c_j\,.
 \label{eq:R-odd}
\ee
Set
\begin{equation*}
 \mathcal A_{\rm odd}=\sum_{j=1}^P(2j-p)\mathfrak c_j\,,
 \qquad
 \mathcal B_{\rm odd}=\sum_{j=1}^P(2j-p)\mathfrak c_{2j}\,.
\end{equation*}
The residues $2j$ are the even integers $2,4,\ldots,p-1$.  If an even
residue $s$ satisfies $s\le P$, then its contribution is
$(s-p)\mathfrak c_s$.  If $s>P$, put $r=p-s$; then $r$ is odd,
$1\le r\le P$, and
\begin{equation*}
 (s-p)\mathfrak c_s=(-r)\mathfrak c_{p-r}=r \mathfrak c_r\,.
\end{equation*}
Therefore
\be
 \mathcal B_{\rm odd}=\sum_{\substack{1\le j\le P\\j\ {\rm even}}}(j-p)\mathfrak c_j
 +\sum_{\substack{1\le j\le P\\j\ {\rm odd}}}j \mathfrak c_j\,.
 \label{eq:B-odd}
\ee
Let $\mathcal R_{\rm alt}=\sum_{j=1}^P(-1)^j \mathfrak c_j$.  For even $j$, the
coefficient of $\mathfrak c_j$ in $\mathcal A_{\rm odd}-p\mathcal R_{\rm alt}$ is
$(2j-p)-p=2(j-p)$; for odd $j$, it is
$(2j-p)+p=2j$.  Equation~\eqref{eq:B-odd} therefore gives
$2\mathcal B_{\rm odd}=\mathcal A_{\rm odd}-p\mathcal R_{\rm alt}$. Hence
\be
 4\mathcal A_{\rm odd}-2\mathcal B_{\rm odd}=3\mathcal A_{\rm odd}+p\mathcal R_{\rm alt}
 =\sum_{j=1}^P[3(2j-p)+p(-1)^j]\mathfrak c_j=\mathcal R_{\rm odd}\,.
\ee
This proves \eqref{eq:E1-finite} for odd $p$.

Substitution of \eqref{eq:E1-finite} into \eqref{eq:E1-half} gives
\eqref{eq:E1-theorem}.

\subsection{The kernel for \texorpdfstring{$a=2$}{a=2}}
\label{sec:K2}

\begin{proposition}
\label{prop:K2}
For $0\le\theta\le\pi$,
\bea
 \mathcal Q_2(\theta)={}&\cC_3(\theta)+\frac14\cC_3(\pi-2\theta)
 -\frac{\theta-\pi}{4}\Cltwo(\pi-2\theta)\\
 &-\frac\pi4\left[\Cltwo\!\left(\frac\pi2+\theta\right)
 +\Cltwo\!\left(\frac\pi2-\theta\right)\right]
 -\frac{3}{32}\zeta(3)\,.
 \label{eq:K2}
\eea
\end{proposition}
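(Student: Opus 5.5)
The plan is to mirror the proof of Proposition~\ref{prop:K1}: compute $\mathcal Q_2''$ in closed form, match it to the second derivative of the right side of~\eqref{eq:K2}, and then fix the remaining affine function by a derivative condition at $\theta=\pi$ and by the vanishing mean in~\eqref{eq:kernel-symmetry}.

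\emph{Step 1: second derivative of $\mathcal J_2$.} Fix $\delta\le\theta\le\pi-\delta$. Dominated differentiation gives
\[
\mathcal J_2''(\theta)=\int_0^1\frac{\xi_\theta\log(1+u^2)}{(1-\xi_\theta u)^2}\,\rd u .
\]
Integrating by parts against $\rd(1-\xi_\theta u)^{-1}$ yields
\[
\mathcal J_2''(\theta)=\frac{\log 2}{1-\xi_\theta}-\int_0^1\frac{2u\,\rd u}{(1+u^2)(1-\xi_\theta u)} .
\]
The remaining rational integral is handled by partial fractions over the poles $u=\pm\ii$ and $u=\xi_\theta^{-1}$. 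This produces $\mathrm{Log}(1-\xi_\theta)$, $\log 2$, and the elementary integrals $\int_0^1\rd u/(1+u^2)=\pi/4$ and $\int_0^1 u\,\rd u/(1+u^2)=\tfrac12\log 2$. The coefficients involve $1/(1+\xi_\theta^2)$, and in real form $\xi_\theta/(1+\xi_\theta^2)=1/(2\cos\theta)$.

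Taking real parts should give $\mathcal Q_2''$ as a combination of $\ell_\theta=\log(2\sin\tfrac\theta2)$, a term $\tfrac{\theta-\pi}{2}\tan\theta$-type piece from $\im\mathrm{Log}(1-\xi_\theta)$, and a $\tfrac{\pi}{4}$-weighted piece like $\tfrac{\pi}{4}\sec\theta$. The $\log 2$ terms should cancel or collapse into the log terms.

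\emph{Step 2: differentiate the candidate $\mathscr F_2$ twice.} Use Lemma~\ref{lem:Fourier}. The term $\cC_3(\theta)$ gives $\ell_\theta$. The term $\tfrac14\cC_3(\pi-2\theta)$ gives $\log|2\cos\theta|$. For $-\tfrac{\theta-\pi}{4}\Cltwo(\pi-2\theta)$, with $\mathfrak c_2(\theta)=\Cltwo(\pi-2\theta)$ we have $\mathfrak c_2'=2\log|2\cos\theta|$ and $\mathfrak c_2''=-2\tan\theta$; the first-derivative piece cancels the $\log|2\cos\theta|$ from $\cC_3(\pi-2\theta)$, exactly as the $\cos\tfrac\theta2$ logs cancelled in Proposition~\ref{prop:K1}. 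The $\tfrac\pi4[\Cltwo(\tfrac\pi2+\theta)+\Cltwo(\tfrac\pi2-\theta)]$ term differentiates to
\[
\frac{\pi}{4}\left(\frac12\tan\!\Bigl(\frac\pi4+\frac\theta2\Bigr)-\frac12\tan\!\Bigl(\frac\pi4-\frac\theta2\Bigr)\right)=\frac{\pi}{4}\tan\theta ,
\]
which should absorb the extra $\pi/4$ contribution from Step~1.

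The main obstacle is this bookkeeping of real parts and the $\theta=\pi/2$ singularity of $\tan\theta$. I would verify the identity $\mathcal Q_2''=\mathscr F_2''$ separately on $(0,\pi/2)$ and on $(\pi/2,\pi)$. The singular $\tan\theta$ terms must cancel between the two $\Cltwo$ contributions, and the branch of $\mathrm{Log}(1+\xi_\theta^2)$ changes character at $\theta=\pi/2$.

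\emph{Step 3: fix the affine difference.} Both $\mathcal Q_2$ (by the uniformly convergent series~\eqref{eq:kernel-series}) and $\mathscr F_2$ are continuous on $[0,\pi]$, with $\mathscr F_2\in C^1$ away from isolated log-type points. So $\mathcal Q_2-\mathscr F_2$ is continuous and affine on each subinterval, and continuity plus matching first derivatives glue the pieces into one affine function on $[0,\pi]$.

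\begin{itemize}
\item \emph{Slope.} At $\theta=\pi$, $\mathcal J_2'(\pi)=\ii\int_0^1\log(1+u^2)/(1+u)\,\rd u$ is purely imaginary, so $\mathcal Q_2'(\pi)=0$. For $\mathscr F_2'(\pi)$: $\Cltwo(\pi)=0$; the $(\theta-\pi)\log$ term vanishes; and the $\tfrac\pi4$ pair gives $\log(2\sin\tfrac{3\pi}{4})-\log(2\sin\tfrac{\pi}{4})=0$ up to signs. Hence the slope is zero.
\item \emph{Constant.} Integrate over $[0,\pi]$. The $\cC_3$ terms integrate to zero. By~\eqref{eq:weighted-Cl} after substituting $x=\pi-2\theta$, the weighted Clausen term gives a multiple of $\pi\zeta(3)$. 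The pair $\Cltwo(\tfrac\pi2\pm\theta)$ integrates to zero, since $\int_0^\pi\Cltwo(\tfrac\pi2+\theta)\,\rd\theta=\cC_3(\tfrac\pi2)-\cC_3(\tfrac{3\pi}2)=0$ and similarly for the other term. The result should balance exactly $-\tfrac{3}{32}\pi\zeta(3)$.
\end{itemize}
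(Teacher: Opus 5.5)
\textbf{The plan matches the paper's, but Step~2 is computed incorrectly.} Your strategy is the paper's own: compute $\mathcal Q_2''$, match it with $\mathscr F_2''$ on each side of $\pi/2$, then remove the affine difference using $\mathcal Q_2'(\pi)=\mathscr F_2'(\pi)=0$ and the vanishing mean. Your Step~1 prediction and your slope and mean evaluations in Step~3 are correct. The failure is in Step~2, the central matching step. Since $\Cltwo''(x)=-\frac12\cot\frac x2$, and the chain-rule sign for $\Cltwo(\frac\pi2-\theta)$ enters squared, the two members of the shifted pair contribute with the \emph{same} sign:
\[
\frac{\rd^2}{\rd\theta^2}\Bigl[\Cltwo\Bigl(\frac\pi2+\theta\Bigr)+\Cltwo\Bigl(\frac\pi2-\theta\Bigr)\Bigr]
=-\frac12\Bigl[\tan\Bigl(\frac\pi4-\frac\theta2\Bigr)+\tan\Bigl(\frac\pi4+\frac\theta2\Bigr)\Bigr]=-\sec\theta\,.
\]
So the term $-\frac\pi4[\cdots]$ in $\mathscr F_2$ contributes $+\frac\pi4\sec\theta$. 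This is exactly the piece $\re[(\pi/2)\xi_\theta/(1+\xi_\theta^2)]=\frac\pi4\sec\theta$ of $\mathcal Q_2''$. Your difference of tangents gives $\tan\theta$ instead. Since $\sec\theta-\tan\theta=\tan(\frac\pi4-\frac\theta2)\neq0$, with your value $\mathscr F_2''\neq\mathcal Q_2''$ and the proof does not close.

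The regularity check at $\pi/2$ cannot catch this error. Both $\frac{\theta-\pi}{2}\tan\theta+\frac\pi4\sec\theta$ and your $\frac{\theta-\pi}{2}\tan\theta+\frac\pi4\tan\theta$ tend to $-\frac12$ there. Also, no $\mathrm{Log}(1+\xi_\theta^2)$ ever appears. Partial fractions give
\[
\mathcal J_2''=\frac{\log2}{1-\xi_\theta}+\frac{2\mathrm{Log}(1-\xi_\theta)-\log2+(\pi/2)\xi_\theta}{1+\xi_\theta^2}\,,
\]
and the only issue at $\pi/2$ is the removable pole of the rational coefficient.

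\textbf{The gluing at $\pi/2$ in Step~3 is asserted, not proved, and it is not optional.} The difference $\mathcal Q_2-\mathscr F_2$ consists of two affine pieces agreeing at $\pi/2$, which leaves three free parameters. The conditions at $\theta=\pi$ and the zero mean fix only two of them. So you must show that both $\mathscr F_2'$ and $\mathcal Q_2'$ are continuous at $\pi/2$.
\begin{itemize}
\item For $\mathscr F_2'$, this is a cancellation between two individually divergent terms. At $\theta=\frac\pi2+v$, $-\frac{\theta-\pi}{2}\log(2|\cos\theta|)\sim\frac\pi4\log(2|v|)$, while $\frac\pi4\log|\tan(\frac\pi4+\frac\theta2)|\sim\frac\pi4\log(2/|v|)$.
\item For $\mathcal Q_2'$, it follows from dominated convergence in $\mathcal Q_2'(\theta)=\re[-\ii\e^{\ii\theta}\int_0^1\log(1+u^2)(1-u\e^{\ii\theta})^{-1}\rd u]$, using $|1-u\e^{\ii\theta}|\ge\sqrt3/2$ on $[\pi/3,2\pi/3]$.
\end{itemize}
With these two repairs, your argument coincides with the paper's proof.
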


\begin{proof}
Fix $0<\delta<\pi/2$ and $\delta\le\theta\le\pi-\delta$, and put
$\xi_\theta=\e^{\ii\theta}$.  As in Proposition~\ref{prop:K1},
$|1-\xi_\theta u|\ge\sin\delta$ for $0\le u\le1$.  The first two
$\theta$ derivatives of the integrand in $\mathcal J_2$ are
\begin{equation*}
 -\ii \xi_\theta\frac{\log(1+u^2)}{1-\xi_\theta u}\,,
 \qquad
 \xi_\theta\frac{\log(1+u^2)}{(1-\xi_\theta u)^2}\,.
\end{equation*}
Their absolute values are bounded by $\log2/\sin\delta$ and
$\log2/\sin^2\delta$.  Dominated differentiation therefore gives
\begin{equation*}
 \mathcal J_2''(\theta)=\int_0^1\frac{\xi_\theta\log(1+u^2)}{(1-\xi_\theta u)^2}\,\rd u\,.
\end{equation*}
Integration by parts gives
\bea
 \mathcal J_2''(\theta)&=\frac{\log2}{1-\xi_\theta}
 -\int_0^1\frac{2u}{(1+u^2)(1-\xi_\theta u)}\,\rd u\\
 &=\frac{\log2}{1-\xi_\theta}
 +\frac{2\mathrm{Log}(1-\xi_\theta)-\log2+(\pi/2)\xi_\theta}{1+\xi_\theta^2}\,.
 \label{eq:J2second}
\eea
For $\theta\ne\pi/2$,
\begin{equation*}
 \frac1{1+\xi_\theta^2}=\frac12-\frac\ii2\tan\theta\,,
 \qquad
 \frac{\xi_\theta}{1+\xi_\theta^2}=\frac1{2\cos\theta}\,.
\end{equation*}
Taking real parts gives
\be
 \mathcal Q_2''(\theta)=\log\left(2\sin\frac\theta2\right)+\frac{\theta-\pi}{2}\tan\theta+\frac\pi4\sec\theta\,.
 \label{eq:K2second}
\ee
At $\theta=\pi/2+v$,
\begin{equation*}
 \tan\theta=-v^{-1}+O(v)\,,\qquad \sec\theta=-v^{-1}+O(v)\,,
\end{equation*}
and hence
\begin{equation*}
 \frac{\theta-\pi}{2}\tan\theta+\frac\pi4\sec\theta=-\frac12+O(v)\,.
\end{equation*}
Thus the singularity in \eqref{eq:K2second} is removable.

Let $\mathscr F_2$ be the right side of \eqref{eq:K2}.  Put $\mathfrak c_2(\theta)=\Cltwo(\pi-2\theta)$.  Lemma~\ref{lem:Fourier} gives
\begin{equation*}
 \mathfrak c_2'(\theta)=2\log(2|\cos\theta|)\,,\qquad \mathfrak c_2''(\theta)=-2\tan\theta\,,
\end{equation*}
and
\begin{equation*}
 \frac{\rd^2}{\rd\theta^2}\frac14\cC_3(\pi-2\theta)=\log(2|\cos\theta|)\,.
\end{equation*}
Therefore
\bea
 \frac{\rd^2}{\rd\theta^2}\left[-\frac{\theta-\pi}{4}\mathfrak c_2(\theta)\right]
 &=-\log(2|\cos\theta|)+\frac{\theta-\pi}{2}\tan\theta\,.
\eea
The two cosine logarithms cancel.

For the shifted Clausen contribution
\begin{equation*}
 \mathscr T_2(\theta)=-\frac\pi4\left[\Cltwo\!\left(\frac\pi2+\theta\right)
 +\Cltwo\!\left(\frac\pi2-\theta\right)\right],
\end{equation*}
Lemma~\ref{lem:Fourier} gives
\begin{equation*}
 \mathscr T_2'(\theta)=\frac\pi4\log\left|\tan\left(\frac\pi4+\frac\theta2\right)\right|,
 \qquad \mathscr T_2''(\theta)=\frac\pi4\sec\theta\,.
\end{equation*}
Consequently $\mathscr F_2''=\mathcal Q_2''$ on both sides of $\pi/2$.

The first derivative is
\bea
 \mathscr F_2'(\theta)=-\Cltwo(\theta)+\frac14\Cltwo(\pi-2\theta)
 -\frac{\theta-\pi}{2}\log(2|\cos\theta|)
 +\frac\pi4\log\left|\tan\left(\frac\pi4+\frac\theta2\right)\right|.
 \label{eq:F2prime}
\eea
At $\theta=\pi/2+v$,
\begin{equation*}
 \log(2|\cos\theta|)=\log(2|v|)+O(v^2)\,,
\end{equation*}
and
\begin{equation*}
 \log\left|\tan\left(\frac\pi4+\frac\theta2\right)\right|
 =\log\frac2{|v|}+O(v^2)\,.
\end{equation*}
The divergent logarithms cancel in \eqref{eq:F2prime}; $\mathscr F_2'$ extends continuously across $\pi/2$.  On $\pi/3\le\theta\le2\pi/3$ one has
$|1-u\e^{\ii\theta}|\ge\sqrt3/2$.  The kernel series
\eqref{eq:kernel-series} gives continuity of $\mathcal Q_2$, while dominated
differentiation gives
\begin{equation*}
 \mathcal Q_2'(\theta)=\re\left[-\ii\e^{\ii\theta}
 \int_0^1\frac{\log(1+u^2)}{1-u\e^{\ii\theta}}\,\rd u\right].
\end{equation*}
The integrand in this last integral is bounded in absolute value by
$2\log2/\sqrt3$, independently of $u$ and $\theta$ on the stated
interval.  Dominated convergence therefore proves continuity of
$\mathcal Q_2'$ at $\pi/2$.  The affine functions $\mathcal Q_2-\mathscr F_2$ on the two
sides of $\pi/2$ have the same value and slope there, and hence form one
affine function on $(0,\pi)$.

At $\theta=\pi$,
\begin{equation*}
 \mathcal J_2'(\pi)=\ii\int_0^1\frac{\log(1+u^2)}{1+u}\,\rd u\,,
\end{equation*}
so $\mathcal Q_2'(\pi)=0$.  Equation~\eqref{eq:F2prime} gives $\mathscr F_2'(\pi)=0$.  The difference is constant.

Equation~\eqref{eq:kernel-symmetry} gives $\int_0^\pi\mathcal Q_2(\theta)\,\rd\theta=0$.  The first two $\cC_3$ terms in $\mathscr F_2$ also have zero mean.  The two shifted Clausen terms cover one full period and integrate to zero.  For the weighted term, set $x=\pi-2\theta$:
\bea
 -\frac14\int_0^\pi(\theta-\pi)\Cltwo(\pi-2\theta)\,\rd\theta
 &=\frac1{16}\int_{-\pi}^{\pi}(\pi+x)\Cltwo(x)\,\rd x\\
 &=\frac18\int_0^\pi x\Cltwo(x)\,\rd x
 =\frac{3\pi}{32}\zeta(3)\,.
\eea
The constant term integrates to the negative of this value.  Hence
$\int_0^\pi \mathscr F_2(\theta) \, \rd \theta=0$, and the constant difference vanishes.  The kernel
series makes $\mathcal Q_2$ continuous on $\mathbb R$, and the right side of
\eqref{eq:K2} has finite endpoint limits.  The identity therefore
extends to $[0,\pi]$.
\end{proof}

\subsection{Evaluation of \texorpdfstring{$\cE_{2,p}$}{E(2,p)} for odd \texorpdfstring{$p$}{p}}
\label{sec:E2}

Assume that $p$ is odd, and use the angles $\widehat\theta_r$ from the preceding
section.  The first root sum is \eqref{eq:C3-root-1}.  For the doubled
mode,
\begin{equation*}
 \cos(n(\pi-2\widehat\theta_r))=(-1)^n\cos(2n\theta_r)\,.
\end{equation*}
Uniform convergence and the root filter give a contribution only when
$p\mid2n$.  Since $\gcd(p,2)=1$, this condition is equivalent to
$p\mid n$.  Writing $n=p\ell$ and using odd $p$ gives
\bea
 \sum_{r=0}^{p-1}\cC_3(\pi-2\widehat\theta_r)
 =p\sum_{\ell=1}^\infty\frac{(-1)^{p\ell}}{(p\ell)^3}
 =-\frac{3}{4p^2}\zeta(3)\,.
 \label{eq:C3-double-root}
\eea
The sine addition formula gives
\begin{equation*}
 \Cltwo\!\left(\frac\pi2+t\right)+\Cltwo\!\left(\frac\pi2-t\right)
 =2\sum_{n=1}^\infty\frac{\sin(n\pi/2)\cos(nt)}{n^2}\,.
\end{equation*}
Absolute convergence and the root filter yield
\bea
 \sum_{r=0}^{p-1}\left[\Cltwo\!\left(\frac\pi2+\widehat\theta_r\right)
 +\Cltwo\!\left(\frac\pi2-\widehat\theta_r\right)\right]
 &=2p\sum_{\ell=1}^\infty
 \frac{\sin(\ell p\pi/2)}{(p\ell)^2}\\
 &=\frac2p\Cltwo\!\left(\frac{p\pi}{2}\right)\\
 &=\frac2p(-1)^{(p-1)/2}G\,.
 \label{eq:Catalan-root}
\eea
The final equality uses the assumption that $p$ is odd.
Consequently, $p\pi/2$ is congruent to $\pi/2$ or $3\pi/2$ modulo $2\pi$.

Put $j_{\max}=(p-1)/2$ and retain the notation
$\mathfrak c_r=\Cltwo(2\pi r/p)$. Equations~\eqref{eq:E-root}, \eqref{eq:K2},
\eqref{eq:C3-double-root}, and~\eqref{eq:Catalan-root} give
\bea
\cE_{2,p}
&=
\left(\frac{3p}{16}-\frac{13}{8p^2}\right)\zeta(3)
-\frac{\pi}{p}(-1)^{(p+1)/2}G
+T_{2,p},
\\
T_{2,p}
&=
\frac{\pi}{p}
\sum_{j=1}^{j_{\max}}
(2j-p)
\left(
\mathfrak c_{2j}-\frac12\mathfrak c_{4j}
\right).
\label{eq:E2-half}
\eea

Define
\begin{equation*}
\mathcal O_1
=
\sum_{\substack{1\le r\le p-1\\ r\ {\rm odd}}}
r\,\mathfrak c_r\,,
\qquad
\mathcal O_2
=
\sum_{\substack{1\le r\le p-1\\ r\ {\rm odd}}}
r\,\mathfrak c_{2r}\,.
\end{equation*}
For the $\mathfrak c_{2j}$ term, set $s=p-2j$. As $j$ runs from $1$ to $j_{\max}$,
$s$ runs once through the odd integers $p-2,p-4,\ldots,1$. Moreover,
$2j-p=-s$ and $\mathfrak c_{2j}=\mathfrak c_{p-s}=-\mathfrak c_s$. Hence
\be
\sum_{j=1}^{j_{\max}}(2j-p)\mathfrak c_{2j}
=
\mathcal O_1\,.
\ee
For the $\mathfrak c_{4j}$ term,
$4j=2(p-s)\equiv-2s\pmod p$, so that
$\mathfrak c_{4j}=-\mathfrak c_{2s}$ and therefore
\be
\sum_{j=1}^{j_{\max}}(2j-p)\mathfrak c_{4j}
=
\mathcal O_2\,.
\ee
It follows that
\be
\frac{p}{\pi}T_{2,p}
=
\mathcal O_1-\frac12\mathcal O_2\,.
\label{eq:T2-OO}
\ee
Let
\begin{equation*}
 \mathcal V_2=\sum_{r=1}^{p-1}r \mathfrak c_{2r}\,.
\end{equation*}
Because $p$ is odd, multiplication by $2$ is a bijection on the
nonzero residue classes modulo $p$. Indeed, $2r_1\equiv2r_2\pmod p$
implies $p\mid2(r_1-r_2)$, and $\gcd(p,2)=1$ then gives $r_1\equiv r_2\pmod p$.  The domain is finite,
so injectivity implies bijectivity.  More explicitly, the unique
$r\in\{1,\ldots,p-1\}$ satisfying $2r\equiv s\pmod p$ is
\begin{equation*}
r=
\begin{cases}
s/2, & s\ \mathrm{even},\\[2pt]
(s+p)/2, & s\ \mathrm{odd}.
\end{cases}
\end{equation*}
Consequently,
\be
 \mathcal V_2=\frac12\sum_{\substack{1\le s\le p-1\\s\ {\rm even}}}s \mathfrak c_s
 +\frac12\sum_{\substack{1\le s\le p-1\\s\ {\rm odd}}}(s+p)\mathfrak c_s\,.
 \label{eq:W-reindex}
\ee
Define
\begin{equation*}
 \mathcal S_{\rm ev}=\sum_{\substack{1\le s\le p-1\\s\ {\rm even}}}s \mathfrak c_s\,,
 \qquad
 \mathcal S_{\rm odd}=\sum_{\substack{1\le s\le p-1\\s\ {\rm odd}}}\mathfrak c_s\,.
\end{equation*}
The map $s\mapsto t=p-s$ sends the even residues bijectively to the
odd residues.  Using $\mathfrak c_{p-t}=-\mathfrak c_t$,
\bea
 \mathcal S_{\rm ev}=\sum_{\substack{1\le t\le p-1\\t\ {\rm odd}}}(p-t)\mathfrak c_{p-t}
 =-\sum_{\substack{1\le t\le p-1\\t\ {\rm odd}}}(p-t)\mathfrak c_t
 =\mathcal O_1-p\mathcal S_{\rm odd}\,.
\eea
Substitution into \eqref{eq:W-reindex} gives
\begin{equation*}
 \mathcal V_2=\frac12(\mathcal S_{\rm ev}+\mathcal O_1+p\mathcal S_{\rm odd})=\mathcal O_1\,.
\end{equation*}

Now set
\begin{equation*}
 \mathcal V_{\rm alt}=\sum_{r=1}^{p-1}(-1)^r r \mathfrak c_{2r}\,.
\end{equation*}
The factor $(-1)^r$ reverses the sign of exactly the odd-$r$ terms, so
\begin{equation*}
 \mathcal V_{\rm alt}=\mathcal V_2-2\mathcal O_2=\mathcal O_1-2\mathcal O_2\,.
\end{equation*}
Consequently,
\bea
 &\frac38\sum_{r=1}^{p-1}(1-(-1)^r)r \mathfrak c_r
 +\frac14\sum_{r=1}^{p-1}(-1)^r r \mathfrak c_{2r}\\
 &\qquad=\frac34\mathcal O_1+\frac14(\mathcal O_1-2\mathcal O_2)
 =\mathcal O_1-\frac12\mathcal O_2\,.
\eea
Equation~\eqref{eq:T2-OO} gives precisely the two finite sums in
\eqref{eq:E2-theorem}.  Together with the preceding section, this
proves Theorem~\ref{thm:Euler}.

\subsection{Proof of the constant-map theorem}
\label{sec:assembly}

Proposition~\ref{prop:Euler-reduction} and \eqref{eq:ap} give
\be
 \Cint(k)=-\frac{k^2\zeta(3)}{16\pi}
 -\frac{2\zeta(3)}{\pi k}-\frac{k}{\pi}\cE_{a,p}\,.
 \label{eq:C-via-E}
\ee

For odd $k=p$, substituting \eqref{eq:E2-theorem} into \eqref{eq:C-via-E} gives the $\zeta(3)/\pi$ coefficient
\bea
 -\frac{p^2}{16}-\frac2p
 -p\left(\frac{3p}{16}-\frac{13}{8p^2}\right)
 &=-\frac{p^2}{4}-\frac{3}{8p}\,.
\eea
The Catalan coefficient is
\begin{equation*}
 -\frac p\pi\left[-\frac\pi p(-1)^{(p+1)/2}\right]
 =(-1)^{(p+1)/2}\,,
\end{equation*}
and the two Clausen prefactors are $-3/8$ and $-1/4$.  This proves \eqref{eq:clausen-odd}.

For even $k=2p$, substituting \eqref{eq:E1-theorem} gives the $\zeta(3)/\pi$ coefficient
\bea
 -\frac{p^2}{4}-\frac1p
 -2p\left(\frac{3p}{8}-\frac{5+7(-1)^p}{8p^2}\right)
 &=-p^2+\frac{1+7(-1)^p}{4p}\,.
\eea
The finite-sum prefactor is $-(2p/\pi)(\pi/(2p))=-1$.  This proves \eqref{eq:clausen-even} and completes Theorem~\ref{thm:C}.

\subsection{Special cases and independent numerical checks}
\label{sec:appchecks}

Setting $p=1$ in \eqref{eq:E1-theorem} gives
\be
 \sum_{m=1}^\infty\frac{(-1)^{m+1}H_m}{m^2}=\frac58\zeta(3)\,.
\ee
Setting $p=1$ in \eqref{eq:E2-theorem} gives
\be
 \sum_{m=1}^\infty\frac{(-1)^{m+1}H_{2m}}{m^2}
 =\pi G-\frac{23}{16}\zeta(3)\,.
\ee
Setting $p=2$ in \eqref{eq:E1-theorem}, and using $\Cltwo(\pi)=0$, gives
\be
 \sum_{m=1}^\infty\frac{(-1)^{m+1}\Hgen_{m/2}}{m^2}
 =\frac38\zeta(3)\,.
\ee
Substitution into \eqref{eq:C-via-E} yields
\begin{equation*}
 \Cint(1)=-G-\frac{5\zeta(3)}{8\pi}\,,\qquad
 \Cint(2)=-\frac{5\zeta(3)}{2\pi}\,,\qquad
 \Cint(4)=-\frac{3\zeta(3)}{\pi}\,.
\end{equation*}

As a numerical check distinct from the proof, the direct alternating series \eqref{eq:E-def} was compared with \eqref{eq:E1-theorem} for $p\le12$ and with \eqref{eq:E2-theorem} for odd $p\le11$, at $60$ working digits; the largest discrepancy was $7.8\times10^{-61}$.  Independently, \eqref{eq:Cint-def} was compared with \eqref{eq:clausen-odd}--\eqref{eq:clausen-even} for every $k\le50$ at $70$ digits, the largest discrepancy being $6.4\times10^{-69}$ at $k=45$ and tracking the working precision under a scan.  These comparisons provide independent numerical checks of the analytic formulas.

For a concrete value, at $k=3$,
\begin{equation*}
 \Cint(3)=G-\frac{19\zeta(3)}{8\pi}
 -\frac32\Cltwo\!\left(\frac{2\pi}{3}\right)
 =-1.00771417780646417281071274741419325060425\ldots\,.
\end{equation*}
Direct quadrature and direct summation of the Euler series reproduce the displayed digits.

\section{Constant-map identities}
\label{app:Smaps}

This appendix proves two analytic identities used in the main text.
Proposition~\ref{prop:CA} establishes Eq.~\eqref{eq:CAw}, which relates the
level derivative of the effective twisted superpotential constant map to the
difference $A(k/2)-A(k)$ of ABJM $S^3$ constant maps.  Theorem~\ref{thm:f0}
proves that the arithmetic representation~\eqref{eq:f0arith} and the
one-kernel representation~\eqref{eq:f0kernel} of the index constant map define
the same function.  Both identities hold for every real $k>0$ when the
twisted superpotential constant map is represented by its positive-real integral
continuation $\Cint(k)$; for positive integer $k$, Theorem~\ref{thm:C} gives
$\Cint(k)=\CW(k)$.  All integral manipulations below are justified by absolute
convergence.  The second identity additionally uses the functional equation of
the Riemann zeta function.

We retain the notation $\mathfrak I_k$ and $\Cint(k)$ introduced in
Eqs.~\eqref{eq:Ik-def}--\eqref{eq:Cint-def}.  For $k>0$, set
\be
 L_k(x)=\log\bigl(1-\e^{-kx}\bigr),\qquad
 \mathscr J_k=\int_0^\infty
 \frac{x\log(1-\e^{-2x})}{\e^{kx}-1}\,\rd x\,.
\label{eq:B-defs}
\ee
Thus
\be
 \mathfrak I_k=\int_0^\infty \log(2\cosh x)\,L_k(x)\,\rd x\,,
 \qquad
 \Cint(k)=-\frac{k^2\zeta(3)}{16\pi}+\frac{2k}{\pi}\mathfrak I_k\,.
\label{eq:B-Cint}
\ee
The sphere constant map \eqref{eq:Ak} is
\be
 A(k)=\frac{2\zeta(3)}{\pi^2k}\left(1-\frac{k^3}{16}\right)
      +\frac{k^2}{\pi^2}\mathscr J_k\,.
\label{eq:B-A}
\ee
Near the origin, the integrands defining $\mathfrak I_k$ and $\mathscr J_k$ are
$O(1+|\log x|)$, while at infinity they are respectively
$O(x\e^{-kx})$ and $O(x\e^{-(k+2)x})$.  Hence both integrals converge
absolutely. These estimates hold uniformly for $k$ in compact subsets of $(0,\infty)$,
which justifies differentiation under the integral sign below.

\begin{lemma}[Integration by parts]
\label{lem:B-ibp}
For every $k>0$,
\be
 \mathscr J_k=-\frac1k\int_0^\infty
 \left[\log(1-\e^{-2x})+\frac{2x}{\e^{2x}-1}\right]L_k(x)\,\rd x\,,
\label{eq:B-Jk-ibp}
\ee
and
\be
 \mathscr J_{k/2}=-\frac4k\int_0^\infty
 \left[\log(1-\e^{-4x})+\frac{4x}{\e^{4x}-1}\right]L_k(x)\,\rd x\,.
\label{eq:B-Jhalf-ibp}
\ee
Moreover, with
\be
 \cK(x)=6x\tanh x+4\log\frac{\sinh x}{x}
          +4\left(\frac{x}{\tanh x}-1\right),
\label{eq:B-kernel}
\ee
one has the pointwise identity
\begin{equation}
\begin{aligned}
 &-6\log(2\cosh x)
 -2\left[\log(1-\e^{-2x})+\frac{2x}{\e^{2x}-1}\right]
 +6\left[\log(1-\e^{-4x})+\frac{4x}{\e^{4x}-1}\right]
\\
 &\hspace{35mm}=\cK(x)-20x+4\log x+4\log2+4\,,
 \qquad x>0\,.
\label{eq:B-kernel-decomp}
\end{aligned}
\end{equation}
\end{lemma}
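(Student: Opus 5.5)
The lemma has two independent parts: an integration by parts in the definition \eqref{eq:B-defs} of $\mathscr J_k$, and a pointwise algebraic identity among elementary functions. I will treat them separately.

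\emph{Integration by parts.} Write $\mathscr J_k=\int_0^\infty x\log(1-\e^{-2x})\,(\e^{kx}-1)^{-1}\rd x$. The key observation is that $L_k'(x)=k\,\e^{-kx}/(1-\e^{-kx})=k/(\e^{kx}-1)$. Hence $(\e^{kx}-1)^{-1}=k^{-1}L_k'(x)$, and
\begin{equation*}
 \mathscr J_k=\frac1k\int_0^\infty x\log(1-\e^{-2x})\,L_k'(x)\,\rd x\,.
\end{equation*}
I then integrate by parts on $[\varepsilon,R]$, moving the derivative onto $g(x)=x\log(1-\e^{-2x})$. Its derivative is
\begin{equation*}
 g'(x)=\log(1-\e^{-2x})+\frac{2x}{\e^{2x}-1}\,.
\end{equation*}
The boundary term $g L_k$ vanishes at both ends. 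Near zero, $g=O(x\log x)$ and $L_k=O(\log x)$, so the product is $O(x\log^2x)\to0$. At infinity, both factors decay exponentially. The remaining integral converges absolutely: $g'=O(|\log x|)$ near zero and is exponentially small at infinity. This gives \eqref{eq:B-Jk-ibp}.

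For \eqref{eq:B-Jhalf-ibp}, I first substitute $x=2y$ in $\mathscr J_{k/2}$, which gives
\begin{equation*}
 \mathscr J_{k/2}=4\int_0^\infty \frac{y\log(1-\e^{-4y})}{\e^{ky}-1}\,\rd y\,.
\end{equation*}
The same integration by parts with $g(y)=y\log(1-\e^{-4y})$ and $g'(y)=\log(1-\e^{-4y})+4y/(\e^{4y}-1)$ then produces the stated prefactor $4/k$. The sign is $-$, since $\int gL_k'=-\int g'L_k$ once the boundary terms vanish.

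\emph{Pointwise identity.} I will rewrite every term through $\log\sinh x$, $\log\cosh x$, $\coth x$ and $\tanh x$. The inputs are
\begin{equation*}
 \log(1-\e^{-2x})=\log(2\sinh x)-x\,,\qquad \log(1-\e^{-4x})=\log(2\sinh x)+\log(2\cosh x)-2x\,,
\end{equation*}
\begin{equation*}
 \frac{2x}{\e^{2x}-1}=x\coth x-x\,,\qquad \frac{4x}{\e^{4x}-1}=x\coth 2x-x=\tfrac{x}{2}(\coth x+\tanh x)-x\,.
\end{equation*}
Substituting these, the left side of \eqref{eq:B-kernel-decomp} becomes
\begin{equation*}
 -6\log(2\cosh x)-2\log(2\sinh x)+4x-2x\coth x+6\log(2\sinh x)+6\log(2\cosh x)-18x+3x\coth x+3x\tanh x\,.
\end{equation*}
This simplifies to $4\log(2\sinh x)+x\coth x+3x\tanh x-14x$.

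The right side, after expanding $\cK$, is $6x\tanh x+4\log\sinh x+4x\coth x+4\log2-20x$. Matching the two sides requires checking the coefficients of $x\tanh x$ and $x\coth x$ carefully. The main obstacle is purely this bookkeeping of linear terms and hyperbolic coefficients. I would redo the $\coth 2x$ step and the expansion of $\log(1-\e^{-4x})$ with care, and use $x\coth x-x\tanh x=2x/\sinh 2x$ if the two sides differ by such a term. As a final check I would compare both sides at small and at large $x$.
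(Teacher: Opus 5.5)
Your integration-by-parts argument for \eqref{eq:B-Jk-ibp} and \eqref{eq:B-Jhalf-ibp} is correct and matches the paper's. It uses $L_k'(x)=k/(\e^{kx}-1)$, boundary terms of order $x\log^2x$ at the origin and exponentially small at infinity, and the substitution $x=2y$ for $\mathscr J_{k/2}$.

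The gap is in the pointwise identity \eqref{eq:B-kernel-decomp}, which your proposal does not establish. The two sides you arrive at are not equal, and you leave the mismatch unresolved. The faulty step is $\frac{4x}{\e^{4x}-1}=x\coth 2x-x$, which is off by a factor of two. Since $\coth 2x=1+\frac{2}{\e^{4x}-1}$, the correct input is
\[
\frac{4x}{\e^{4x}-1}=2x\coth 2x-2x=x(\coth x+\tanh x)-2x\,.
\]
Your other three rewritings are right. With this correction the third bracket contributes $6\log(2\sinh x)+6\log(2\cosh x)+6x\coth x+6x\tanh x-24x$. The left side of \eqref{eq:B-kernel-decomp} then becomes $4\log(2\sinh x)+4x\coth x+6x\tanh x-20x$. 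This coincides term by term with your correctly expanded right side, $6x\tanh x+4\log\sinh x+4x\coth x+4\log2-20x$.

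The remedy you propose would not have repaired the computation. The spurious discrepancy is $3x(\coth x+\tanh x)-6x=12x/(\e^{4x}-1)$, which is not a constant multiple of $x\coth x-x\tanh x=2x/\sinh 2x$. It is also exponentially small at large $x$, so your large-$x$ check would miss it. At small $x$ it tends to $3$, so the small-$x$ check would expose the error.

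Once the factor is fixed, your route is essentially the paper's. The paper runs the same bookkeeping using $\log(2\cosh x)=x+\log(1+\e^{-2x})$, $\log(1-\e^{-4x})=\log(1-\e^{-2x})+\log(1+\e^{-2x})$, $x\coth x=x+\frac{2x}{\e^{2x}-1}$ and $x\tanh x=x-\frac{2x}{\e^{2x}-1}+\frac{4x}{\e^{4x}-1}$.
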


\begin{proof}
Since $L_k'(x)=k/(\e^{kx}-1)$,
\begin{equation*}
 \mathscr J_k=\frac1k\int_0^\infty x\log(1-\e^{-2x})L_k'(x)\,\rd x\,.
\end{equation*}
The boundary term in the integration by parts vanishes. It is
$O(x\log^2x)$ as $x\to0^+$ and $O(x\e^{-(k+2)x})$ as $x\to\infty$.
Differentiating $x\log(1-\e^{-2x})$ gives~\eqref{eq:B-Jk-ibp}.  For
$\mathscr J_{k/2}$, first set $x=2y$ and then apply the same integration by parts.
This yields~\eqref{eq:B-Jhalf-ibp}.

For~\eqref{eq:B-kernel-decomp}, use
\begin{equation*}
 \log(2\cosh x)=x+\log(1+\e^{-2x})\,,\qquad
 \log(1-\e^{-4x})=\log(1-\e^{-2x})+\log(1+\e^{-2x})\,,
\end{equation*}
together with
\begin{equation*}
 x\coth x=x+\frac{2x}{\e^{2x}-1}\,,\qquad
 x\tanh x=x-\frac{2x}{\e^{2x}-1}+\frac{4x}{\e^{4x}-1}\,.
\end{equation*}
After cancellation of the $\log(1+\e^{-2x})$ terms, the two sides reduce to
the same expression.
\end{proof}

\begin{lemma}[Three moments and a zeta identity]
\label{lem:B-moments}
For every $k>0$,
\be
 \int_0^\infty L_k(x)\,\rd x=-\frac{\zeta(2)}{k}\,,
 \qquad
 \int_0^\infty xL_k(x)\,\rd x=-\frac{\zeta(3)}{k^2}\,,
\label{eq:B-moments12}
\ee
and
\be
 \int_0^\infty \log x\,L_k(x)\,\rd x
 =\frac1k\bigl[\gamma\zeta(2)-\zeta'(2)+\zeta(2)\log k\bigr]\,.
\label{eq:B-momentlog}
\ee
In addition,
\be
 \frac{\zeta'(2)}{\zeta(2)}
 =\log(2\pi)-1+\gamma+12\zeta'(-1)\,.
\label{eq:B-zeta-derivative}
\ee
\end{lemma}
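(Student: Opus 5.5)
The first two moments follow exactly as in Proposition~\ref{prop:Euler-reduction}. Expand $L_k(x)=-\sum_{n\geq1}\e^{-nkx}/n$ and use Tonelli's theorem, since all terms $\e^{-nkx}/n$ are nonnegative. The elementary integrals are $\int_0^\infty\e^{-nkx}\,\rd x=1/(nk)$ and $\int_0^\infty x\,\e^{-nkx}\,\rd x=1/(nk)^2$. Summing over $n$ then gives $-\zeta(2)/k$ and $-\zeta(3)/k^2$, which is~\eqref{eq:B-moments12}.

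For the logarithmic moment~\eqref{eq:B-momentlog}, I would not expand directly, because $\log x$ changes sign. Instead I would introduce the Mellin moment
\begin{equation*}
M(s)=\int_0^\infty x^{s-1}L_k(x)\,\rd x
=-\Gamma(s)\,k^{-s}\zeta(s+1)\,,\qquad \re s>0\,.
\end{equation*}
The termwise expansion is justified by absolute convergence for $\re s>0$. Indeed, $\sum_n n^{-1}\int_0^\infty x^{\sigma-1}\e^{-nkx}\,\rd x=\Gamma(\sigma)k^{-\sigma}\zeta(\sigma+1)<\infty$. Next I would differentiate under the integral at $s=1$. This is allowed because $|x^{s-1}\log x\,L_k(x)|$ is dominated uniformly for $s$ near $1$ by an integrable function: near zero it behaves like $x^{-1/2}|\log x|\,(1+|\log x|)$, and at infinity like $x\,\e^{-kx}$. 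The result is
\begin{equation*}
\int_0^\infty\log x\,L_k(x)\,\rd x=M'(1)=-\frac{1}{k}\bigl[\Gamma'(1)\zeta(2)-\zeta(2)\log k+\zeta'(2)\bigr].
\end{equation*}
With $\Gamma'(1)=-\gamma$ this is exactly $\frac1k[\gamma\zeta(2)-\zeta'(2)+\zeta(2)\log k]$.

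For~\eqref{eq:B-zeta-derivative}, I would take the logarithmic derivative of the functional equation in the form $\zeta(1-s)=2(2\pi)^{-s}\cos(\pi s/2)\Gamma(s)\zeta(s)$. This gives
\begin{equation*}
-\frac{\zeta'(1-s)}{\zeta(1-s)}=-\log(2\pi)-\frac{\pi}{2}\tan\frac{\pi s}{2}+\psi(s)+\frac{\zeta'(s)}{\zeta(s)}\,.
\end{equation*}
Evaluating directly at $s=2$ is the one delicate point, since both $\zeta(-1)$ and $\cos(\pi s/2)$ behave well there but $\tan(\pi s/2)$ vanishes, so the expression is actually regular. Setting $s=2$ gives $\tan\pi=0$, $\psi(2)=1-\gamma$, and $\zeta(-1)=-1/12$. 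Hence $\zeta'(-1)/\zeta(-1)=-12\zeta'(-1)$, and the relation becomes
\begin{equation*}
12\zeta'(-1)=-\log(2\pi)+1-\gamma+\frac{\zeta'(2)}{\zeta(2)}\,,
\end{equation*}
which rearranges to~\eqref{eq:B-zeta-derivative}.

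The main obstacle is bookkeeping rather than analysis. The real risks are the sign conventions in the logarithmic derivative and the domination argument for differentiating the Mellin integral. I would check the final constant numerically against $\zeta'(2)\approx-0.9375482543$.
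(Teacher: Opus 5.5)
Your proof is correct. The first two moments are handled as in the paper. Your zeta identity is also the paper's computation: the paper logarithmically differentiates $\zeta(s)=2^s\pi^{s-1}\sin(\pi s/2)\Gamma(1-s)\zeta(1-s)$ at $s=-1$, which is your reflected form evaluated at $s=2$.

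The real difference is the logarithmic moment. The paper does expand directly. After $u=kx$ it integrates termwise using $\int_0^\infty \e^{-nu}\log u\,\rd u=-(\gamma+\log n)/n$, and reads off $k^{-1}[(\gamma+\log k)\zeta(2)+\sum_n(\log n)/n^2]$.

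Your reason for avoiding this, that $\log x$ changes sign, is not an obstruction. Fubini only needs $\sum_n n^{-1}\int_0^\infty \e^{-nkx}|\log x|\,\rd x<\infty$. This sum is bounded by a $k$-dependent constant times $\sum_n(1+\log n)/n^2$, so it is finite.

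Your Mellin route trades that one-line absolute-convergence check for a domination argument to differentiate under the integral, so it is slightly heavier. In return, the single formula $M(s)=-\Gamma(s)k^{-s}\zeta(s+1)$ gives all three moments at once, as $M(1)$, $M(2)$ and $M'(1)$. It also shows that $\gamma$, $\zeta(2)\log k$ and $\zeta'(2)$ come from differentiating $\Gamma(s)$, $k^{-s}$ and $\zeta(s+1)$ respectively.

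The two computations are ultimately the same, since $\int_0^\infty \e^{-nu}\log u\,\rd u=\frac{\rd}{\rd s}\bigl[\Gamma(s)n^{-s}\bigr]_{s=1}$.
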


\begin{proof}
Expand $L_k(x)=-\sum_{n\ge1}\e^{-knx}/n$.  Absolute convergence justifies
termwise integration in the first two moments and gives
\eqref{eq:B-moments12}.  For the logarithmic moment, set $u=kx$ and use
\begin{equation*}
 \int_0^\infty \e^{-nu}\log u\,\rd u=-\frac{\gamma+\log n}{n}\,.
\end{equation*}
The resulting series is
$k^{-1}[(\gamma+\log k)\zeta(2)+\sum_{n\ge1}(\log n)/n^2]$. Since
$\sum_{n\ge1}(\log n)/n^2=-\zeta'(2)$, this proves
\eqref{eq:B-momentlog}.

For~\eqref{eq:B-zeta-derivative}, logarithmically differentiate the functional
equation
\begin{equation*}
 \zeta(s)=2^s\pi^{s-1}\sin\frac{\pi s}{2}\,\Gamma(1-s)\zeta(1-s)
\end{equation*}
and evaluate at $s=-1$.  Using $\cot(-\pi/2)=0$, $\psi(2)=1-\gamma$, and
$\zeta(-1)=-1/12$ gives
\begin{equation*}
 -12\zeta'(-1)=\log(2\pi)-1+\gamma-\frac{\zeta'(2)}{\zeta(2)}\,,
\end{equation*}
which is equivalent to~\eqref{eq:B-zeta-derivative}.
\end{proof}

\begin{proposition}[Constant-map differential identity]
\label{prop:CA}
For every real $k>0$,
\be
 \frac{\rd}{\rd k}\left[\frac{\Cint(k)}{k}\right]
 =\frac{2\pi}{k^2}\left[A\!\left(\frac{k}{2}\right)-A(k)\right]
  -\frac{\zeta(3)}{4\pi}\,.
\label{eq:CA}
\ee
\end{proposition}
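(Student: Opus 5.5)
The plan is a direct computation. Differentiate the left side using the integral representation \eqref{eq:B-Cint}, split the right side using \eqref{eq:B-A}, and match the two sides term by term. No Euler-sum or root-of-unity input is needed, so the identity holds for all real $k>0$.

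\emph{Left side.} From \eqref{eq:B-Cint},
\begin{equation*}
 \frac{\Cint(k)}{k}=-\frac{k\zeta(3)}{16\pi}+\frac{2}{\pi}\mathfrak I_k\,.
\end{equation*}
By the uniform absolute-convergence estimates recorded just before Lemma~\ref{lem:B-ibp}, we may differentiate under the integral sign. Using $\partial_k L_k(x)=x/(\e^{kx}-1)$, this gives
\begin{equation*}
 \frac{\rd}{\rd k}\left[\frac{\Cint(k)}{k}\right]
 =-\frac{\zeta(3)}{16\pi}
 +\frac{2}{\pi}\int_0^\infty\frac{x\log(2\cosh x)}{\e^{kx}-1}\,\rd x\,.
\end{equation*}
We then substitute $\log(2\cosh x)=x+\log(1+\e^{-2x})$. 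The piece with $x$ is an elementary moment:
\begin{equation*}
 \int_0^\infty\frac{x^2}{\e^{kx}-1}\,\rd x=\frac{2\zeta(3)}{k^3}\,,
\end{equation*}
obtained by expanding the geometric series termwise. So the left side equals
\begin{equation*}
 -\frac{\zeta(3)}{16\pi}+\frac{4\zeta(3)}{\pi k^3}
 +\frac{2}{\pi}\int_0^\infty\frac{x\log(1+\e^{-2x})}{\e^{kx}-1}\,\rd x\,.
\end{equation*}

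\emph{Right side.} Insert \eqref{eq:B-A} at levels $k$ and $k/2$. The elementary parts give
\begin{equation*}
 \frac{2\pi}{k^2}\cdot\frac{\zeta(3)}{\pi^2}\left(\frac{2}{k}+\frac{3k^2}{32}\right)
 =\frac{4\zeta(3)}{\pi k^3}+\frac{3\zeta(3)}{16\pi}\,.
\end{equation*}
After subtracting $\zeta(3)/(4\pi)$, this leaves
\begin{equation*}
 \frac{4\zeta(3)}{\pi k^3}-\frac{\zeta(3)}{16\pi}\,,
\end{equation*}
which matches the non-integral terms of the left side exactly.

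The integral parts combine to
\begin{equation*}
 \frac{1}{2\pi}\mathscr J_{k/2}-\frac{2}{\pi}\mathscr J_k\,.
\end{equation*}
In $\mathscr J_{k/2}$ we substitute $x=2y$, as in the proof of Lemma~\ref{lem:B-ibp}, which gives
\begin{equation*}
 \mathscr J_{k/2}=4\int_0^\infty\frac{y\log(1-\e^{-4y})}{\e^{ky}-1}\,\rd y\,.
\end{equation*}
Hence the integral parts equal
\begin{equation*}
 \frac{2}{\pi}\int_0^\infty\frac{x\bigl[\log(1-\e^{-4x})-\log(1-\e^{-2x})\bigr]}{\e^{kx}-1}\,\rd x\,.
\end{equation*}
The factorization $1-\e^{-4x}=(1-\e^{-2x})(1+\e^{-2x})$ turns the bracket into $\log(1+\e^{-2x})$. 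This reproduces the remaining integral on the left side, and the identity follows.

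\emph{Main obstacle.} The only delicate point is bookkeeping rather than analysis. One must track the factors $2^3$ and $k^2/4$ correctly in $A(k/2)$. One must also justify separating $\log(2\cosh x)$ into two absolutely convergent pieces. That is immediate, since each piece is $O(x)$ or bounded and is multiplied by $x/(\e^{kx}-1)$, which decays exponentially at infinity and tends to $1/k$ at the origin.
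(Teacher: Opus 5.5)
Your proposal is correct and follows essentially the same route as the paper: you differentiate $\Cint(k)/k$ under the integral sign, split $\log(2\cosh x)=x+\log(1+\e^{-2x})$, evaluate the $x^2$ moment as $2\zeta(3)/k^3$, and identify the remaining integral with $\frac14\mathscr J_{k/2}-\mathscr J_k$ via the substitution $x=2y$ and the factorization $1-\e^{-4x}=(1-\e^{-2x})(1+\e^{-2x})$. The only difference is presentational: you expand both sides and match them term by term, whereas the paper rewrites the left side in terms of $\mathscr J_k$ and $\mathscr J_{k/2}$ and then substitutes \eqref{eq:B-A}; your constant bookkeeping is correct.
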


\begin{proof}
Differentiating~\eqref{eq:B-Cint} under the integral sign gives
\be
 \frac{\rd}{\rd k}\left[\frac{\Cint(k)}{k}\right]
 =-\frac{\zeta(3)}{16\pi}
  +\frac{2}{\pi}\int_0^\infty
   \frac{x\log(2\cosh x)}{\e^{kx}-1}\,\rd x\,.
\label{eq:B-C-derivative}
\ee
Write $\log(2\cosh x)=x+\log(1+\e^{-2x})$.  The first term in the
integral is $2\zeta(3)/k^3$.  For the second, use
$\log(1+\e^{-2x})=\log(1-\e^{-4x})-\log(1-\e^{-2x})$ and set $x=2y$
in the first contribution.  Then
\be
 \int_0^\infty\frac{x\log(1+\e^{-2x})}{\e^{kx}-1}\,\rd x
 =\frac14\mathscr J_{k/2}-\mathscr J_k\,.
\label{eq:B-Jdiff}
\ee
Solving~\eqref{eq:B-A} for $\mathscr J_k$ and $\mathscr J_{k/2}$ and substituting into
\eqref{eq:B-C-derivative} gives~\eqref{eq:CA} after cancellation of the
explicit $k^{-3}$ terms.
\end{proof}

\begin{theorem}[Equivalence of the two index constant-map representations]
\label{thm:f0}
For every real $k>0$,
\be
\begin{aligned}
 &2A(k)-6A\!\left(\frac{k}{2}\right)-\frac{3\Cint(k)}{\pi}
 -\frac{k^2\zeta(3)}{2\pi^2}+\frac12\log k-\frac52\log2
\\[2pt]
 &\quad=-\frac{3k^2\zeta(3)}{8\pi^2}+\frac76\log k+f_0
 +\frac{k}{\pi^2}\int_0^\infty\cK(x)L_k(x)\,\rd x\,,
\label{eq:B-f0-identity}
\end{aligned}
\ee
where
\be
 f_0=-8\zeta'(-1)-\frac{23}{6}\log2-\frac23\log\pi\,.
\label{eq:B-f0}
\ee
\end{theorem}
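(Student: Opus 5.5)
Substitute the integral representations \eqref{eq:B-A} and \eqref{eq:B-Cint} into the left side of \eqref{eq:B-f0-identity}, so that everything is expressed through $\mathfrak I_k$, $\mathscr J_k$, $\mathscr J_{k/2}$ and explicit elementary terms. For the explicit part, \eqref{eq:B-A} gives
\begin{equation*}
2A(k)-6A(k/2)=\frac{2\zeta(3)}{\pi^2}\left[\frac{2}{k}-\frac{k^2}{8}-\frac{12}{k}+\frac{3k^2}{64}\right]+\frac{k^2}{\pi^2}\left[2\mathscr J_k-\frac32\mathscr J_{k/2}\right],
\end{equation*}
and $-3\Cint(k)/\pi=3k^2\zeta(3)/(16\pi^2)-(6k/\pi^2)\mathfrak I_k$. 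Next apply Lemma~\ref{lem:B-ibp}, which converts $\mathscr J_k$ and $\mathscr J_{k/2}$ into integrals against the common weight $L_k(x)$ with prefactors $1/k$ and $4/k$. After this step the entire integral content is
\begin{equation*}
\frac{k}{\pi^2}\int_0^\infty\Bigl\{-6\log(2\cosh x)-2\bigl[\log(1-\e^{-2x})+\tfrac{2x}{\e^{2x}-1}\bigr]+6\bigl[\log(1-\e^{-4x})+\tfrac{4x}{\e^{4x}-1}\bigr]\Bigr\}L_k(x)\,\rd x .
\end{equation*}
The coefficients $2$ and $-6$ come from $k^2\cdot 2\cdot(-1/k)$ and $k^2\cdot(-\tfrac32)\cdot(-4/k)$, divided out by $k/\pi^2$. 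This is exactly the combination appearing in the pointwise identity \eqref{eq:B-kernel-decomp}.

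Using \eqref{eq:B-kernel-decomp}, the integral splits as $\frac{k}{\pi^2}\int\cK L_k$ plus $\frac{k}{\pi^2}\int(-20x+4\log x+4\log2+4)L_k$. The second piece is evaluated by the three moments of Lemma~\ref{lem:B-moments}, applied separately to the terms $x$, $\log x$ and $1$. The $x$-moment gives $20\zeta(3)/(\pi^2 k)$, a $1/k$ term. The $\log x$-moment gives $\frac{4}{\pi^2}[\gamma\zeta(2)-\zeta'(2)+\zeta(2)\log k]$, which contributes $\frac{4\zeta(2)}{\pi^2}\log k=\frac23\log k$. The constant moment gives $-\frac{4\zeta(2)}{\pi^2}(\log2+1)=-\frac23(\log 2+1)$.

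Collecting terms, I then check three things. First, the $1/k$ terms cancel: the explicit $\frac{2\zeta(3)}{\pi^2}\cdot(-\frac{10}{k})=-\frac{20\zeta(3)}{\pi^2k}$ against $+\frac{20\zeta(3)}{\pi^2k}$ from the $x$-moment. Second, the $k^2\zeta(3)/\pi^2$ coefficients combine to $-\frac38$:
\begin{equation*}
2\left(-\tfrac18+\tfrac{3}{64}\right)+\tfrac{3}{16}-\tfrac12=-\tfrac{5}{32}+\tfrac{6}{32}-\tfrac{16}{32}=-\tfrac{11}{32}.
\end{equation*}
This is not $-\tfrac38$, so I must recheck the bookkeeping, since the expected answer is $-\tfrac{12}{32}$. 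The main obstacle is precisely this careful tracking of the $k^2$ terms, including any $k^2$ contributions hidden in the boundary terms or moments. I would redo this carefully, in particular re-deriving $A(k/2)$'s explicit prefactor: $\frac{2\zeta(3)}{\pi^2(k/2)}(1-\frac{k^3}{128})=\frac{4\zeta(3)}{\pi^2k}-\frac{k^2\zeta(3)}{32\pi^2}$, so $-6A(k/2)$ contributes $+\frac{6}{32}$. Recomputing gives $2(-\frac18)+\frac{6}{32}+\frac{6}{32}-\frac{16}{32}=-\frac{12}{32}=-\frac38$. Third, the $\log k$ terms give $\frac12+\frac23=\frac76$.

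The remaining constant is
\begin{equation*}
-\tfrac52\log2+\tfrac{4}{\pi^2}\bigl[\gamma\zeta(2)-\zeta'(2)\bigr]-\tfrac23(\log2+1).
\end{equation*}
Substituting \eqref{eq:B-zeta-derivative}, with $\zeta'(2)=\zeta(2)[\log(2\pi)-1+\gamma+12\zeta'(-1)]$, the $\gamma$ terms and the $\pm\frac23$ cancel. What is left is $-8\zeta'(-1)-\frac23\log(2\pi)-\frac{19}{6}\log2=f_0$, which completes the identity.
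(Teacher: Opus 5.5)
Your proposal is correct and follows the paper's proof essentially step for step. You substitute the integral representations of $A$ and $\Cint$, move $\mathscr J_k$ and $\mathscr J_{k/2}$ onto the common weight $L_k$ by Lemma~\ref{lem:B-ibp}, apply the pointwise decomposition \eqref{eq:B-kernel-decomp}, evaluate the three moments, and finish with the functional-equation identity \eqref{eq:B-zeta-derivative}. In a clean write-up, correct the first display: the $-6A(k/2)$ contribution inside the bracket is $+\frac{3k^2}{32}$, not $+\frac{3k^2}{64}$, which is what produced your temporary mismatch (and $-5+6-16$ is $-15$, not $-11$). Also correct the sentence quoting the kernel coefficients, which are $-2$ and $+6$, as your display correctly shows.
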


\begin{proof}
Substitute~\eqref{eq:B-Cint} and~\eqref{eq:B-A} into the left-hand side of
\eqref{eq:B-f0-identity}.  The explicit $k^2\zeta(3)$ terms combine to
$-3k^2\zeta(3)/(8\pi^2)$, while the remaining inverse-level term is
$-20\zeta(3)/(\pi^2k)$.  The integral contribution is
\be
 -\frac{6k}{\pi^2}\mathfrak I_k+\frac{2k^2}{\pi^2}\mathscr J_k
 -\frac{3k^2}{2\pi^2}\mathscr J_{k/2}\,.
\label{eq:B-integral-combination}
\ee
Apply~\eqref{eq:B-Jk-ibp} and~\eqref{eq:B-Jhalf-ibp}.  Lemma~\ref{lem:B-ibp}
then rewrites~\eqref{eq:B-integral-combination} as
\be
 \frac{k}{\pi^2}\int_0^\infty
 \bigl[\cK(x)-20x+4\log x+4\log2+4\bigr]L_k(x)\,\rd x\,.
\label{eq:B-after-decomp}
\ee
The $\cK$ term is the integral on the right-hand side of
\eqref{eq:B-f0-identity}.  By~\eqref{eq:B-moments12}, the $-20x$ term gives
$20\zeta(3)/(\pi^2k)$ and cancels the inverse-level contribution.  The
remaining moments, together with the explicit logarithms in
\eqref{eq:B-f0-identity}, reduce to
\be
\frac76\log k
-\frac{19}{6}\log2-\frac23+\frac{2\gamma}{3}
-\frac{4\zeta'(2)}{\pi^2}\,.
\label{eq:B-remaining}
\ee
Finally, $\zeta(2)=\pi^2/6$ and~\eqref{eq:B-zeta-derivative} give
\be
-\frac{4\zeta'(2)}{\pi^2}
=
-\frac23\bigl[\log(2\pi)-1+\gamma+12\zeta'(-1)\bigr]\,.
\ee
Substituting this into~\eqref{eq:B-remaining}, the Euler--Mascheroni and
rational terms cancel, leaving
\be
\frac76\log k
-8\zeta'(-1)-\frac{23}{6}\log2-\frac23\log\pi
=
\frac76\log k+f_0\,.
\ee
This proves \eqref{eq:B-f0-identity}.
\end{proof}

Theorem~\ref{thm:f0} establishes analytically that the arithmetic
expression~\eqref{eq:f0arith} and the one-kernel representation
\eqref{eq:f0kernel} define the same function. Its identification with the rank-independent
term of the topologically twisted index is a separate finite-data result.

\section{Reconstructed instanton sectors}
\label{app:tables}

Table~\ref{tab:Wsectors} lists the leading exactly reconstructed sectors of the
twisted superpotential, and Table~\ref{tab:Zsectors} lists the complete set for
the index. Every entry is a rational recovered by the peel of
Section~\ref{sec:peel} under the stated height caps and verified on ranks
withheld from its own solve. In total $51$, $67$ and $93$ twisted superpotential
sectors and $17$, $15$ and $15$ index sectors were reconstructed at $k=1,2,4$.
The word ``exactly'' refers to those finite ranges throughout.

\begin{table}[H]
\centering\small
\renewcommand{\arraystretch}{1.6}
\begin{tabular}{@{}crrr@{}}
\toprule
$m$ & $c^{(k=1,2)}_m$ & $c^{(k=4)}_m$ & $\sigma_3(m)$\\
\midrule
$1$ & $5$ & $1$ & $1$\\
$2$ & $-\frac{25}{4}$ & $-\frac{9}{4}$ & $9$\\
$3$ & $\frac{140}{9}$ & $\frac{28}{9}$ & $28$\\
$4$ & $-\frac{185}{16}$ & $-\frac{57}{16}$ & $73$\\
$5$ & $\frac{126}{5}$ & $\frac{126}{25}$ & $126$\\
$6$ & $-\frac{175}{9}$ & $-7$ & $252$\\
$7$ & $\frac{1720}{49}$ & $\frac{344}{49}$ & $344$\\
$8$ & $-\frac{1465}{64}$ & $-\frac{441}{64}$ & $585$\\
$9$ & $\frac{3785}{81}$ & $\frac{757}{81}$ & $757$\\
$10$ & $-\frac{63}{2}$ & $-\frac{567}{50}$ & $1134$\\
$11$ & $\frac{6660}{121}$ & $\frac{1332}{121}$ & $1332$\\
$12$ & $-\frac{1295}{36}$ & $-\frac{133}{12}$ & $2044$\\
\bottomrule
\end{tabular}
\caption{The first twelve rationally reconstructed twisted superpotential
instanton coefficients at $k=1,2$ and $k=4$, with $\sigma_3(m)$ for reference.
The reconstruction uses only the features $q^m$ and $tq^m$. The divisor
structure of Eq.~\eqref{eq:cm} is recognized afterwards. In total $51$, $67$
and $93$ sectors were reconstructed at $k=1,2,4$, respectively, and every reconstructed
coefficient agrees with Eq.~\eqref{eq:cm}. The table does not assume the
all-order continuation.}
\label{tab:Wsectors}
\end{table}

\begin{table}[p]
\centering\scriptsize
\renewcommand{\arraystretch}{1.6}
\setlength{\tabcolsep}{4pt}
\begin{tabular}{@{}crrr@{}}
\toprule
$m$ & $A_m^{(k)}$ & $B_m^{(k)}$ & $R_m^{(k)}$\\
\multicolumn{4}{@{}l}{\textit{$k=1,2$}}\\[1pt]
\midrule
$1$ & $-\frac{13}{2}$ & $-10$ & $76$\\
$2$ & $\frac{1441}{4}$ & $316$ & $-3590$\\
$3$ & $-\frac{66170}{3}$ & $-\frac{121672}{9}$ & $\frac{614704}{3}$\\
$4$ & $\frac{11674881}{8}$ & $686328$ & $-13062147$\\
$5$ & $-\frac{507532519}{5}$ & $-\frac{193464652}{5}$ & $\frac{4442194376}{5}$\\
$6$ & $\frac{21900100129}{3}$ & $\frac{21047832496}{9}$ & $-\frac{188837083160}{3}$\\
$7$ & $-\frac{3763901747700}{7}$ & $-\frac{7277987585424}{49}$ & $\frac{32109741261408}{7}$\\
$8$ & $\frac{645167023302657}{16}$ & $9789148166640$ & $-\frac{682490104578051}{2}$\\
$9$ & $-\frac{55195408339837609}{18}$ & $-\frac{53780562166076290}{81}$ & $\frac{232100402749440988}{9}$\\
$10$ & $\frac{2358099684533306083}{10}$ & $\frac{1151915216436419816}{25}$ & $-1973310550283391188$\\
$11$ & $-\frac{201306256815651241838}{11}$ & $-\frac{394205943489634941400}{121}$ & $\frac{1677702615658550272912}{11}$\\
$12$ & $\frac{2862236075305752587691}{2}$ & $\frac{701886015066561866528}{3}$ & $-11886480421539051323396$\\
$13$ & $-112638085513255794287815$ & $-\frac{2877007920943172422715852}{169}$ & $\frac{12127019312590428436071464}{13}$\\
$14$ & $\frac{62401025026469493483937858}{7}$ & $\frac{61381678604629673275607392}{49}$ & $-\frac{515517742961193071204790320}{7}$\\
$15$ & $-\frac{2126678070346231912113895724}{3}$ & $-\frac{20942880708014400477179498672}{225}$ & $\frac{87658322777432625636231497504}{15}$\\
$16$ & $\frac{1811488287532412773581618757633}{32}$ & $6975183380895560666021655520$ & $-\frac{1863170971578711259741768646659}{4}$\\
$17$ & $-\frac{77133592201343594425478157513845}{17}$ & $-\frac{152197662527856329154788354243764}{289}$ & $\frac{633624912609395632973168946382168}{17}$\\
\midrule
\multicolumn{4}{@{}l}{\textit{$k=4$}}\\[1pt]
\midrule
$1$ & $-\frac{1}{2}$ & $-1$ & $20$\\
$2$ & $\frac{49}{4}$ & $10$ & $-262$\\
$3$ & $-\frac{626}{3}$ & $-\frac{1108}{9}$ & $\frac{12368}{3}$\\
$4$ & $\frac{30593}{8}$ & $1748$ & $-71683$\\
$5$ & $-\frac{365283}{5}$ & $-\frac{681126}{25}$ & $1327128$\\
$6$ & $\frac{4316401}{3}$ & $\frac{4075336}{9}$ & $-\frac{76776472}{3}$\\
$7$ & $-\frac{202803556}{7}$ & $-\frac{386328104}{49}$ & $\frac{3553321120}{7}$\\
$8$ & $\frac{9492913153}{16}$ & $142191016$ & $-\frac{20556578819}{2}$\\
$9$ & $-\frac{221617556749}{18}$ & $-\frac{213502966669}{81}$ & $\frac{1902773895428}{9}$\\
$10$ & $\frac{2582349976467}{10}$ & $\frac{249754040652}{5}$ & $-\frac{22015752930852}{5}$\\
$11$ & $-\frac{60103529648982}{11}$ & $-\frac{116628117491964}{121}$ & $\frac{1018919543279856}{11}$\\
$12$ & $\frac{698775189536129}{6}$ & $\frac{169937164948880}{9}$ & $-\frac{5894625992138764}{3}$\\
$13$ & $-\frac{32472349215018343}{13}$ & $-\frac{63315615577732142}{169}$ & $\frac{545622913077166360}{13}$\\
$14$ & $53861615323426958$ & $\frac{368264271444100496}{49}$ & $-\frac{6313046062460241968}{7}$\\
$15$ & $-\frac{5834152171606660452}{5}$ & $-\frac{11415445743654570536}{75}$ & $\frac{97392172338646573216}{5}$\\
\bottomrule
\end{tabular}
\caption{Complete set of rationally reconstructed twisted-index instanton
sectors, in the normalization of Eq.~\eqref{eq:Znp}: seventeen sectors at
$k=1$ and fifteen at $k=4$. The $k=1$ and $k=2$ sequences are identical on all
fifteen orders reconstructed at both levels. The two deepest rows of the upper
panel are resolved at $k=1$ only. Every sector listed satisfies
relation~\eqref{eq:Brel} exactly over $\mathbb Q$, and every $R_m^{(k)}$ is reproduced,
through the listed orders, by the candidate product~\eqref{eq:Rmodular}.
Coefficient heights grow rapidly. The numerator of $A_{17}^{(1)}$ has $32$
digits, which limits the reconstruction depth.}
\label{tab:Zsectors}
\end{table}

\bibliographystyle{ytamsalpha}
\def\arxivfont{\rm}
\baselineskip=.95\baselineskip
\bibliography{tti_cnst_exp}

\end{document}